\RequirePackage{silence}
\documentclass[a4paper,onecolumn,accepted=2023-04-22]{quantumarticle}
\pdfoutput=1
\usepackage[numbers,sort&compress]{natbib}
\usepackage{blindtext}
\usepackage{graphicx}
\usepackage{dcolumn}
\usepackage{bm}
\usepackage{amssymb}
\usepackage{color}
\usepackage[T1]{fontenc} 
\usepackage{mathrsfs,amsfonts,dsfont}
\usepackage{nccmath}
\usepackage{amstext}
\usepackage{mathtools}
\usepackage{tablefootnote}
\usepackage[inline]{enumitem}
\graphicspath{{graphics/}}

\usepackage[english]{babel}

\usepackage[table]{xcolor}
\usepackage{footnote}

\usepackage{wrapfig}
\usepackage{lipsum} 
\usepackage{blindtext}

\usepackage{braket}
\usepackage{bbm} 
\usepackage[normalem]{ulem}

\usepackage[printonlyused,withpage]{acronym}
\makeatletter
\AtBeginDocument{%
  \renewcommand*{\AC@hyperlink}[2]{%
    \begingroup
      \hypersetup{hidelinks}%
      \hyperlink{#1}{#2}%
    \endgroup
  }%
}
\makeatother

\usepackage[colorlinks=true,citecolor=blue,linkcolor=blue]
{hyperref}
\RequirePackage{doi}

\usepackage{amsthm}
\newtheorem{theorem}{Theorem}
\newtheorem{Definition}{Definition}
\newtheorem{Lemma}{Lemma}
\newtheorem{Corollary}{Corollary}

\newtheorem*{theorem1}{Theorem 1}

\definecolor{martin}{rgb}{0,.4,1}


\DeclareMathOperator{\Ddiamond}{D_{\diamond}}
\DeclareMathOperator{\Dinfty}{D_{\infty}}
\DeclareMathOperator{\TD}{D_1}
\DeclareMathOperator{\Dell}{D_{\ell_1}}
\DeclareMathOperator{\DP}{D_{\mathrm{p}}}
\DeclareMathOperator{\ETD}{E_1}
\DeclareMathOperator{\Rdiamond}{R_{\diamond}}
\DeclareMathOperator{\RP}{R_{\mathrm{p}}}
\DeclareMathOperator{\Rell}{R_{\ell_1}}
\DeclareMathOperator{\Idiamond}{I_{\diamond}}
\DeclareMathOperator{\Iinfty}{I_{\infty}}
\DeclareMathOperator{\Cdiamond}{C_{\diamond}}
\DeclareMathOperator{\Cinfty}{C_{\infty}}
\DeclareMathOperator{\IFdiamond}{\mathrm{IF}_{\diamond}}
\DeclareMathOperator{\IFinfty}{\mathrm{IF}_{\infty}}


 \hypersetup{pdftitle = {Distance-based resource quantification for sets of quantum measurements},
       pdfauthor = {Lucas Tendick, Martin Kliesch, Hermann Kampermann, Dagmar Bruß},
       pdfsubject = {Quantum information theory}, 
       pdfkeywords = {quantum, weighted, assemblage, resource, theory, framework 
                      resource, theory, quantum channel, SDP, semidefinite program, incompatibility, mutually unbiased bases, MUB, steering, EPR, 
                      nonlocality, Bell, 
                      entanglement, 
                      quantification, distance, metric, 
                      resource, monotone, 
                      assemblage, prepare-and-measure, 
                      distinguishability, 
                      informativeness, coherence, jointly measurable, hierarchy, nonlocal game, 
                      inequality, 
                      quantum advantage, convex, optimization, dual
              }
      }

\begin{document}
\title{Distance-based resource quantification for sets of quantum measurements}
\author[1]{Lucas Tendick}
\email{lucas.tendick@hhu.de}
\author[1,2]{Martin Kliesch}
\author[1]{Hermann Kampermann}
\author[1]{Dagmar Bru\ss}
\affiliation[1]{Institute for Theoretical Physics, Heinrich Heine University D\"usseldorf, 
D-40225 D\"usseldorf, Germany}
\affiliation[2]{Institute for Quantum-Inspired and Quantum Optimization, Hamburg University of Technology, D-21079 Hamburg, Germany}

\begin{abstract}
The advantage that quantum systems provide for certain quantum information processing tasks over their classical counterparts can be quantified within the general framework of resource theories. Certain distance functions between quantum states have successfully been used to quantify resources like entanglement and coherence. Surprisingly, such a distance-based approach has not been adopted to study resources of quantum measurements, where other geometric quantifiers are used instead. Here, we define distance functions between sets of quantum measurements and show that they naturally induce resource monotones for convex resource theories of measurements. By focusing on a distance based on the diamond norm, we establish a hierarchy of measurement resources and derive analytical bounds on the incompatibility of any set of measurements. We show that these bounds are tight for certain projective measurements based on mutually unbiased bases and identify scenarios where different measurement resources attain the same value when quantified by our resource monotone. Our results provide a general framework to compare distance-based resources for sets of measurements and allow us to obtain limitations on Bell-type experiments.
\end{abstract}

\maketitle

\section{Introduction}
It is arguably one of the most remarkable features of quantum theory that certain quantum systems exhibit behaviors without any classical analog. While these quantum phenomena were first just regarded as a strange feature of nature which led to many philosophical questions~\cite{EPR_paper,Bell_seminal,PhysRev.34.163}, it has later been realized that these phenomena can actually be used as a resource in real-world applications such as computation~\cite{2106.10522}, sensing~\cite{RevModPhys.89.035002}, or cryptography~\cite{Pirandola:20}. 
To understand the potential of these upcoming applications, it is essential to characterize the advantages quantum technologies can provide over classical information processing technologies and which physical phenomena enable them. To achieve this advantage, properties of both quantum states and measurements are relevant. Together, quantum states and measurements give rise to purely quantum phenomena that cannot be explained by classical physics, such as entanglement~\cite{RevModPhys.81.865} and its detection \cite{Ghne2009}, EPR-steering~\cite{Steering_resource,Cavalcanti2016,RevModPhys.92.015001}, and Bell nonlocality~\cite{Nonlocality_review,Nonlocality_resource}. \\
\indent The latter two are similar in the sense that both can be seen as resources that require one out of several judiciously chosen quantum measurements to be performed on a resourceful quantum state in each round of an experiment. In particular, it is well-known that entangled states and incompatible measurements are necessary to witness steering or nonlocality~\cite{Cavalcanti2016}. 
Moreover, if appropriate quantifiers are chosen, the amount of incompatibility and entanglement provide upper bounds for the possible amount of steering and nonlocality~\cite{PhysRevA.93.052112,PhysRevLett.116.240401,PhysRevResearch.4.L012002}. 
Therefore, entanglement and incompatibility can be thought of as a resource for quantum advantages. In general, states and measurements may possess various other resources responsible for quantum advantages as well~\cite{Purity_resource,RevModPhys_coherence,Bera2017,PhysRevLett.126.090401,2112.06784,PhysRevLett.119.190501,Guerini2017,PhysRevLett.122.140403,Baek2020}. \\
\indent \Acp{QRT}~\cite{RevModPhys.91.025001} allow us to identify, study, and quantify quantum resources for certain quantum information processing tasks in a general framework. Moreover, this allows us to identify similarities among different resources, adapt concepts and quantification methods~\cite{PhysRevLett.122.130404,PhysRevLett.126.220404,PhysRevX.9.031053,PhysRevLett.125.110401} from one to another, and establish relations between different resources~\cite{Purity_resource,PhysRevLett.115.230402,PhysRevA.93.052112,PhysRevResearch.4.L012002}. Any \ac{QRT} aims to answer at least the following three questions: \begin{enumerate*}[label=(\roman*)] \item Which objects (e.g. states or measurements) are resources for a certain task, and which ones are free, i.e., do not provide any advantage? \item Which transformations are free, i.e., cannot create resources from free objects? \item How can we quantify the amount of the resource?
\end{enumerate*}
A standard approach to quantify quantum resources, illustrated in Figure~\ref{Distance_resource}, asks how far away a given resource is from the set of free objects, as measured by some distance-based function. \\
\begin{figure}
\begin{center}
\includegraphics[scale=1.95]{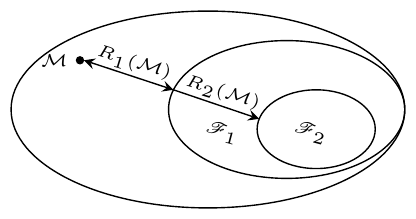}
 \caption{Distance-based resource quantification. A set of measurements $\mathcal{M}$ contains different quantum resources in general. These different quantum resources are associated with their respective sets of free measurements, here denoted by $\mathscr{F}_1$ for \ac{QRT} $Q_1$ and $\mathscr{F}_2$ for \ac{QRT} $Q_2$. The amount of resource in $\mathcal{M}$ associated to $Q_1$ and $Q_2$ is quantified by its distance $\mathrm{R}_1(\mathcal{M})$ to the set $\mathscr{F}_1$ and the distance $\mathrm{R}_2(\mathcal{M})$ to the set $\mathscr{F}_2$, respectively. As all free measurements $\mathcal{F} \in \mathscr{F}_2$ are also contained in $\mathscr{F}_1$ it follows that  $\mathrm{R}_1(\mathcal{M}) \leq \mathrm{R}_2(\mathcal{M})$.}
  \label{Distance_resource}
  \end{center}
\end{figure}
\indent Together with  robustness-based~\cite{PhysRevA.59.141,PhysRevA.67.054305,PhysRevLett.114.060404,Incomop_resource,PhysRevLett.122.140403,Designolle2019} and weight-based quantifiers~\cite{Elitzur1992,PhysRevLett.80.2261,PhysRevLett.112.180404,PhysRevLett.125.110401,Quantifying_coherence,PhysRevLett.125.110402}, distance-based~\cite{PhysRevLett.78.2275,PhysRevA.68.042307,PhysRevResearch.2.012035,PhysRevLett.105.190502} resource quantifiers form the class of so-called geometric quantifiers. One main advantage of these quantifiers is that they generally can be defined for any convex \ac{QRT} \cite{Regula2017,PhysRevLett.122.130404,PhysRevX.9.031053,PhysRevLett.125.110401,Oszmaniec2019,PhysRevLett.122.140402,PhysRevLett.125.110402} (i.e., a \ac{QRT} where the set of free objects is convex), which for instance, allows for a practical way to compare certain resources with each other. The combined insights from all three classes of geometric quantifiers usually give a detailed picture of any convex \ac{QRT}. \\ 
\indent While the distance-based approach has been employed successfully to resources of quantum states like entanglement~\cite{RevModPhys.81.865} and coherence~\cite{RevModPhys_coherence} and correlations like steering \cite{PhysRevA.97.022338} and nonlocality \cite{PhysRevA.97.022111}, it is widely unexplored for quantum measurements, not to mention for sets of measurements. The reason for this is most likely that only recently some strategies for discrimination of single measurements on the basis of distances between them have been proposed \cite{PhysRevA.98.042103,PhysRevA.90.052312}. Robustness-based and weight-based quantifiers have been used as mathematical tools to circumvent this problem.
Later, it was realized that these quantifiers are also linked to operational advantages \cite{Oszmaniec2019,PhysRevLett.122.130404,PhysRevLett.122.130403,PhysRevA.98.012126,PhysRevLett.122.140402,PhysRevLett.122.140403,PhysRevLett.122.140404,PhysRevLett.116.150502,PhysRevLett.114.060404,PhysRevLett.125.110401,2002.03504,PhysRevLett.125.110402}. However, the question about the existence of an operationally meaningful distance between sets of quantum measurements remained open. The interest in detailed studies on distance-based resource quantification goes beyond the question of their existence. Having access to an additional formalism for resource quantification offers new tools to study measurement resources and allows us to better understand their operational significance. \\
\indent In this work, we answer the question of whether distance-based resource quantification for sets of measurements is possible in the affirmative. Hence, we 
extend the class of geometric quantifiers for convex \acp{QRT} for sets of measurements (so-called assemblages) by introducing distance-based resource quantifiers for any convex resource theory of measurement assemblages. \\
\indent First, we discuss the necessary properties any distance between sets of measurements has to fulfil. Then, we show that every such distance induces a resource monotone. We propose one particular quantifier, which is based on the diamond norm~\cite{Kitaev2002} between different measure-and-prepare channels and is specially tailored to Bell-type experiments, as it captures the idea that only one particular measurement out of a given collection is applied at a time in a round-by-round protocol. Based on this quantifier, we establish a hierarchy of measurement resources, including recently introduced steering~\cite{PhysRevA.97.022338} and nonlocality monotones~\cite{PhysRevA.97.022111}. See Table \ref{Resource_table} for an overview of the resources and the quantities we analyze in this work. We show that our quantifier can be computed efficiently by means of a \ac{SDP} which we use to obtain analytical upper and lower bounds on the incompatibility (i.e., the non-joint measurability)~\cite{2112.06784} for any set of measurements. Finally, we show that these bounds are tight for special instances of projective measurements based on \ac{MUB}~\cite{DURT2010}, which also play a special role in cases when different measurement resources attain the same value when quantified with our proposed quantifier. 

\begin{widetext}
\begin{center}

\begin{table}
\scalebox{0.65}{
    \renewcommand{\arraystretch}{1.5}
    \hspace{-1em}
    \begin{minipage}{\textwidth}
    \begin{tabular}{|l|l|l|l|l|l|}
    \hline
    Resource & Monotone & Free objects & Free operations  & Optimization & Type \\  \hline 
      General & $\bullet$ $\Rdiamond(\mathcal{M}_\mathbf{p})$ \eqref{diamond_monotone} & $\mathcal{F} \in \mathscr{F}$ & $\Lambda^{\dagger} \in \mathds{F}$, simulations $\xi \in \mathds{S}$ & \ac{SDP} \eqref{SDP_general_primal},\eqref{SDP_general_dual} & Set\textbackslash Average 
    \\ \hline
    Informativeness & $\bullet$ $\IFdiamond(\mathcal{M}_\mathbf{p})$ \eqref{Informativeness} & $F_{a \vert x} = q(a \vert x) \mathds{1}_d$ \eqref{informativeness_free} 
      &  Unital maps $\Lambda^{\dagger}$  \footnote{Even though it is not discussed in \cite{PhysRevLett.122.140403}, it follows directly from the definition of unitality, that no quantum channel ${\Lambda}^{\dagger}$ can create informativeness from uninformative measurements.}, simulations $\xi$ \cite{PhysRevLett.122.140403} &  \ac{SDP} \eqref{Info_SDP_primal}, \eqref{Info_SDP_dual}& Average  
    \\ \hline
    Coherence & $\bullet$ $\Cdiamond(\mathcal{M}_\mathbf{p})$ \eqref{Coherence} & $F_{a \vert x} = \sum_i \alpha_{i \vert (a,x)} \vert i \rangle \langle i \vert $ \eqref{coherence_free} & SI-operations $\Lambda^{\dagger}_{\mathrm{SIO}}$ \footnote{SIO stands for \emph{strictly incoherent operations}.} \cite{Baek2020}, simulations $\xi$ \footnote{Even though it is not discussed in \cite{Baek2020}, it follows directly from the definition of the classical simulations $\xi$ that they cannot create coherence, as linear combinations of diagonal matricies are diagonal.} & \ac{SDP} \eqref{Coherence_SDP_primal}, \eqref{Coherence_SDP_dual} & Average  \\ \hline Incompatibility & $\bullet$ $\Idiamond(\mathcal{M}_\mathbf{p})$ \eqref{Incompatibility}  & $F_{a \vert x} = \sum_{\lambda} v(a \vert x, \lambda) G_{\lambda}$ \eqref{incompatibility_free} & Unital maps $\Lambda^{\dagger}$ \cite{Incomop_resource}, simulations $\xi$ \cite{Guerini2017} & \ac{SDP} \eqref{Incomp_SDP_primal}, \eqref{Incomp_SDP_dual}& Set \\ \hline
    Steering & $\mathrm{S}(\Vec{\sigma}_\mathbf{p})$ \eqref{Steering_quantifier_distance}, \cite{PhysRevA.97.022338} & $\tau_{a \vert x} = \sum_{\lambda} v(a \vert x, \lambda) \sigma_{\lambda}$ \eqref{LHS_def} & (Restricted) 1W-LOCC \footnote{1W-LOCC stands for \emph{one-way local operations and classical communication}} \cite{PhysRevA.97.022338,PhysRevA.96.022332}& SDP \eqref{Steering_primal}, \eqref{steering_dual_sdp} & Set 
    \\ \hline
    Nonlocality &$\mathrm{N}(\mathbf{q}_\mathbf{p})$ \eqref{nonlocality_distance}, \cite{PhysRevA.97.022111}& 
    \begin{tabular}{@{}l@{}}
      $t(a,b \vert x,y)$  \hspace{2.1cm} \eqref{LHV_model} 
      \\
      $= \sum_{\lambda} \pi(\lambda) v_A(a \vert x, \lambda) v_B(b \vert y, \lambda)$ 
    \end{tabular}    
    & WCCPI \footnote{WCCPI stands for \emph{wirings and classical communication prior to the inputs}.}\cite{PhysRevLett.109.070401} &  Linear \eqref{optimal_Bell_dual}, \eqref{optimal_Bell_primal} & Set 
    \\ \hline
    \end{tabular}
    \end{minipage}
   }
    
    \caption{Overview over the resources analyzed in this work. The different resources are presented in terms of the monotones we consider, the respective free objects, and the set of free operations associated to the considered \ac{QRT}. The monotones we introduce in this work are marked with a bullet point $\bullet$. Furthermore, we present by which kind of optimization the respective monotone can be computed and whether the resources are genuine properties of a set of objects or an average over single object properties. The free operations for steering and nonlocality are listed for completeness here and we refer to the references in the table for more details.}
    
    \label{Resource_table}
    \end{table}
  
\end{center}
\end{widetext}

\section{Distance-based resource quantification}  

Consider the canonical example of the trace distance~\cite{nielsen_chuang_2010}. The trace distance between two quantum states $\rho, \tau \in \mathcal{S}(\mathcal{H})$, where $\mathcal{S}(\mathcal{H})$ is the set of density matrices acting on a Hilbert space $\mathcal{H} \cong \mathds{C}^d$ of finite dimension $d$, is given by
\begin{align}
\TD(\rho,\tau) = \dfrac{1}{2}\lVert \rho - \tau \rVert_1 \geq 0,    
\end{align}
where $\lVert X \rVert_1 = \mathrm{Tr}[\sqrt{X^{\dagger} X}]$ is the trace norm of $X$. The trace distance is a useful tool to distinguish $\rho$ and $\tau$, as it fulfils all necessary properties of a metric between quantum states. Consider $\rho, \tau, \chi \in \mathcal{S}(\mathcal{H})$
and any \ac{CPT} map $\Lambda$ also known as quantum channel. It holds that
\begin{align}
\label{distance_conditions_states}
&\TD(\rho,\tau) = 0 \iff \rho = \tau,  \\
&\TD(\rho,\tau) = \TD(\tau,\rho), \nonumber \\
&\TD(\rho,\tau) \leq \TD(\rho,\chi) + \TD(\chi,\tau), \nonumber \\
&\TD(\rho,\tau) \geq \TD(\Lambda(\rho), \Lambda(\tau)), \nonumber
\end{align}
i.e., $\TD(\rho, \tau)$ is a faithful and symmetric function that obeys the triangle inequality and monotonicity (i.e. it does not increase) under arbitrary \ac{CPT} maps $\Lambda$. In addition to these minimal requirements, it is well known that $\TD(\rho, \tau)$ has an operational interpretation in terms of the optimal probability to distinguish $\rho$ and $\tau$ in a single-shot experiment~\cite{nielsen_chuang_2010}. That is, the optimal guessing probability is given by $p^{(\rho,\tau)}_{1,\text{guess}} = \dfrac{1}{2}(1+\TD(\rho, \tau))$. These properties make the trace distance a viable tool to quantify (convex) resources. 

Let us consider the prime example of a resource, the entanglement of a bipartite state $\rho \in \mathcal{S}(\mathcal{H} \otimes \mathcal{H})$. One can quantify the entanglement of $\rho$ by its distance to the set $\mathrm{Sep}(\mathcal{H} \otimes \mathcal{H})$ of separable quantum states~\cite{PhysRevLett.78.2275} given as
\begin{align}
\ETD(\rho) = \min\limits_{\rho_{S} \in\mathrm{Sep}(\mathcal{H} \otimes \mathcal{H})} \TD(\rho,\rho_S).  \label{entanglement}
\end{align}
It is now readily verified that $\ETD(\rho)$ is a non-negative, convex function with $\ETD(\rho) = 0 \iff \rho \in \mathrm{Sep}(\mathcal{H} \otimes \mathcal{H})$ obeying the monotonicity $\ETD(\rho) \geq \ETD(\Lambda_{\mathrm{LOCC}}(\rho))$ under any \ac{LOCC} \cite{RevModPhys.81.865} map $\Lambda_{\mathrm{LOCC}}$. That is, $\ETD(\rho)$ is a faithful (i.e. $\ETD(\rho) = 0 \iff \rho \in \mathrm{Sep}(\mathcal{H} \otimes \mathcal{H})$) convex resource monotone. Note that the monotonicity $\ETD(\rho) \geq \ETD(\Lambda_{\mathrm{LOCC}}(\rho))$ captures the fact that \ac{LOCC} maps cannot create entanglement. The monotonicity of resources under these so-called free operations is a crucial property of any resource theory and also reflects that these types of operations cannot create resources from free objects \cite{RevModPhys.91.025001}. \\
\indent We can use the insights for distances and resources of quantum states to define distance-based resource monotones for sets of quantum measurements in the following. We describe a quantum measurement most generally by a \ac{POVM} i.e., a finite set $\lbrace M_a \rbrace_a$ of effect operators $0 \leq M_a \leq \mathds{1}_d$, acting on a finite $d$-dimensional Hilbert space $\mathcal{H}$ such that $\sum_a M_{a} = \mathds{1}_d$, where $\mathds{1}_d$ is the identity operator on $\mathcal{H}\cong\mathbb C^d$. Note that $X \leq Y \iff Y-X \geq 0 $ for any Hermitian operators $X,Y$ means that the operator $Y-X$ is positive semidefinite. A set of \acp{POVM} with outcomes $a$ for different settings $x$ is known as measurement assemblage $\mathcal{M} = \lbrace M_{a \vert x} \rbrace_{a,x}$. Note that we will omit in the following the set-indices and simply write $\mathcal{M} = \lbrace M_{a \vert x} \rbrace$ when there is no risk of confusion. If we talk about a specific element of the assemblage $\mathcal{M}$, for instance the \ac{POVM} corresponding to setting $x$, we will write $\mathcal{M}_x = \lbrace M_{a \vert x} \rbrace_a$. Here, we consider assemblages with $m$ measurement settings and $o$ outcomes in each setting, i.e. $x = 1, \cdots, m$ and $a = 0, \cdots, o-1$. The outcome statistics of a measurement on any state $\rho$ is given by  $p(a,x) = p(x)p(a \vert x) = p(x)\mathrm{Tr}[M_{a \vert x} \rho]$, where $p(x)$ is the probability to choose the setting $x$. \\
\indent A measurement assemblage can be converted by two different processes to another assemblage. First, as any quantum state $\rho$ can be transformed via any \ac{CPT} map $\Lambda$ to another state $\Lambda(\rho)$, it follows from $\mathrm{Tr}[M_{a \vert x} \Lambda(\rho)] = \mathrm{Tr}[\Lambda^{\dagger}(M_{a \vert x}) \rho]$ that an assemblage $\mathcal{M}$ can be transformed via the Hilbert-Schmidt adjoint (unital) map $\Lambda^{\dagger}$ to another assemblage $\Lambda^{\dagger}(\mathcal{M})$. Second, classical simulations (via mixtures and classical post-processing) maps $\mathcal{M}' = \xi(\mathcal{M})$ with $ M'_{b \vert y} = \sum_x p(x \vert y) \sum_a q(b \vert y,x,a) M_{a \vert x}$ can be used to simulate~\cite{Guerini2017} the assemblage $\mathcal{M}'$ from $\mathcal{M}$  via the conditional probabilities $p(x \vert y)$ and $q(b \vert y,x,a)$ for all $y$, respectively for all $y,x,a$. Note that as $p(x) = \sum_y q(y) p(x \vert y)$ one also obtains the allowed probabilities $q(y)$ to perform setting $y$. See also \cite{Pusey2015} for an approach to simulability that combines quantum pre-processing and classical post-processing. \\
\indent We use the probability distribution $\mathbf{p} = \lbrace p(x) \rbrace$ to capture the fact that typically only one quantum measurement can be performed at a time and it is also natural to assume that the likelihood of the settings $x$ influences the capabilities of $\mathcal{M}$ in experiments. Note that we consider only the case $p(x)>0 \ \forall \ x$, as measurements that are never performed can be discarded trivially. We define a distance between sets of measurements weighted with the distribution $\mathbf{p}$ as follows. 
\begin{Definition}
\label{Def_Distance}
Let $\mathcal{M}$ be a measurement assemblage containing $m$ \acp{POVM} and let $\mathbf{p}$ be a probability distribution with $p(x) > 0 \ \forall \ x =1,\cdots,m$.\,We call the tuple $\mathcal{M}_{\mathbf{p}} \coloneqq (\mathcal{M},\mathbf{p})$ a \emph{\ac{WMA}}. Let $\mathcal{M}_\mathbf{p}$, $\mathcal{N}_\mathbf{p}$, and $\mathcal{K}_\mathbf{p}$ be any \acp{WMA}. Further, let $\Lambda^{\dagger}$ be any \ac{CP} unital map and $\xi$ any classical simulation map. Any non-negative function $\mathrm{D}(\mathcal{M}_\mathbf{p}, \mathcal{N}_\mathbf{p})$ that fulfils the conditions
\begin{align}
&\mathrm{D}(\mathcal{M}_\mathbf{p},\mathcal{N}_\mathbf{p}) = 0 \iff \mathcal{M} = \mathcal{N}, \label{Distance_conditions} \\
&\mathrm{D}(\mathcal{M}_\mathbf{p} , \mathcal{N}_\mathbf{p}) = \mathrm{D}(\mathcal{N}_\mathbf{p} , \mathcal{M}_\mathbf{p}), \nonumber \\
&\mathrm{D}(\mathcal{M}_\mathbf{p}, \mathcal{N}_\mathbf{p}) \leq \mathrm{D}(\mathcal{M}_\mathbf{p}, \mathcal{K}_\mathbf{p}) + \mathrm{D}(\mathcal{K}_\mathbf{p}, \mathcal{N}_\mathbf{p}), \nonumber \\
&\mathrm{D}(\mathcal{M}_\mathbf{p}, \mathcal{N}_\mathbf{p}) \geq  \mathrm{D}(\Lambda^{\dagger}(\mathcal{M})_\mathbf{p}, \Lambda^{\dagger}(\mathcal{N})_\mathbf{p}), \nonumber \\
&\mathrm{D}(\mathcal{M}_\mathbf{p}, \mathcal{N}_\mathbf{p}) \geq \mathrm{D}(\xi(\mathcal{M}_\mathbf{p})_\mathbf{q} , \xi(\mathcal{N}_\mathbf{p})_\mathbf{q}). \nonumber
\end{align}
is a \emph{distance} between $\mathcal{M}_\mathbf{p}$ and $\mathcal{N}_\mathbf{p}$.
\end{Definition}
Note that all conditions are in direct correspondence to the conditions in Eq.\,\eqref{distance_conditions_states} for quantum states. Any distance that fulfills the conditions in Definition~\ref{Def_Distance} can be used to define a faithful resource monotone for convex \acp{QRT} of measurement assemblages.
\begin{Definition}
\label{QRT}
Let $\mathscr{F}$ be a convex and compact set of measurement assemblages, $\mathds{F}$ the (maximal) set of free quantum maps $\Lambda^{\dagger}$ such that $\Lambda^{\dagger}(\mathcal{F}) \in \mathscr{F}$ for any $\mathcal{F} \in \mathscr{F}$, and let $\mathds{S}$ be the set of classical simulations $\xi$ such that $\xi(\mathcal{F}) \in \mathscr{F}$ for any $\mathcal{F} \in \mathscr{F}$. The tuple $Q \coloneqq (\mathscr{F},\mathds{F}, \mathds{S})$ is called a \emph{\ac{QRT} of measurement assemblages}. \end{Definition}

We want to emphasize that all of our considerations hold for the maximal set $\mathds{F}$ of free quantum maps. Therefore, they also hold for any subset of free operations. In some situations, not considering the maximal set of free operations might be physically more motivated, as it is the case for \ac{LOCC} in the resource theory of entanglement. 

\begin{Definition}
Let $Q = (\mathscr{F},\mathds{F}, \mathds{S})$ be a \ac{QRT} of \acp{WMA} $\mathcal{M}_\mathbf{p}$, $\mathcal{N}_\mathbf{p}$. Any non-negative function $\mathrm{R}(\mathcal{M}_\mathbf{p})$ that fulfils 
\begin{align}
&\mathrm{R}(\mathcal{M}_\mathbf{p}) = 0 \iff \mathcal{M} \in \mathscr{F},  \\
&\mathrm{R}(\mathcal{M}_\mathbf{p}) \geq \mathrm{R}(\Lambda^{\dagger}(\mathcal{M})_\mathbf{p}), \ \forall \ \Lambda^{\dagger} \in \mathds{F}, \nonumber \\
&\mathrm{R}(\mathcal{M}_\mathbf{p}) \geq \mathrm{R}(\xi(\mathcal{M}_\mathbf{p})_\mathbf{q}), \ \forall \ \xi \in \mathds{S}, \nonumber 
\end{align}
 is a \emph{faithful resource monotone} of \acp{WMA}. Moreover, if $\mathrm{R}(\mathcal{M}_\mathbf{p})$ fulfills 
\begin{align}
&\mathrm{R}(\eta\mathcal{M}_\mathbf{p}+(1-\eta)\mathcal{N}_\mathbf{p}) \leq \eta \mathrm{R}(\mathcal{M}_\mathbf{p}) +(1-\eta)\mathrm{R}(\mathcal{N}_\mathbf{p}),    
\end{align}
for any $ \eta \in [0,1] $ it is is a \emph{faithful convex resource monotone} of \acp{WMA}. 
\end{Definition}
With these definitions we obtain the following lemma, showing that every (jointly-convex) distance between measurement assemblages induces a faithful (convex) resource monotone.
\begin{Lemma}
\label{lem1}
Let $Q = (\mathscr{F},\mathds{F},\mathds{S})$ be any \ac{QRT} of \acp{WMA} $\mathcal{M}_\mathbf{p}$ and $\mathrm{D}(\mathcal{M}_\mathbf{p}, \mathcal{F}_\mathbf{p})$ a (jointly convex) distance function. The distance of $\mathcal{M}$ to the set $\mathscr{F}$ weighted with the probability $\mathbf{p}$ given by 
\begin{align}
\mathrm{R}(\mathcal{M}_\mathbf{p}) \coloneqq  \min\limits_{\mathcal{F} \in \mathscr{F}} \mathrm{D}(\mathcal{M}_\mathbf{p}, \mathcal{F}_\mathbf{p}),
\end{align}
is a faithful (convex) resource monotone. 
\end{Lemma}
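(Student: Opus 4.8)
The plan is to verify, one by one, the three conditions defining a faithful resource monotone for $\mathrm{R}(\mathcal{M}_\mathbf{p})=\min_{\mathcal{F}\in\mathscr{F}}\mathrm{D}(\mathcal{M}_\mathbf{p},\mathcal{F}_\mathbf{p})$, and then, under the additional hypothesis that $\mathrm{D}$ is jointly convex, its convexity. Non-negativity of $\mathrm{R}$ is immediate from non-negativity of $\mathrm{D}$. First I would record that the minimum is genuinely attained: by Definition~\ref{QRT} the set $\mathscr{F}$ is convex and compact, and $\mathcal{F}\mapsto\mathrm{D}(\mathcal{M}_\mathbf{p},\mathcal{F}_\mathbf{p})$ is continuous for the distances of interest, so a minimizer $\mathcal{F}^{\ast}\in\mathscr{F}$ with $\mathrm{R}(\mathcal{M}_\mathbf{p})=\mathrm{D}(\mathcal{M}_\mathbf{p},\mathcal{F}^{\ast}_\mathbf{p})$ exists; if one prefers to read $\mathrm{R}$ as an infimum, the arguments below go through verbatim after replacing optimal free assemblages by $\varepsilon$-optimal ones and letting $\varepsilon\to0$, again invoking compactness.

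Faithfulness splits into two directions. If $\mathcal{M}\in\mathscr{F}$, then choosing $\mathcal{F}=\mathcal{M}$ in the minimization and using the first line of Eq.~\eqref{Distance_conditions} gives $\mathrm{R}(\mathcal{M}_\mathbf{p})\le\mathrm{D}(\mathcal{M}_\mathbf{p},\mathcal{M}_\mathbf{p})=0$. Conversely, if $\mathrm{R}(\mathcal{M}_\mathbf{p})=0$, then the minimizer satisfies $\mathrm{D}(\mathcal{M}_\mathbf{p},\mathcal{F}^{\ast}_\mathbf{p})=0$, and the same first line of Eq.~\eqref{Distance_conditions} forces $\mathcal{M}=\mathcal{F}^{\ast}\in\mathscr{F}$.

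The two monotonicity conditions follow from one and the same device: take the optimal free assemblage $\mathcal{F}^{\ast}$ for $\mathcal{M}_\mathbf{p}$, push it through the free operation, note that the image is again free and hence a feasible point in the minimization defining the monotone of the transformed assemblage, and finally invoke the appropriate contractivity line of Eq.~\eqref{Distance_conditions}. Concretely, for a free quantum map $\Lambda^{\dagger}\in\mathds{F}$ one has $\Lambda^{\dagger}(\mathcal{F}^{\ast})\in\mathscr{F}$, so $\mathrm{R}(\Lambda^{\dagger}(\mathcal{M})_\mathbf{p})\le\mathrm{D}(\Lambda^{\dagger}(\mathcal{M})_\mathbf{p},\Lambda^{\dagger}(\mathcal{F}^{\ast})_\mathbf{p})\le\mathrm{D}(\mathcal{M}_\mathbf{p},\mathcal{F}^{\ast}_\mathbf{p})=\mathrm{R}(\mathcal{M}_\mathbf{p})$, where the first inequality is feasibility in the $\min$ and the second is the fourth line of Eq.~\eqref{Distance_conditions}. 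For a classical simulation $\xi\in\mathds{S}$ with induced setting distribution $\mathbf{q}$, the analogous chain $\mathrm{R}(\xi(\mathcal{M}_\mathbf{p})_\mathbf{q})\le\mathrm{D}(\xi(\mathcal{M}_\mathbf{p})_\mathbf{q},\xi(\mathcal{F}^{\ast}_\mathbf{p})_\mathbf{q})\le\mathrm{D}(\mathcal{M}_\mathbf{p},\mathcal{F}^{\ast}_\mathbf{p})=\mathrm{R}(\mathcal{M}_\mathbf{p})$ does the job, using $\xi(\mathcal{F}^{\ast})\in\mathscr{F}$ and the fifth line of Eq.~\eqref{Distance_conditions}. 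The only point needing a moment's care is that $\xi(\mathcal{F}^{\ast}_\mathbf{p})_\mathbf{q}$ is genuinely a competitor in the minimization defining $\mathrm{R}(\xi(\mathcal{M}_\mathbf{p})_\mathbf{q})$; this holds because, as the relation $p(x)=\sum_y q(y)p(x\vert y)$ shows, the weight $\mathbf{q}$ produced by $\xi$ depends only on $\mathbf{p}$ and on $\xi$, not on the assemblage it acts on, so $\xi(\mathcal{F}^{\ast})$ carries the same weight $\mathbf{q}$ and the same number of settings as $\xi(\mathcal{M})$.

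Finally, assuming $\mathrm{D}$ is jointly convex, let $\mathcal{F}_1^{\ast},\mathcal{F}_2^{\ast}\in\mathscr{F}$ be optimal for $\mathcal{M}_\mathbf{p}$ and $\mathcal{N}_\mathbf{p}$. Convexity of $\mathscr{F}$ gives $\eta\mathcal{F}_1^{\ast}+(1-\eta)\mathcal{F}_2^{\ast}\in\mathscr{F}$ for all $\eta\in[0,1]$, so it is feasible in the minimization, and joint convexity of $\mathrm{D}$ bounds $\mathrm{D}\big(\eta\mathcal{M}_\mathbf{p}+(1-\eta)\mathcal{N}_\mathbf{p},(\eta\mathcal{F}_1^{\ast}+(1-\eta)\mathcal{F}_2^{\ast})_\mathbf{p}\big)$ by $\eta\,\mathrm{D}(\mathcal{M}_\mathbf{p},(\mathcal{F}_1^{\ast})_\mathbf{p})+(1-\eta)\,\mathrm{D}(\mathcal{N}_\mathbf{p},(\mathcal{F}_2^{\ast})_\mathbf{p})$, which equals $\eta\,\mathrm{R}(\mathcal{M}_\mathbf{p})+(1-\eta)\,\mathrm{R}(\mathcal{N}_\mathbf{p})$; chaining these yields $\mathrm{R}(\eta\mathcal{M}_\mathbf{p}+(1-\eta)\mathcal{N}_\mathbf{p})\le\eta\,\mathrm{R}(\mathcal{M}_\mathbf{p})+(1-\eta)\,\mathrm{R}(\mathcal{N}_\mathbf{p})$. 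I do not expect a genuine obstacle: this is the textbook ``distance to the set of free objects'' construction transported to weighted measurement assemblages, and it reduces entirely to the axioms of Definition~\ref{Def_Distance} together with convexity and compactness of $\mathscr{F}$. The only things worth being careful about are the attainment of the minimum (compactness plus continuity of $\mathrm{D}$) and the bookkeeping of the weight distribution under simulations, both flagged above.
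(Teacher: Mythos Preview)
Your proof is correct and follows essentially the same route as the paper: faithfulness from the first line of Eq.~\eqref{Distance_conditions}, monotonicity by pushing the optimal free assemblage through the free operation and invoking contractivity of $\mathrm{D}$ plus closedness of $\mathscr{F}$ under free operations, and convexity from joint convexity of $\mathrm{D}$ together with convexity of $\mathscr{F}$. You are in fact slightly more careful than the paper on two points the paper leaves implicit, namely attainment of the minimum (compactness of $\mathscr{F}$) and the observation that the induced weight $\mathbf{q}$ under a simulation $\xi$ depends only on $\mathbf{p}$ and $\xi$, so that $\xi(\mathcal{F}^{\ast}_\mathbf{p})_\mathbf{q}$ is indeed a legitimate competitor.
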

\begin{proof}
The proof relies mainly on the conditions in Definition~\ref{Def_Distance}. The non-negativity and faithfulness (i.e., $\mathrm{R}(\mathcal{M}_\mathbf{p}) = 0 \iff \mathcal{M} \in \mathscr{F}$) follow directly, and the monotonicity conditions follow from
\begin{align}
\mathrm{R}(\mathcal{M}_\mathbf{p}) =  \min\limits_{\mathcal{F} \in \mathscr{F}} \mathrm{D}(\mathcal{M}_\mathbf{p}, \mathcal{F}_\mathbf{p}) \geq  \min\limits_{\mathcal{F} \in \mathscr{F}} \mathrm{D}(\Lambda^{\dagger}(\mathcal{M})_\mathbf{p}, \Lambda^{\dagger}(\mathcal{F})_\mathbf{p}) \geq  \min\limits_{\mathcal{F}' \in \mathscr{F}} \mathrm{D}(\Lambda^{\dagger}(\mathcal{M})_\mathbf{p}, \mathcal{F}'_\mathbf{p}) = \mathrm{R}(\Lambda^{\dagger}(\mathcal{M})_\mathbf{p}), 
\end{align}
where we used the monotonicity of the distance and the fact that free operations $\Lambda^{\dagger} \in \mathds{F}$ map free assemblages to free assemblages. An analogous calculation follows for the simulations $\xi \in \mathds{S}$. In addition, if $\mathrm{D}(\mathcal{M}_\mathbf{p}, \mathcal{F}_\mathbf{p})$ is jointly convex, i.e., obeys 
\begin{align}
\mathrm{D}(\eta \mathcal{M}^{(1)}_\mathbf{p} + (1-\eta) \mathcal{M}^{(2)}_\mathbf{p}, \eta \mathcal{F}^{(1)}_\mathbf{p} + (1-\eta) \mathcal{F}^{(2)}_\mathbf{p}) \leq \eta \mathrm{D}(\mathcal{M}^{(1)}_\mathbf{p}, \mathcal{F}^{(1)}_\mathbf{p}) + (1-\eta) \mathrm{D}(\mathcal{M}^{(2)}_\mathbf{p}, \mathcal{F}^{(2)}_\mathbf{p}),
\end{align}
for any $ \eta \in [0,1] $ and any \acp{WMA} $\mathcal{M}^{(1)}_\mathbf{p},\mathcal{M}^{(2)}_\mathbf{p},\mathcal{F}^{(1)}_\mathbf{p},\mathcal{F}^{(2)}_\mathbf{p}$ it holds that $\mathrm{R}(\mathcal{M}_\mathbf{p})$ is convex. This follows from
\begin{align}
\mathrm{R}(\eta \mathcal{M}^{(1)}_\mathbf{p} + (1-\eta)  \mathcal{M}^{(2)}_\mathbf{p}) &\coloneqq \min\limits_{\mathcal{F} \in \mathscr{F}} \mathrm{D}(\eta \mathcal{M}^{(1)}_\mathbf{p} + (1-\eta)  \mathcal{M}^{(2)}_\mathbf{p}, \mathcal{F}_{\mathbf{p}}) \\
&\leq \mathrm{D}(\eta \mathcal{M}^{(1)}_\mathbf{p} + (1-\eta)  \mathcal{M}^{(2)}_\mathbf{p}, \eta \mathcal{F}^{(1) *}_\mathbf{p} + (1-\eta)  \mathcal{F}^{(2) *}_\mathbf{p}) \nonumber \\
&\leq \eta \mathrm{D}(\mathcal{M}^{(1)}_\mathbf{p}, \mathcal{F}^{(1) *}_\mathbf{p}) + (1-\eta) \mathrm{D}(\mathcal{M}^{(2)}_\mathbf{p}, \mathcal{F}^{(2) *}_\mathbf{p}) \nonumber \\
&= \eta \mathrm{R}(\mathcal{M}^{(1)}_\mathbf{p}) + (1-\eta) \mathrm{R}(\mathcal{M}^{(2)}_\mathbf{p}), \nonumber 
\end{align}
where we used that $\mathcal{F}^{(1) *}, \mathcal{F}^{(2) *} \in \mathscr{F}$ are the closest free assemblages to $\mathcal{M}^{(1)}$ and $\mathcal{M}^{(2)}$, respectively. Furthermore, we used that $\eta \mathcal{F}^{(1) *}_\mathbf{p} + (1-\eta)  \mathcal{F}^{(2) *}_\mathbf{p} $ is free as well, by the convexity of $\mathscr{F}$. Note that the arguments used here are similar to those for distance-based resource monotones of quantum states.
\end{proof}
We propose in the following a specific distance on which we focus on in the remainder of the work (see however the appendix for alternatives). More specifically, we associate to any \ac{POVM} $\mathcal{M}_x = \lbrace M_{a \vert x} \rbrace_a$ a measure-and-prepare channel defined by \begin{align} \label{measure_prepare}
\Lambda_{\mathcal{M}_x}(\rho) = \sum_a \mathrm{Tr}[M_{a \vert x} \rho] \vert a \rangle \langle a \vert,
\end{align}
where the register states $ \vert a \rangle $ form an orthonormal basis $\lbrace \vert a \rangle \rbrace_{0 \leq a \leq o-1}$. Note that the channel $\Lambda_{\mathcal{M}_x}$ can equivalently be described by its Choi–Jamio\l kowski-matrix (see e.g.\,\cite{Watrous2018}). The Choi–Jamio\l kowski-matrix of a quantum channel is obtained by applying a given channel to the first subsystem of the (unnormalized) maximally entangled state $\vert \tilde{\Phi}^+ \rangle = \sum_{i = 0}^{d-1} \lvert ii \rangle$. More precisely, the Choi–Jamio\l kowski-matrix of a measure-and-prepare channel as described in Eq.\,\eqref{measure_prepare} is given by
\begin{align}
 J(\mathcal{M}_x) = (\Lambda_{\mathcal{M}_x} \otimes \mathds{1})(\lvert \tilde{\Phi}^+ \rangle \langle \tilde{\Phi}^+ \rvert) = \sum_a \lvert a \rangle \langle a \rvert \otimes M_{a \vert x}^T,   \label{Choi–Jamiołkowski_state}
\end{align}
where the transpose is with respect to the computational basis. 
We denote the diamond distance between two quantum channels $\Lambda_1, \Lambda_2$ by
\begin{align}
\Ddiamond(\Lambda_1, \Lambda_2) = \max\limits_{\rho \in \mathcal{S}(\mathcal{H} \otimes \mathcal{H})} \dfrac{1}{2} \lVert ((\Lambda_1-\Lambda_2) \otimes \mathds{1}_{d})\rho \rVert_1.
\end{align}
Due to the connection to the trace distance, the diamond distance determines the optimal single-shot probability $p^{(\Lambda_1,\Lambda_2)}_{\diamond,\mathrm{guess}} =  \dfrac{1}{2}(1+\Ddiamond(\Lambda_1, \Lambda_2))$ to distinguish between $\Lambda_1$ and $\Lambda_2$. We want to make use of this operational relevance in the following, by designing a distance between measurement assemblages that has a similar operational interpretation. Based on the diamond distance, we propose the distance $\Ddiamond(\mathcal{M}_\mathbf{p}, \mathcal{N}_\mathbf{p})$ between the \acp{WMA} defined as
\begin{align}
\Ddiamond(\mathcal{M}_\mathbf{p}, \mathcal{N}_\mathbf{p}) \coloneqq \sum_x p(x)  \Ddiamond(\Lambda_{\mathcal{M}_x}, \Lambda_{\mathcal{N}_x}),  \label{assemblage_distance}
\end{align}
and its induced resource monotone 
\begin{align}
\Rdiamond(\mathcal{M}_\mathbf{p}) \coloneqq \min\limits_{\mathcal{F} \in \mathscr{F}} \sum_{x} p(x)  \Ddiamond(\Lambda_{\mathcal{M}_x}, \Lambda_{\mathcal{F}_x}).   \label{diamond_monotone} 
\end{align} 
Note that the diamond distance between measure-and prepare-channels has also been introduced in the context of single \ac{POVM} discrimination \cite{PhysRevA.98.042103,PhysRevA.90.052312}. To prove that $\Rdiamond(\mathcal{M}_\mathbf{p})$ is a convex resource monotone, we need to show that $\Ddiamond(\mathcal{M}_\mathbf{p}, \mathcal{N}_\mathbf{p})$ is distance function according to the conditions in Definition~\ref{Def_Distance} and that it is a jointly-convex function.
\begin{theorem}
\label{thrm1}
The function $\Ddiamond(\mathcal{M}_\mathbf{p}, \mathcal{N}_\mathbf{p})$ is a distance function between the \acp{WMA} $\mathcal{M}_\mathbf{p}$ and $\mathcal{N}_\mathbf{p}$, i.e., it fulfils all the conditions stated in Definition~\ref{Def_Distance}. Moreover, $\Ddiamond(\mathcal{M}_\mathbf{p}, \mathcal{N}_\mathbf{p})$ is jointly-convex.
\end{theorem}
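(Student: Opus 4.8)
The plan is to verify the five conditions of Definition~\ref{Def_Distance} together with joint convexity by reducing each to the corresponding property of the diamond distance on channels, applied termwise in $x$, and then exploiting that the map $\mathcal{M}_x \mapsto \Lambda_{\mathcal{M}_x}$ is affine. Since $p(x)>0$ for all $x$, non-negativity of $\Ddiamond(\mathcal{M}_\mathbf{p},\mathcal{N}_\mathbf{p})$ is immediate from \eqref{assemblage_distance}. For faithfulness, $\Ddiamond(\mathcal{M}_\mathbf{p},\mathcal{N}_\mathbf{p})=0$ forces $\Ddiamond(\Lambda_{\mathcal{M}_x},\Lambda_{\mathcal{N}_x})=0$ for every $x$; since the diamond distance is a genuine metric on channels this gives $\Lambda_{\mathcal{M}_x}=\Lambda_{\mathcal{N}_x}$, and reading off $\langle a\vert \Lambda_{\mathcal{M}_x}(\rho)\vert a\rangle = \mathrm{Tr}[M_{a\vert x}\rho]$ for all $\rho$ and $a$ recovers $M_{a\vert x}=N_{a\vert x}$, i.e.\ $\mathcal{M}=\mathcal{N}$; the converse is clear since $\mathcal{M}=\mathcal{N}$ makes every summand vanish. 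Symmetry and the triangle inequality descend termwise from the same properties of $\Ddiamond(\cdot,\cdot)$ on channels after multiplying by $p(x)$ and summing.

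For monotonicity under a free unital CP map $\Lambda^{\dagger}$, the key observation is that $\Lambda_{\Lambda^{\dagger}(\mathcal{M})_x} = \Lambda_{\mathcal{M}_x}\circ\Lambda$, which follows from $\mathrm{Tr}[\Lambda^{\dagger}(M_{a\vert x})\rho]=\mathrm{Tr}[M_{a\vert x}\Lambda(\rho)]$ together with the fact that $\Lambda$, the Hilbert--Schmidt adjoint of a unital CP map, is a quantum channel. Contractivity of the diamond distance under pre-composition with a channel then gives $\Ddiamond(\Lambda_{\mathcal{M}_x}\circ\Lambda,\Lambda_{\mathcal{N}_x}\circ\Lambda)\le\Ddiamond(\Lambda_{\mathcal{M}_x},\Lambda_{\mathcal{N}_x})$ for each $x$, and summing against $\mathbf{p}$ yields the claim. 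Joint convexity is equally direct: the assemblage convex combination $\eta\mathcal{M}^{(1)}+(1-\eta)\mathcal{M}^{(2)}$ induces the channel $\eta\Lambda_{\mathcal{M}^{(1)}_x}+(1-\eta)\Lambda_{\mathcal{M}^{(2)}_x}$, so writing $\Ddiamond(\Lambda_1,\Lambda_2)=\tfrac12\lVert\Lambda_1-\Lambda_2\rVert_\diamond$ and applying the triangle inequality for the diamond norm to each $x$-term gives the required bound after reweighting by $p(x)$.

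The main obstacle is monotonicity under a classical simulation $\xi\in\mathds{S}$. Here I would first rewrite the simulated channel as $\Lambda_{\mathcal{M}'_y}=\sum_x p(x\vert y)\,\Phi^{(y,x)}\circ\Lambda_{\mathcal{M}_x}$, where $\Phi^{(y,x)}$ is the classical post-processing channel on the output register acting as $\vert a\rangle\langle a\vert\mapsto\sum_b q(b\vert y,x,a)\vert b\rangle\langle b\vert$ (extended to an honest CPTP map by composing with a dephasing in front, which is harmless since the output of $\Lambda_{\mathcal{M}_x}$ is already diagonal), and analogously for $\mathcal{N}$. Convexity of the diamond norm in the channel argument bounds $\Ddiamond(\Lambda_{\mathcal{M}'_y},\Lambda_{\mathcal{N}'_y})$ by $\sum_x p(x\vert y)\,\Ddiamond(\Phi^{(y,x)}\circ\Lambda_{\mathcal{M}_x},\Phi^{(y,x)}\circ\Lambda_{\mathcal{N}_x})$, and contractivity under post-composition with $\Phi^{(y,x)}$ further bounds this by $\sum_x p(x\vert y)\,\Ddiamond(\Lambda_{\mathcal{M}_x},\Lambda_{\mathcal{N}_x})$. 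Multiplying by $q(y)$, summing over $y$, and using the consistency relation $p(x)=\sum_y q(y)p(x\vert y)$ between the old and new setting distributions collapses the double sum back to $\sum_x p(x)\,\Ddiamond(\Lambda_{\mathcal{M}_x},\Lambda_{\mathcal{N}_x})=\Ddiamond(\mathcal{M}_\mathbf{p},\mathcal{N}_\mathbf{p})$, which is precisely the desired inequality. The only subtlety to watch is packaging the mixture weights $p(x\vert y)$ and the register relabelling into genuine CPTP maps, so that the standard contractivity and convexity properties of $\Ddiamond$ apply verbatim.
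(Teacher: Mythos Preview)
Your proposal is correct and covers all required conditions. The route differs from the paper's in a useful way, so a brief comparison is in order.

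The paper first unpacks $\Ddiamond(\mathcal{M}_\mathbf{p},\mathcal{N}_\mathbf{p})$ into the explicit form $\tfrac{1}{2}\sum_x p(x)\max_\rho \sum_a \lVert \sigma_{a\vert x}(\rho)-\tau_{a\vert x}(\rho)\rVert_1$ with $\sigma_{a\vert x}(\rho)=\mathrm{Tr}_1[(M_{a\vert x}\otimes\mathds{1})\rho]$, and then verifies every condition by direct manipulation of these trace norms: monotonicity under $\Lambda^\dagger$ comes from restricting the optimisation to states of the form $(\Lambda\otimes\mathds{1})(\rho')$, and monotonicity under simulations is obtained by expanding $M'_{b\vert y}$, applying the triangle inequality for $\lVert\cdot\rVert_1$, summing out $b$, and then interchanging $\max_\rho$ with $\sum_x$. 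You instead stay at the channel level throughout, identifying $\Lambda_{\Lambda^\dagger(\mathcal{M})_x}=\Lambda_{\mathcal{M}_x}\circ\Lambda$ and $\Lambda_{\xi(\mathcal{M})_y}=\sum_x p(x\vert y)\,\Phi^{(y,x)}\circ\Lambda_{\mathcal{M}_x}$, so that each condition becomes an instance of the standard contractivity of $\lVert\cdot\rVert_\diamond$ under pre- or post-composition with CPTP maps, together with the triangle inequality. This is cleaner and makes transparent \emph{why} the conditions hold: the affine embedding $\mathcal{M}_x\mapsto\Lambda_{\mathcal{M}_x}$ intertwines the free operations on assemblages with honest channel compositions. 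The paper's more explicit computation, on the other hand, produces along the way the representation \eqref{steering_interpretation_incompatibility}, which it relies on later when proving $\Idiamond(\mathcal{M}_{\mathbf{p}_A})\ge \mathrm{S}(\vec{\sigma}_{\mathbf{p}_A})$ in Theorem~\ref{thrm_hierarchy}.
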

\begin{proof}
The proof relies mostly on the properties of the diamond distance. It is possible to rewrite 
\begin{align}
&\Ddiamond(\mathcal{M}_\mathbf{p}, \mathcal{N}_\mathbf{p}) = \dfrac{1}{2} \sum_{x} p(x) \max\limits_{\rho} \sum_a \lVert  \sigma_{a \vert x}(\rho) - \tau_{a \vert x}(\rho) \rVert_1,    \label{steering_interpretation_incompatibility}
\end{align}
where we have introduced $\sigma_{a \vert x}(\rho) = \mathrm{Tr}_1[(M_{a \vert x} \otimes \mathds{1}) \rho]$ and $\tau_{a \vert x}(\rho) = \mathrm{Tr}_1[(N_{a \vert x} \otimes \mathds{1}) \rho]$. Note that $\mathrm{Tr}_1[\cdot]$ denotes the trace with respect to the first subsystem and that we omit here and in the following the Hilbert space $\rho$ acts on. All conditions in Definition~\ref{Def_Distance} and the joint-convexity can now be verified by direct computation. See Appendix~\ref{Append_thrm1} for all details. 
\end{proof}
Note that it follows directly from its definition that $\Rdiamond(\mathcal{M}_\mathbf{p})$ is upper bounded by $\Rdiamond(\mathcal{M}_\mathbf{p}) \leq 1$, and that it fulfills the continuity condition 
\begin{align}
\lvert \Rdiamond(\mathcal{M}_\mathbf{p}) - \Rdiamond(\mathcal{N}_\mathbf{p}) \rvert \leq \Ddiamond(\mathcal{M}_\mathbf{p}, \mathcal{N}_\mathbf{p}),    
\end{align}
due to the triangle inequality for the diamond norm. 
Moreover, it can be rewritten as 
\begin{align}
\Rdiamond(\mathcal{M}_\mathbf{p}) = \min\limits_{\mathcal{F} \in \mathscr{F}} 2 \sum_x  p(x) p_{\diamond,\mathrm{guess}}^{(\mathcal{M},\mathcal{F})}(x) -1, \label{first_interpretation}
\end{align}
with $p_{\diamond,\mathrm{guess}}^{(\mathcal{M},\mathcal{F})}(x) = \tfrac{1}{2}(1+\Ddiamond(\Lambda_{\mathcal{M}_x},\Lambda_{\mathcal{F}_x}))$
which is up to normalization the average optimal probability to distinguish the resources $\mathcal{M}$ from the free measurements $\mathcal{F}$ in a single-shot experiment. Hence, Eq.~\eqref{first_interpretation} reveals the desired operational significance of $\Rdiamond(\mathcal{M}_\mathbf{p})$ in terms of an average single-shot distinguishability. See also Figure~\ref{figure_diamond_distance} for an illustration of the operational meaning of $\Rdiamond(\mathcal{M}_\mathbf{p})$.
 \begin{figure}
 \centering
\includegraphics{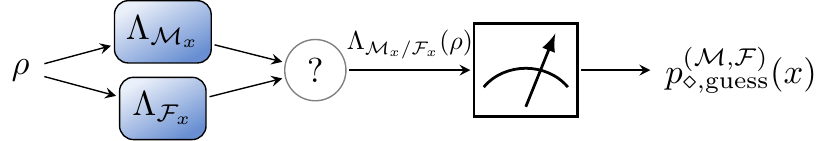}
 \caption{Illustration of the idea to use the diamond distance as resource monotone. Quantum measurements $\mathcal{M}_x,\mathcal{F}_x$ are associated with quantum channels $\Lambda_{\mathcal{M}_x}, \Lambda_{\mathcal{F}_x}$. These are distinguished by applying the channels to an optimal quantum state $\rho$ and performing an ideal dichotomic measurement afterwards to distinguish between the output of the channels $\Lambda_{\mathcal{M}_x}$ and $\Lambda_{\mathcal{F}_x}$. The probability $p_{\diamond,\mathrm{guess}}^{(\mathcal{M},\mathcal{F})}(x)$ tells us how distinguishable the resourceful measurement $\mathcal{M}_x$ is from the free measurements $\mathcal{F}_x$.}
 \label{figure_diamond_distance}
\end{figure}

\section{Hierarchy of measurement resources}  

One main goal while studying \acp{QRT} is to obtain relations between different resources. In particular, we want to understand how one resource limits another quantitatively. This will show one strength of a geometric quantifier, as it can be defined for various resource theories and the discussion often reduces to an analysis of the free sets $\mathscr{F}$. In the following, we will establish a hierarchy of measurement resources based on the newly introduced quantifier $\Rdiamond(\mathcal{M}_\mathbf{p})$. We start by introducing the different resources. \\
\indent The most basic resource of an assemblage is its informativeness~\cite{PhysRevLett.122.140403}. The informativeness of a \ac{WMA} quantifies how valuable it is to actually perform measurements compared to randomly guessing the outcomes in an experiment. An assemblage $\mathcal{M}$ is called \ac{UI} if 
\begin{align}
M_{a \vert x} = q(a \vert x) \mathds{1}_d \ \forall \ a,x, \label{informativeness_free}
\end{align}
where $\lbrace q(a \vert x) \rbrace$ are some probability distributions of $a$ conditioned on setting $x$. These measurements are \ac{UI} as their measurement result does not depend on the quantum state. We denote the set of \ac{UI} assemblages by $\mathscr{F}_{\mathrm{UI}}$ and introduce the informativeness monotone
\begin{align}
\label{Informativeness}
\IFdiamond(\mathcal{M}_\mathbf{p}) = \min\limits_{\mathcal{F} \in \mathscr{F}_{\mathrm{UI}}} \sum_{x} p(x)  \Ddiamond(\Lambda_{\mathcal{M}_x}, \Lambda_{\mathcal{F}_x}).    
\end{align}
Note that measurement informativeness was initially introduced only for a single \ac{POVM} and studied in terms of the generalized robustness~\cite{PhysRevLett.122.140403}. We have extended the notion here by considering the average informativeness of $\mathcal{M}_\mathbf{p}$. \\
\indent A resource that is the foundation for the distinction between \textit{classical} and \textit{quantum} systems is the coherence of measurements~\cite{Baek2020}. An assemblage $\mathcal{M}$ is incoherent (in some predefined orthonormal basis $\lbrace \vert i \rangle  \rbrace$) if 
\begin{align}
M_{a \vert x} = \sum_{i} \alpha_{i \vert (a,x)} \vert i \rangle \langle i \vert \ \forall \ a,x,  \label{coherence_free}
\end{align} 
where $\alpha_{i \vert (a,x)} = \langle i \vert M_{a \vert x} \vert i \rangle$. These measurements cannot distinguish quantum states $\rho$ from their fully dephased versions $\Delta(\rho) = \sum_i \lvert i \rangle \langle i \rvert  \rho \rvert i \rangle \langle i \rvert$, hence they cannot detect coherence. We denote the set of incoherent assemblages by $\mathscr{F}_{\mathrm{IC}}$ and introduce the coherence monotone
\begin{align}
\label{Coherence}
\Cdiamond(\mathcal{M}_\mathbf{p}) = \min\limits_{\mathcal{F} \in \mathscr{F}_{\mathrm{IC}}} \sum_{x} p(x)  \Ddiamond(\Lambda_{\mathcal{M}_x}, \Lambda_{\mathcal{F}_x}).    
\end{align}
Similarly to the informativeness, the coherence of measurements was initially introduced for a single \ac{POVM} and we have extended it here by considering the average coherence of $\mathcal{M}_\mathbf{p}$. See also \cite{PhysRevLett.126.220404} for a different approach to coherence of measurement assemblages. \\
\indent The incompatibility of measurements is probably the best-known example of a \ac{QRT} for measurements and has been studied extensively in recent years~\cite{2112.06784,Heinosaari2016,Incomop_resource,PhysRevLett.122.050402,Designolle2019,PhysRevLett.122.130403}. Contrary to classical physics, different quantum measurements may be incompatible, i.e., they cannot be performed simultaneously and one cannot access their joint measurement statistics as famously illustrated by the Heisenberg-Robertson uncertainty relation~\cite{PhysRev.34.163}. Initially interpreted as a drawback, this phenomenon lies at the heart of Bell-type experiments, as incompatibility is a necessary prerequisite to witness steering and nonlocality.
An assemblage $\mathcal{M}$ is called compatible or \ac{JM} if the statistics of $\mathcal{M}$ can be simulated by a single measurement via some \ac{POVM} $\lbrace G_\lambda \rbrace$ and classical post-processing via the deterministic probability distributions $\lbrace v(a \vert x, \lambda) \rbrace$ such that 
\begin{align}
M_{a \vert x} = \sum_{\lambda} v(a \vert x, \lambda) G_{\lambda} \ \forall \ a,x,    \label{incompatibility_free}
\end{align}
 and is called incompatible otherwise. Note that using deterministic post-processings $\lbrace v(a \vert x, \lambda) \rbrace$ (which represent the vertices of the corresponding probability polytope) is not a restriction as all randomness from non-deterministic distributions can be put inside $G_{\lambda}$. 
 We denote the set of \ac{JM} assemblages by $\mathscr{F}_{\mathrm{JM}}$ and introduce the incompatibility monotone
\begin{align}
\label{Incompatibility}
\Idiamond(\mathcal{M}_\mathbf{p}) = \min\limits_{\mathcal{F} \in \mathscr{F}_{\mathrm{JM}}} \sum_{x} p(x)  \Ddiamond(\Lambda_{\mathcal{M}_x}, \Lambda_{\mathcal{F}_x}).    
\end{align}
It is important to note that the incompatibility in Eq.~\eqref{Incompatibility} is not the average of single \ac{POVM} properties, as incompatibility is always a property of sets of measurements. Therefore, the incompatibility $\Idiamond(\mathcal{M}_\mathbf{p})$ is qualitatively different from the coherence or informativeness. \\
\indent Similar to entanglement, incompatibility can be witnessed in a Bell-type experiment, as both are necessary resources for steering and nonlocality. Consider the \ac{WMA} $\mathcal{M}_\mathbf{p}$ and any bipartite quantum state $\rho$ shared by two-parties, Alice and Bob. By performing the measurements $\mathcal{M}_\mathbf{p}$ on her share of the state, Alice prepares the conditional states 
\begin{align}
\sigma_{a \vert x} = \mathrm{Tr}_1[(M_{a \vert x} \otimes \mathds{1}) \rho],    
\end{align}
for Bob. Here $p(a \vert x) = \mathrm{Tr}[\sigma_{a \vert x}]$ is the probability to obtain $\sigma_{a \vert x}$. We denote the obtained state assemblage by $ \Vec{\sigma} = \lbrace \sigma_{a \vert x} \rbrace$ and its weighted version by $\Vec{\sigma}_\mathbf{p} = (\Vec{\sigma} , \mathbf{p})$.

To make sure that Alice performs incompatible measurements on an entangled state, she can prove that she can demonstrate steering. A state assemblage $\Vec{\sigma}$ is said to be steerable if it cannot be obtained from a \ac{LHS} given by
\begin{align}
\sigma_{a \vert x} = \sum_{\lambda} v(a \vert x, \lambda) \sigma_{\lambda} \ \forall a,x,   \label{LHS_def}
\end{align}
where the $\sigma_{\lambda}$ are operators that satisfy $\sigma_{\lambda} \geq 0 \ \forall \lambda$ and $\mathrm{Tr}[\sum_{\lambda} \sigma_{\lambda}] = 1$. Otherwise we say $\Vec{\sigma} $ is unsteerable which we denote by $\Vec{\sigma}  \in \mathrm{LHS}$. Steering can also be quantified and we use the distance-based monotone introduced by Ku et al.~\cite{PhysRevA.97.022338} as
\begin{align}
\mathrm{S}(\Vec{\sigma}_\mathbf{p}) = \min\limits_{\Vec{\tau} \in \mathrm{LHS}} \dfrac{1}{2}\sum\limits_{a,x} p(x) \lVert \sigma_{a \vert x} - \tau_{a \vert x} \rVert_1. \label{Steering_quantifier_distance}
\end{align}
Note that originally an additional consistency constraint $\sum\limits_a \tau_{a\vert x} = \sum\limits_a \sigma_{a\vert x}$ was introduced~\cite{PhysRevA.97.022338}. However, we do not require this constraint here, as consistency constraints are sometimes introduced for mathematical convenience which provides no advantages in our considerations. For more details on consistent quantifiers, see also \cite{PhysRevA.93.052112}. \\
\indent Consider now that both parties, Alice and Bob, want to prove that they perform incompatible measurements on an entangled state. Let $\mathcal{M}_{\mathbf{p}_A}$ and $\mathcal{N}_{\mathbf{p}_B}$ be the \acp{WMA} of Alice and Bob, respectively, and let $\rho$ be their shared quantum state. Alice and Bob obtain the probability distribution $\mathbf{q} = \lbrace q(a,b \vert x,y) \rbrace $ via $q(a,b\vert x,y) = \mathrm{Tr}[(M_{a \vert x} \otimes N_{b \vert y}) \rho]$. Note that $p(x,y) = p_A(x)p_B(y)$ is the probability to choose setting $x$ for Alice and $y$ for Bob and we introduce the tuple $\mathbf{q}_\mathbf{p} = (\mathbf{q},\mathbf{p})$. To assure themselves that they share an entangled state and perform incompatible measurements, they can check whether they can demonstrate nonlocality. A probability distribution $\mathbf{q}$ is local if it can be obtained from a \ac{LHV} given by
\begin{align}
q(a,b \vert x,y)  = \sum\limits_{\lambda} \pi(\lambda) v_A(a \vert x, \lambda) v_B(b \vert y, \lambda) \ \forall a,b,x,y, \label{LHV_model}
\end{align}
where $\pi(\lambda)$ is the probability distribution of the hidden variable $\lambda$ and $\lbrace v_A(a \vert x, \lambda) \rbrace$ and $\lbrace v_B(b \vert y, \lambda) \rbrace$ are deterministic probability distributions of Alice and Bob, respectively. In this case we denote $\mathbf{q} \in \mathrm{LHV}$ and we say $\mathbf{q}$ is nonlocal otherwise. To quantify the nonlocality, we use the distance-based resource monotone for nonlocality introduced by Brito et al.~\cite{PhysRevA.97.022111} as
\begin{align}
\mathrm{N}(\mathbf{q}_\mathbf{p}) = \dfrac{1}{2} \min\limits_{\mathbf{t} \in \mathrm{LHV}} \sum\limits_{a,b,x,y} p(x,y) \lvert q(a,b \vert x,y) - t(a,b \vert x,y) \rvert.   \label{nonlocality_distance}
\end{align}
\indent Having introduced all these different notions of quantum resources, we can complete our goal to establish relations among them.  
\begin{theorem}
\label{thrm_hierarchy}
Let $\mathcal{M}_{\mathbf{p}_A}$, $\mathcal{N}_{\mathbf{p}_B}$ be any \acp{WMA} and $\rho$ any bipartite quantum state of appropriate dimensions. Let $\Vec{\sigma}_{\mathbf{p}_A}$ be a state assemblage obtained via $\sigma_{a \vert x} = \mathrm{Tr}_1[(M_{a \vert x} \otimes \mathds{1}) \rho]$ and let $\mathbf{q}_{\mathbf{p}} = (\mathbf{q},\mathbf{p})$ be a probability distribution obtained via $q(a,b\vert x,y) = \mathrm{Tr}[N_{b \vert y} \sigma_{a \vert x} ]$ and $p(x,y) = p_A(x)p_B(y)$. The following sequence of inequalities holds:
\begin{align}
\IFdiamond(\mathcal{M}_{\mathbf{p}_A}) \geq \Cdiamond(\mathcal{M}_{\mathbf{p}_A})  \geq  \Idiamond(\mathcal{M}_{\mathbf{p}_A})  \geq \mathrm{S}(\Vec{\sigma}_{\mathbf{p}_A}) \geq \mathrm{N}(\mathbf{q}_{\mathbf{p}}).   \label{full_hierarchy} 
\end{align}
\end{theorem}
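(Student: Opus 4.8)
The plan is to establish the chain of five inequalities in \eqref{full_hierarchy} one link at a time, exploiting two structural facts: first, that all the monotones in the chain are distance-based quantifiers of the same ``average trace-distance to a free set'' form, and second, that the free sets are nested once the intermediate objects are mapped into a common arena. The first three inequalities, $\IFdiamond \geq \Cdiamond \geq \Idiamond$, live entirely at the level of weighted measurement assemblages. Here I would simply observe the set inclusions $\mathscr{F}_{\mathrm{UI}} \subseteq \mathscr{F}_{\mathrm{IC}} \subseteq \mathscr{F}_{\mathrm{JM}}$: an uninformative POVM $M_{a\vert x} = q(a\vert x)\mathds{1}_d$ is manifestly diagonal in any basis, hence incoherent; and an incoherent assemblage is jointly measurable, since the common diagonal POVM built from the rank-one projectors $\vert i\rangle\langle i\vert$ (with $G_\lambda$ indexed by $\lambda = i$ and post-processing $v(a\vert x,i) = \alpha_{i\vert(a,x)}$, suitably normalized) reproduces all the $M_{a\vert x}$. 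Since $\Rdiamond$ of \eqref{diamond_monotone} is a minimization of the \emph{same} distance $\Ddiamond(\mathcal{M}_\mathbf{p},\mathcal{F}_\mathbf{p})$ over a larger feasible set, a larger free set yields a smaller (or equal) value; this gives the first three inequalities immediately.

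The fourth inequality, $\Idiamond(\mathcal{M}_{\mathbf{p}_A}) \geq \mathrm{S}(\Vec{\sigma}_{\mathbf{p}_A})$, is where the two ``arenas'' meet, and I expect this to be the technical heart of the argument. The key is the rewriting \eqref{steering_interpretation_incompatibility}, which expresses $\Ddiamond(\mathcal{M}_\mathbf{p},\mathcal{F}_\mathbf{p}) = \tfrac12\sum_x p(x)\max_\rho \sum_a \lVert \sigma_{a\vert x}(\rho) - \tau_{a\vert x}(\rho)\rVert_1$ with $\sigma_{a\vert x}(\rho) = \mathrm{Tr}_1[(M_{a\vert x}\otimes\mathds{1})\rho]$. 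Fixing the particular bipartite state $\rho$ from the theorem statement (rather than maximizing), this is an upper bound: for any jointly measurable $\mathcal{F} = \{F_{a\vert x}\}$, $\Ddiamond(\mathcal{M}_\mathbf{p},\mathcal{F}_\mathbf{p}) \geq \tfrac12\sum_{a,x} p(x)\lVert \sigma_{a\vert x} - \tau_{a\vert x}\rVert_1$ where $\tau_{a\vert x} = \mathrm{Tr}_1[(F_{a\vert x}\otimes\mathds{1})\rho]$. Now if $\mathcal{F}$ is jointly measurable with common POVM $\{G_\lambda\}$ and post-processings $v(a\vert x,\lambda)$, then $\tau_{a\vert x} = \sum_\lambda v(a\vert x,\lambda)\sigma_\lambda$ with $\sigma_\lambda = \mathrm{Tr}_1[(G_\lambda\otimes\mathds{1})\rho] \geq 0$ and $\mathrm{Tr}\sum_\lambda\sigma_\lambda = 1$, i.e.\ $\vec\tau$ is exactly an LHS assemblage in the sense of \eqref{LHS_def}. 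Therefore minimizing the left side over $\mathcal{F}\in\mathscr{F}_{\mathrm{JM}}$ dominates minimizing the right side over $\vec\tau\in\mathrm{LHS}$, which is precisely $\mathrm{S}(\vec\sigma_{\mathbf{p}_A})$ as defined in \eqref{Steering_quantifier_distance}. (The fact that the steering monotone here carries no consistency constraint, as the authors note, only makes its feasible set larger and the bound easier.)

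The fifth inequality, $\mathrm{S}(\Vec{\sigma}_{\mathbf{p}_A}) \geq \mathrm{N}(\mathbf{q}_{\mathbf{p}})$, is an analogous ``post-processing by a fixed measurement'' step, this time from state assemblages down to probability distributions. Given the optimal LHS assemblage $\vec\tau^\star$ for $\vec\sigma_{\mathbf{p}_A}$, apply Bob's fixed POVMs $\{N_{b\vert y}\}$: define $t(a,b\vert x,y) = \mathrm{Tr}[N_{b\vert y}\tau^\star_{a\vert x}]$, which is an LHV distribution since $\tau^\star_{a\vert x} = \sum_\lambda v_A(a\vert x,\lambda)\sigma_\lambda$ forces $t(a,b\vert x,y) = \sum_\lambda \mathrm{Tr}[\sigma_\lambda]\, v_A(a\vert x,\lambda)\, \mathrm{Tr}[N_{b\vert y}\sigma_\lambda]/\mathrm{Tr}[\sigma_\lambda]$, which has the product form \eqref{LHV_model} with $\pi(\lambda) = \mathrm{Tr}[\sigma_\lambda]$ and Bob's response function a (deterministic-extremal-decomposable) distribution. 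Then by the data-processing inequality for the trace distance applied to the measurement map $X\mapsto \mathrm{Tr}[N_{b\vert y}X]$, summed against $p_B(y)$ using $\sum_b N_{b\vert y} = \mathds{1}$, one gets $\tfrac12\sum_{a,b,x,y}p(x,y)\lvert q(a,b\vert x,y) - t(a,b\vert x,y)\rvert \leq \tfrac12\sum_{a,x}p_A(x)\lVert\sigma_{a\vert x} - \tau^\star_{a\vert x}\rVert_1 = \mathrm{S}(\vec\sigma_{\mathbf{p}_A})$, and since $\mathrm{N}(\mathbf{q}_\mathbf{p})$ minimizes the left side over all LHV $t$, the inequality follows. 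The main obstacle throughout is not any single deep estimate but rather carefully checking that each ``pushforward'' of a free object (JM $\to$ LHS $\to$ LHV) genuinely lands in the next free set, and that the relevant data-processing/contraction step is being applied to the right channel with the right normalization — in particular handling the convex-decomposition bookkeeping for the extremal response functions in the LHV model so that $t$ really has the form \eqref{LHV_model}.
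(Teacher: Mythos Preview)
Your proposal is correct and follows essentially the same route as the paper: the first three inequalities via the nested free sets $\mathscr{F}_{\mathrm{UI}}\subseteq\mathscr{F}_{\mathrm{IC}}\subseteq\mathscr{F}_{\mathrm{JM}}$, the fourth by fixing the given $\rho$ in the diamond-distance representation \eqref{steering_interpretation_incompatibility} and observing that jointly measurable $\mathcal{F}$ push forward to LHS assemblages, and the fifth by pushing the optimal LHS assemblage through Bob's measurements and invoking a trace-norm data-processing bound (this is exactly the content of the paper's Lemma~\ref{Lem_steering_vs_nonlocality}). The only cosmetic difference is that for $\mathscr{F}_{\mathrm{IC}}\subseteq\mathscr{F}_{\mathrm{JM}}$ you give an explicit parent POVM $\{|i\rangle\langle i|\}$ with post-processings $\alpha_{i|(a,x)}$, whereas the paper invokes the fact that pairwise-commuting POVMs are jointly measurable; both arguments are standard and equivalent here.
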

\begin{proof}
The inequalities $\IFdiamond(\mathcal{M}_{\mathbf{p}_A}) \geq \Cdiamond(\mathcal{M}_{\mathbf{p}_A}) \geq \Idiamond(\mathcal{M}_{\mathbf{p}_A})$ follow from the nested structure of the sets of free assemblages. More formally, $\mathscr{F}_{\mathrm{UI}} \subset \mathscr{F}_{\mathrm{IC}} \subset \mathscr{F}_{\mathrm{JM}} $ which can be seen by realizing that \ac{POVM} effects that are proportional to the identity are also incoherent (in any basis) and as incoherent \acp{POVM} commute pairwise, they are jointly measurable~\cite{PhysRevLett.126.220404}. Since we are minimizing the distance with respect to these sets, the inequalities hold. To prove that $\Idiamond(\mathcal{M}_{\mathbf{p}_A}) \geq \mathrm{S}(\Vec{\sigma}_{\mathbf{p}_A}) $ holds, we
use that incompatibility is necessary for steering. This allows us to use  $\Vec{\tau} = \lbrace \tau_{a,x} = \mathrm{Tr}_1[(F^*_{a \vert x} \otimes \mathds{1}) \rho] \rbrace$ as an unsteerable assemblage for any state $\rho$, as the closest \ac{JM} measurements $\mathcal{F}^*$ (with respect to the assemblage $\mathcal{M}$) cannot lead to steerable assemblages. It follows,
\begin{align}
\mathrm{S}(\Vec{\sigma}_{\mathbf{p}_A}) &\leq \dfrac{1}{2} \sum_x p_A(x) \sum_a \Vert \mathrm{Tr}_1[(M_{a \vert x} \otimes \mathds{1}) \rho] - \mathrm{Tr}_1[(F^*_{a \vert x} \otimes \mathds{1}) \rho] \Vert_1 \\
&\leq \dfrac{1}{2} \sum_x p_A(x) \max\limits_{\rho} \sum_a \Vert \mathrm{Tr}_1[((M_{a \vert x} - F^*_{a \vert x})  \otimes \mathds{1}) \rho] \Vert_1  \nonumber \\
&= \Idiamond(\mathcal{M}_{\mathbf{p}_A}), \nonumber
\end{align}
where we used the representation of $\Idiamond(\mathcal{M}_{\mathbf{p}_A})$ according to Eq.~\eqref{steering_interpretation_incompatibility} in the last line. We employ a similar approach to show that $\mathrm{S}(\Vec{\sigma}_{\mathbf{p}_A}) \geq \mathrm{N}(\mathbf{q}_{\mathbf{p}})$ in Lemma~\ref{Lem_steering_vs_nonlocality} in Appendix~\ref{Append_steering_bound_nonlocality}.
\end{proof}
Note that hierarchies related to that in Eq.\,\eqref{full_hierarchy} have also been established, at least partly, for weight- and robustness-based resource quantifiers~\cite{PhysRevLett.126.220404,PhysRevA.93.052112}. The connection between incompatibility, steering, and nonlocality has been studied by Cavalcanti et al.~\cite{PhysRevA.93.052112} extensively for for weight-\, and robustness-based quantifiers while Designolle et al.~\cite{PhysRevLett.126.220404} discussed the relation between coherence and incompatibility and, for a single \ac{POVM}, between the informativeness and the coherence in terms of the generalized robustness. 

The hierarchy\,\eqref{full_hierarchy} in Theorem~\ref{thrm_hierarchy} gives insights how resources like the incompatibility limit steering and nonlocal correlations quantitatively. On the other hand, every detection of these quantum correlations gives a lower bound to the measurement resources. In particular, the violation of every appropriately normalized steering or Bell inequality, in the nonlocal game formulation~\cite{Cleve,Arajo2020}, can lower bound these measurement resources. We show in Appendix~\ref{Append_steering_nonlocality_dual} that $\mathrm{S}(\Vec{\sigma}_{\mathbf{p}_A})$ is the maximal possible steering inequality violation given by
\begin{align}
\mathrm{S}(\Vec{\sigma}_{\mathbf{p}_A}) = \underset{G_{a \vert x},\ell}{\max} \sum_{a,x} p_A(x) \mathrm{Tr}[\sigma_{a \vert x} G_{a \vert x}] - \ell , \label{Steering_inequality}
\end{align}
where $\ell = \underset{\Vec{\tau} \in \mathrm{LHS}}{\max} \sum\limits_{a,x} p_A(x) \mathrm{Tr}[\tau_{a \vert x} G_{a \vert x}]$ is the classical bound obeyed by all unsteerable assemblages $\Vec{\tau} \in \mathrm{LHS}$ and the $G_{a \vert x}$ are positive semidefinite matrices s.t. $\lVert G_{a \vert x} \rVert_{\infty} \leq 1$, where $\lVert \,\cdot\, \rVert_{\infty}$ is the spectral norm. \\
\indent Moreover, the nonlocality $\mathrm{N}(\mathbf{q}_{\mathbf{p}})$ can be reformulated as the violation of a Bell inequality given by 
\begin{align}
\mathrm{N}(\mathbf{q}_{\mathbf{p}}) = \underset{C_{ab \vert xy},\ell}{\max}\sum_{a,b,x,y} p(x,y) C_{ab\vert xy} q(a,b \vert x,y) - \ell , \label{Bell_inequality}
\end{align}
where $\ell = \underset{\mathbf{t} \in \mathrm{LHV}}{\max} \sum\limits_{a,b,x,y} p(x,y) C_{ab\vert xy} t(a,b \vert x,y)$ is the local bound obeyed by all local correlations $\mathbf{t} \in \mathrm{LHV}$ and $C_{ab\vert xy}$ are Bell coefficients s.t. $0 \leq C_{ab\vert xy} \leq 1 $. \\
\indent It is worth to highlight that the hierarchy~\eqref{full_hierarchy} is reminiscent of the resource hierarchy for quantum states formulated by Streltsov et al.~\cite{Purity_resource}. For quantum states, it holds that
\begin{align}
\mathrm{P}(\rho) \geq \mathrm{C}(\rho) \geq \mathrm{D}(\rho) \geq \mathrm{E}(\rho),    \label{state_hierarchy}
\end{align}
where $\mathrm{P}(\rho)$, $\mathrm{C}(\rho)$, $\mathrm{D}(\rho)$, and $\mathrm{E}(\rho)$ denote the quantum state's purity, coherence with respect to product bases, discord, and entanglement, respectively, using the same geometric quantifier. Comparing both hierarchies, it becomes clear that the informativeness of measurements is in some sense the analogue to a state's purity, as both quantify the deviation from their respective uninformative element. We also observe that coherence is an important resource for states as well as measurements, which allows for more complex phenomena such as entanglement and incompatibility. Incompatibility and entanglement both play a similar role in their respective hierarchies, as both are the smallest known resource that is necessary for steering and nonlocality. Interestingly, incompatibility and entanglement also share similarities in their respective resource breaking maps~\cite{Heinosaari2015}. 

Moreover, we show in Appendix \ref{Append_entanglement_bound_steering} that the entanglement $\ETD(\rho)$ as defined in Eq.\,\eqref{entanglement} also upper bounds the steerability $\mathrm{S}(\Vec{\sigma}_{\mathbf{p}_A}) \leq \ETD(\rho)$. This leads to the conclusion that the nonlocality $\mathrm{N}(\mathbf{q}_\mathbf{p})$ and the steerability $\mathrm{S}(\Vec{\sigma}_{\mathbf{p}_A})$ are upper bounded by the smallest of the used resources to obtain $\mathbf{q}_\mathbf{p}$, respectively $\Vec{\sigma}_{\mathbf{p}_A}$.
\begin{Corollary}
Let $\mathcal{M}_{\mathbf{p}_A}$, $\mathcal{N}_{\mathbf{p}_B}$ be any \acp{WMA} and $\rho$ any bipartite quantum state of appropriate dimensions. Let $\Vec{\sigma}_{\mathbf{p}_A}$ be a state assemblage obtained via $\sigma_{a \vert x} = \mathrm{Tr}_1[(M_{a \vert x} \otimes \mathds{1}) \rho]$ and let $\mathbf{q}_{\mathbf{p}} = (\mathbf{q},\mathbf{p})$ be a probability distribution obtained via $q(a,b\vert x,y) = \mathrm{Tr}[N_{b \vert y} \sigma_{a \vert x} ]$ and $p(x,y) = p_A(x)p_B(y)$. The following inequalities hold
\begin{align}
\mathrm{N}(\mathbf{q}_{\mathbf{p}})  &\leq  \min \lbrace \ETD(\rho), \Idiamond(\mathcal{M}_{\mathbf{p}_A}), \Idiamond(\mathcal{N}_{\mathbf{p}_B})  \rbrace, \\
\mathrm{S}(\Vec{\sigma}_{\mathbf{p}_A}) &\leq  \min \lbrace \ETD(\rho), \Idiamond(\mathcal{M}_{\mathbf{p}_A})   \rbrace.
\end{align}  
\end{Corollary}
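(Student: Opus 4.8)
The plan is to obtain the Corollary by combining three ingredients that are already at our disposal: the hierarchy of Theorem~\ref{thrm_hierarchy}, the entanglement bound $\mathrm{S}(\Vec{\sigma}_{\mathbf{p}_A}) \leq \ETD(\rho)$ established in Appendix~\ref{Append_entanglement_bound_steering}, and the invariance of the nonlocality monotone $\mathrm{N}$ under exchanging the two parties.

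For the second line of the Corollary I would simply note that Theorem~\ref{thrm_hierarchy} gives $\Idiamond(\mathcal{M}_{\mathbf{p}_A}) \geq \mathrm{S}(\Vec{\sigma}_{\mathbf{p}_A})$, while Appendix~\ref{Append_entanglement_bound_steering} gives $\ETD(\rho) \geq \mathrm{S}(\Vec{\sigma}_{\mathbf{p}_A})$; taking the smaller of the two upper bounds yields $\mathrm{S}(\Vec{\sigma}_{\mathbf{p}_A}) \leq \min\{\ETD(\rho), \Idiamond(\mathcal{M}_{\mathbf{p}_A})\}$.

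For the first line, Theorem~\ref{thrm_hierarchy} already yields $\mathrm{N}(\mathbf{q}_{\mathbf{p}}) \leq \mathrm{S}(\Vec{\sigma}_{\mathbf{p}_A}) \leq \min\{\ETD(\rho), \Idiamond(\mathcal{M}_{\mathbf{p}_A})\}$, so the only thing left is to include $\Idiamond(\mathcal{N}_{\mathbf{p}_B})$ in the minimum. Here I would use that the same correlations $\mathbf{q}$ arise when Bob plays the role of the steering party: writing $\sigma'_{b \vert y} \coloneqq \mathrm{Tr}_2[(\mathds{1} \otimes N_{b \vert y})\rho]$ for the state assemblage Bob prepares for Alice, one has $q(a,b \vert x,y) = \mathrm{Tr}[M_{a \vert x}\, \sigma'_{b \vert y}]$, and both the set $\mathrm{LHV}$ in Eq.~\eqref{LHV_model} and the quantifier in Eq.~\eqref{nonlocality_distance} are manifestly symmetric under the relabeling $(a,x) \leftrightarrow (b,y)$, since $p(x,y) = p_A(x)p_B(y)$ remains a product distribution after the swap. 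Applying Theorem~\ref{thrm_hierarchy} with the roles of Alice and Bob interchanged therefore gives $\Idiamond(\mathcal{N}_{\mathbf{p}_B}) \geq \mathrm{S}(\Vec{\sigma}'_{\mathbf{p}_B}) \geq \mathrm{N}(\mathbf{q}_{\mathbf{p}})$, and combining this with the two bounds already obtained proves the first line.

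The argument is essentially a chaining exercise, so there is no serious obstacle; the only point deserving attention is the verification that $\mathrm{N}(\mathbf{q}_{\mathbf{p}})$ does not privilege either party — which is immediate from the explicit expressions~\eqref{LHV_model} and~\eqref{nonlocality_distance} — since that symmetry is exactly what legitimizes the appearance of $\Idiamond(\mathcal{N}_{\mathbf{p}_B})$ in the minimum.
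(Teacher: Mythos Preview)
Your proposal is correct and follows essentially the same approach the paper implies: the Corollary is stated without an explicit proof, but the preceding paragraph makes clear it is obtained by combining Theorem~\ref{thrm_hierarchy} with the entanglement bound from Appendix~\ref{Append_entanglement_bound_steering}, and the appearance of $\Idiamond(\mathcal{N}_{\mathbf{p}_B})$ in the nonlocality bound comes precisely from the party symmetry of $\mathrm{N}(\mathbf{q}_{\mathbf{p}})$ you spell out. Your write-up in fact makes the symmetry step more explicit than the paper does.
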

\indent An example illustrating the hierarchy~\eqref{full_hierarchy} is given by considering the respective resources of the \ac{CGLMP} measurements~\cite{PhysRevLett.88.040404,PhysRevLett.97.170409} applied to the maximally entangled state $\vert \Phi^+ \rangle = \dfrac{1}{\sqrt{d}} \sum_{i = 0}^{d-1} \lvert ii \rangle$. In the \ac{CGLMP} scenario, Alice and Bob perform two equally likely projective measurements in dimension $d$, given by $\lbrace M_{a \vert x} = \lvert a_x \rangle \langle a_x \rvert \rbrace, \lbrace N_{b \vert y} =  \lvert b_y \rangle \langle b_y \rvert \rbrace$, where \begin{align}
\lvert a_x \rangle = \dfrac{1}{\sqrt{d}} \sum_{q = 0}^{d-1} \exp{[\dfrac{2\pi i}{d}q(a-\alpha_x)]} \lvert q \rangle,  \label{CGLMP_Alice}  
\end{align}
for Alice's measurements and
\begin{align}
\lvert b_y \rangle = \dfrac{1}{\sqrt{d}} \sum_{q = 0}^{d-1} \exp{[-\dfrac{2\pi i}{d}q(b-\beta_y)]} \lvert q \rangle,    \label{CGLMP_Bob}
\end{align}
for Bob's measurements, with $\alpha_x = (x-1/2)/2$, $\beta_y = y/2$, and $a,b = 0, \cdots, d-1$ for $x,y = 1,2$. We visualize our results in Figure~\ref{CGLMP_Plot}.
\begin{figure}
\centering
\includegraphics[scale = 0.35]{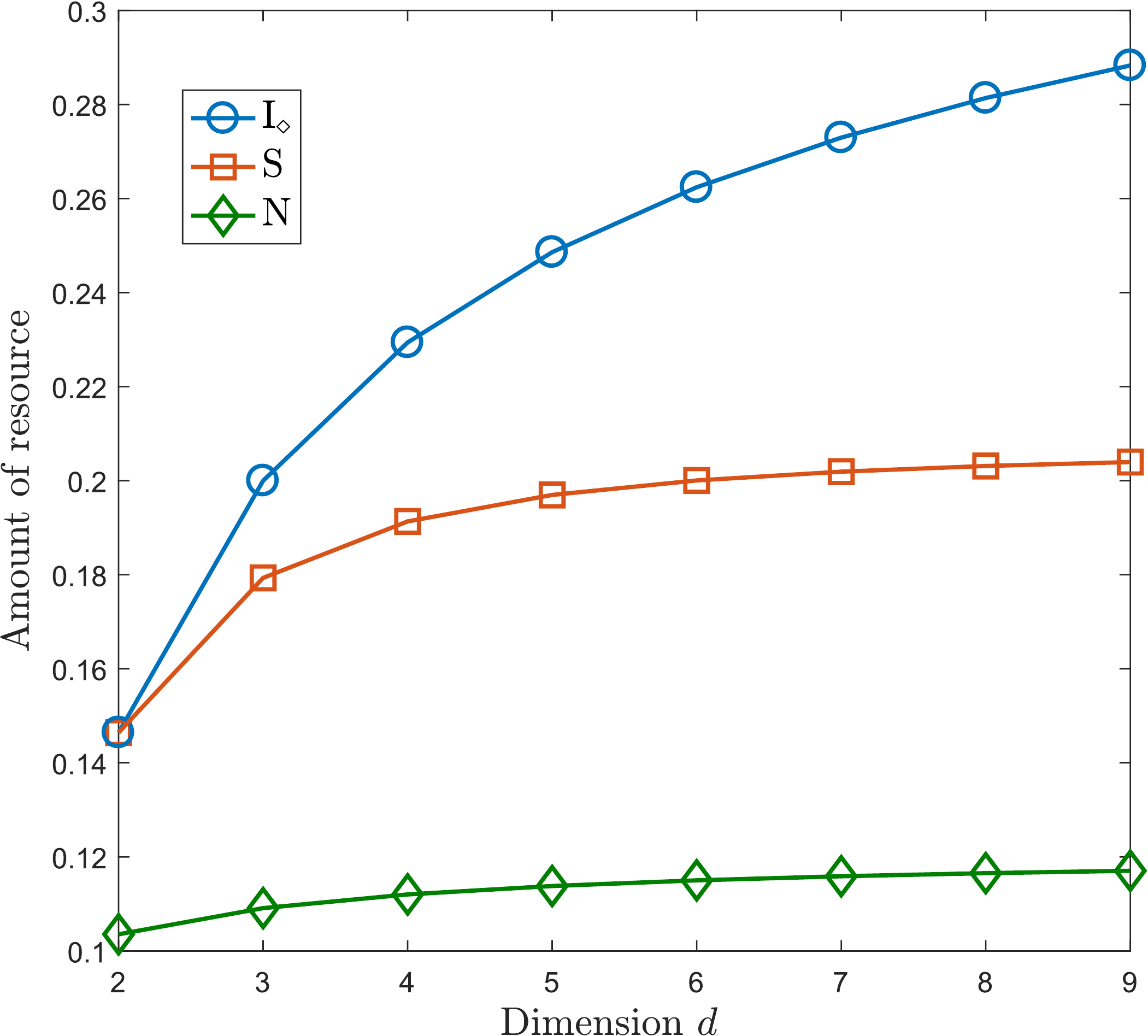}
 \caption{Comparison of the measurement resources. The incompatibility $\Idiamond$, steerability $\mathrm{S}$ (both from Alice), and nonlocality $\mathrm{N}$ are shown here for the \ac{CGLMP} measurements, for different dimensions $d$ and $m=2$ settings. The informativeness $\IFdiamond$ and the coherence $\Cdiamond$, which are not shown here, coincide in this particular case. More specifically, $\IFdiamond = \Cdiamond = 1-\dfrac{1}{d}$ as we show in section \ref{section_tightness}. The hierarchy $\IFdiamond \geq \Cdiamond \geq \Idiamond \geq \mathrm{S} \geq \mathrm{N}$ is clearly obeyed. While the nonlocality and the steerability converge quickly for growing $d$, the incompatibility increases further. The numerical methods used to obtain the results are explained in section~\ref{section_SDP}.}
 \label{CGLMP_Plot}
\end{figure}

\section{\texorpdfstring{\ac{SDP}}{SDP} formulations}
\label{section_SDP}
To study the hierarchy from Theorem~\ref{thrm_hierarchy} and the resources in more detail, an efficient method to numerically compute the respective resource quantifiers is needed. This can be done by formulating the quantifiers in terms of an \ac{SDP}, which also allows us to study the quantifiers analytically by exploiting duality theory. The computation of the general quantifier $\Rdiamond(\mathcal{M}_\mathbf{p})$ from Eq.~\eqref{diamond_monotone} can be stated as the following optimization problem:
\begin{align}
 &\underline{\text{Primal problem (general):}} \label{SDP_general_primal} \\   
&\mathrm{given:} \ \ \ \mathcal{M}_\mathbf{p} \nonumber \\ 
&\underset{Z_x, \mathcal{F}}{\mathrm{minimize}} \ \sum_x p(x) \lVert \mathrm{Tr}_1[Z_x] \rVert_{\infty} \nonumber \\
&\text{subject to:} \nonumber \\
&Z_x \geq J(M_x) - J(F_x), \ Z_x \geq 0 \ \forall \ x, \ \mathcal{F} \in \mathscr{F}, \nonumber 
\end{align}
where the $Z_x$ are positive semidefinite matrices, $J(M_x)$ is the Choi–Jamio\l kowski-matrix (see Eq.~\eqref{Choi–Jamiołkowski_state}) associated to setting $x$ of the assemblage $\mathcal{M}$, and $\mathcal{F}$ are the elements of the set of free assemblages $\mathscr{F}$. The formulation of the optimization in Eq.~\eqref{SDP_general_primal} mainly relies on the \ac{SDP} formulation of the diamond distance due to Watrous~\cite{0901.4709}. 

This compact representation of $\Rdiamond(\mathcal{M}_\mathbf{p})$ can be brought into an explicit \ac{SDP} formulation whenever the set $\mathscr{F}$ admits an \ac{SDP} formulation as we show in Appendix~\ref{Append_explicit_SDP} for the resources considered in this work. Every \ac{SDP} comes with a dual formulation which under some mild conditions (Slater's condition see, e.g. \cite{boyd_vandenberghe_2004}) returns the same optimal value as the primal problem. This condition is always satisfied for the \ac{SDP} \eqref{SDP_general_primal}. Hence, $\Rdiamond(\mathcal{M}_\mathbf{p})$ can also be written as optimal value of the optimization problem:
\begin{align}
&\underline{\text{Dual problem (general):}} \label{SDP_general_dual} \\    
&\mathrm{given:} \ \ \ \mathcal{M}_\mathbf{p} \nonumber \\ 
&\underset{C_{a \vert x}, \rho_x, \mathcal{F}}{\mathrm{maximize}} \ \sum_{a,x} p(x) \mathrm{Tr}[M_{a \vert x} C_{a \vert x}] - \underset{\mathcal{F} \in \mathscr{F}}{\max} \sum_{a,x} p(x) \mathrm{Tr}[F_{a \vert x} C_{a \vert x}] \nonumber \\
&\text{subject to:} \nonumber \\
&0 \leq C_{a \vert x} \leq \rho_x \ \forall \ a,x, \ \rho_x \geq 0, \mathrm{Tr}[\rho_x] = 1 \ \forall \ x, \nonumber
\end{align}
where the $C_{a \vert x}$, $\rho_x$ are positive semidefinite matrices and $\mathcal{F}$ are the elements of the set of free assemblages $\mathscr{F}$. Note that the dual formulation in Eq.~\eqref{SDP_general_dual} is in direct correspondence to the steering and Bell inequality formulations in Eq.~\eqref{Steering_inequality} and Eq.~\eqref{Bell_inequality}, as we maximize the difference of the resource value and the classical bound. The matrices $C_{a\vert x}$ describe a hyperplane in the assemblage space, while the states $\rho_x$ fix the scale (i.e. $\Rdiamond(\mathcal{M}_\mathbf{p}) \leq 1$ for any $\mathcal{M}_\mathbf{p}$) of the dual program. \\
\indent Since $\Rdiamond(\mathcal{M}_\mathbf{p})$ can be formulated as an \ac{SDP}, it is efficiently computable (in the Hilbert space dimension $d$) and one can resort to standard toolboxes for its computation~\cite{cvx,gb08,sdpt3,mosek}. We want to remark that it is also possible to use a variation of the \ac{SDP}~\eqref{SDP_general_dual} to obtain the optimal setting distribution $\mathbf{p}$ instead of fixing one in advance. This can be seen by introducing $C'_{a \vert x} = p(x) C_{a \vert x}$ and adjusting the constraints accordingly. See Appendix \ref{section_optimal_input_distribution} for an example where optimizing over $\mathbf{p}$ leads to an advantage over the uniform distribution for the incompatibility $\Idiamond(\mathcal{M}_\mathbf{p})$, even when only two measurement settings are considered. 

Even though \acp{SDP} are mainly used for numerical optimization, the underlying structure of an \ac{SDP} also offers a method to obtain analytical upper and lower bounds or even exact analytical expressions for $\Rdiamond(\mathcal{M}_\mathbf{p})$ depending on the complexity of the considered resource. More precisely, every feasible (but possibly sub-optimal) solution of the primal problem corresponds to an upper bound on $\Rdiamond(\mathcal{M}_\mathbf{p})$, while every feasible solution of the dual problem results in a lower bound. If we find feasible solutions of the primal and dual that result in the same value, we can conclude that this value is exactly $\Rdiamond(\mathcal{M}_\mathbf{p})$. We make use of this approach to derive bounds on the 
incompatibility $\Idiamond(\mathcal{M}_\mathbf{p})$ for any assemblage $\mathcal{M}$ weighted with a uniform distribution $\mathbf{p}$ in Theorem~\ref{thrm_bounds_incomp} and to identify cases in which the hierarchy in Theorem~\ref{thrm_hierarchy} is tight in section~\ref{section_tightness}.
\begin{theorem}
 \label{thrm_bounds_incomp}
 Given any \ac{WMA} $\mathcal{M}_\mathbf{p}$ consisting of $m$ \acp{POVM} in dimension $d$, with uniformly distributed measurement settings, i.e., $p(x) = 1/m \ \forall \ x$. The incompatibility $\Idiamond(\mathcal{M}_\mathbf{p})$ is upper and lower bounded by
 \begin{subequations}
  \begin{align}
\Idiamond(\mathcal{M}_\mathbf{p}) \geq &\dfrac{1}{md} \sum_{a,x} \mathrm{Tr}[M_{a \vert x}^2]-\dfrac{1}{m} \Big (\max\limits_{a,x} \lVert M_{a \vert x} \rVert_{\infty} + (m-1) \max_{a,a',x,x' \neq x} \lVert M_{a \vert x}^{1/2}  M_{a' \vert x'}^{1/2} \rVert_{\infty}\Big),   \\
\Idiamond(\mathcal{M}_\mathbf{p}) \leq &\dfrac{m-1}{(d+1)m^2} \sum_x \lVert d \mathds{1} - \sum_a \mathrm{Tr}[M_{a \vert x}] M_{a \vert x}  \rVert_{\infty}. 
\end{align}
 \end{subequations}
\end{theorem}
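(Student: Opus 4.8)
The plan is to read both bounds off the \ac{SDP} description of $\Idiamond(\mathcal M_\mathbf p)$: the lower bound from a feasible point of the dual~\eqref{SDP_general_dual}, the upper bound from a feasible point of the primal~\eqref{SDP_general_primal}, both specialised to $\mathscr F=\mathscr F_{\mathrm{JM}}$ and to $p(x)=1/m$. Throughout write $t_{a\vert x}:=\mathrm{Tr}[M_{a\vert x}]$ and $S_x:=\sum_a t_{a\vert x}M_{a\vert x}$; since $\sum_a M_{a\vert x}=\mathds 1_d$ one has $d\mathds 1_d-S_x=\sum_a(d-t_{a\vert x})M_{a\vert x}\ge 0$, which explains the spectral norm in the statement.

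For the lower bound I would insert into~\eqref{SDP_general_dual} the feasible choice $\rho_x=\mathds 1_d/d$ and $C_{a\vert x}=M_{a\vert x}/d$ (feasible because $0\le M_{a\vert x}\le\mathds 1_d$). The first term is then $\tfrac1{md}\sum_{a,x}\mathrm{Tr}[M_{a\vert x}^2]$, exactly the leading term of the bound, so everything reduces to upper bounding $\tfrac1{md}\max_{\mathcal F\in\mathscr F_{\mathrm{JM}}}\sum_{a,x}\mathrm{Tr}[F_{a\vert x}M_{a\vert x}]$. Using the parametrisation~\eqref{incompatibility_free} with \emph{deterministic} post-processings (so that for each $x,\lambda$ exactly one outcome $a_\lambda(x)$ carries weight) and $\sum_\lambda\mathrm{Tr}[G_\lambda]=d$, one gets $\sum_{a,x}\mathrm{Tr}[F_{a\vert x}M_{a\vert x}]=\sum_\lambda\mathrm{Tr}[G_\lambda\sum_x M_{a_\lambda(x)\vert x}]\le d\max_{\{a_x\}}\lVert\sum_x M_{a_x\vert x}\rVert_\infty$. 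It then remains to bound $\lVert\sum_x M_{a_x\vert x}\rVert_\infty=\lVert VV^\dagger\rVert_\infty$, where $V$ stacks the $M_{a_x\vert x}^{1/2}$, by majorising the block matrix $VV^\dagger=[\,M_{a_x\vert x}^{1/2}M_{a_{x'}\vert x'}^{1/2}\,]_{x,x'}$ with the $m\times m$ scalar matrix of its block norms; that matrix has diagonal $\lVert M_{a_x\vert x}\rVert_\infty$ and non-negative off-diagonal entries $\lVert M_{a_x\vert x}^{1/2}M_{a_{x'}\vert x'}^{1/2}\rVert_\infty$, so its norm is at most $\max_{a,x}\lVert M_{a\vert x}\rVert_\infty+(m-1)\max_{a,a',x\neq x'}\lVert M_{a\vert x}^{1/2}M_{a'\vert x'}^{1/2}\rVert_\infty$. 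Substituting back gives the stated lower bound.

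For the upper bound I would exhibit one jointly measurable $\mathcal F$ and matching Watrous variables. Take the depolarised assemblage $F_{a\vert x}=\eta M_{a\vert x}+(1-\eta)\tfrac{t_{a\vert x}}{d}\mathds 1_d$ with $\eta=\tfrac1m\big(1+\tfrac{m-1}{d+1}\big)$, so $M_{a\vert x}-F_{a\vert x}=\tfrac{m-1}{m(d+1)}\big(d\,M_{a\vert x}-t_{a\vert x}\mathds 1_d\big)$, and in~\eqref{SDP_general_primal} set $Z_x=\sum_a\lvert a\rangle\langle a\rvert\otimes\tfrac{m-1}{m(d+1)}(d-t_{a\vert x})M_{a\vert x}^{T}$. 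Then $Z_x\ge0$, and $Z_x\ge J(M_x)-J(F_x)$ block by block because the difference in block $a$ equals $\tfrac{m-1}{m(d+1)}\,t_{a\vert x}(\mathds 1_d-M_{a\vert x}^{T})\ge0$; moreover $\mathrm{Tr}_1[Z_x]=\tfrac{m-1}{m(d+1)}(d\mathds 1_d-S_x)^{T}$ has spectral norm $\tfrac{m-1}{m(d+1)}\lVert d\mathds 1_d-S_x\rVert_\infty$, so summing over $x$ with $p(x)=1/m$ gives precisely the claimed bound.

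The one genuinely non-routine ingredient is the joint measurability of this $\mathcal F$: because $\eta>\tfrac1m$ for $m\ge2$, the elementary ``uniform mixture of product channels'' parent POVM — which only certifies depolarisation parameter $\le\tfrac1m$ — does not suffice, and one needs the sharper \emph{universal} noise threshold for incompatibility, i.e.\ that depolarising any $m$-setting assemblage in dimension $d$ at $\eta=\tfrac1m\big(1+\tfrac{m-1}{d+1}\big)$ always produces a jointly measurable assemblage. I would either invoke this known threshold or prove it as an auxiliary lemma; a natural route is to construct a parent POVM that interlaces the $\tfrac1m$-weighted ``signal'' terms $M_{a_x\vert x}$ with the continuous covariant POVM $\Gamma(\mathrm d\psi)=d\lvert\psi\rangle\langle\psi\rvert\,\mathrm d\psi$ and to use the $2$-design identity $\int\langle\psi\vert A\vert\psi\rangle\lvert\psi\rangle\langle\psi\rvert\,\mathrm d\psi=\tfrac{\mathrm{Tr}[A]\mathds 1_d+A}{d(d+1)}$, which is exactly where the $\tfrac1{d+1}$ improvement over $\tfrac1m$ originates. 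I expect verifying positivity of such a parent (and the associated bookkeeping) to be the main obstacle; the remaining steps are direct verifications of \ac{SDP} feasibility.
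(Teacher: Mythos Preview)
Your proposal is correct and follows essentially the same route as the paper: the same dual feasible point $C_{a\vert x}=M_{a\vert x}/d$, $\rho_x=\mathds 1_d/d$ for the lower bound, and the same depolarised primal feasible point $F_{a\vert x}=\eta M_{a\vert x}+(1-\eta)\tfrac{t_{a\vert x}}{d}\mathds 1_d$ with the matching Watrous variable $Z_x$ for the upper bound. The only differences are presentational: for the spectral norm estimate $\lVert\sum_x M_{a_x\vert x}\rVert_\infty\le\max_{a,x}\lVert M_{a\vert x}\rVert_\infty+(m-1)\max_{a,a',x\neq x'}\lVert M_{a\vert x}^{1/2}M_{a'\vert x'}^{1/2}\rVert_\infty$ the paper simply cites a known norm inequality for sums of positive operators, whereas you sketch the block-matrix/Gram argument yourself (note the minor slip that the block matrix $[M_{a_x\vert x}^{1/2}M_{a_{x'}\vert x'}^{1/2}]_{x,x'}$ is $V^\dagger V$, not $VV^\dagger$, but the norms coincide); and for the joint measurability of the depolarised assemblage at $\eta=\tfrac1m(1+\tfrac{m-1}{d+1})$ the paper invokes the universal threshold from Designolle et al.\ directly, while you outline a covariant-POVM/$2$-design construction of a parent measurement.
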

\begin{proof}
The proof relies on finding feasible solutions of the primal (upper bound) and dual problem (lower bound) in Eq.\,\eqref{SDP_general_primal} and Eq.\,\eqref{SDP_general_dual} for the specific set of \ac{JM} measurements $\mathscr{F}_{\mathrm{JM}}$. For the primal, we choose 
\begin{align}
Z_x = (1-\eta) \sum_a \lvert a \rangle \langle a \rvert \otimes \dfrac{d-\mathrm{Tr}[M_{a \vert x}]}{d}M_{a \vert x}^T,    
\end{align}
 where $\eta \in [0,1]$ is the largest number such that $\mathcal{F}$ obtained from 
\begin{align}
F_{a \vert x} = \eta M_{a \vert x} + (1- \eta) \mathrm{Tr}[M_{a \vert x}] \dfrac{\mathds{1}}{d}    \label{depolarizing_robustness}
\end{align}
is \ac{JM}. The coefficient $\eta$ is known as the depolarizing robustness of the assemblage $\mathcal{M}$.
Now by design, $\mathcal{F}$ is \ac{JM} and $Z_x \geq 0$. The remaining constraint, $Z_x \geq  \sum_a \lvert a \rangle \langle a \rvert \otimes (M_{a \vert x} - F_{a \vert x})^T$, can be verified by direct computation. It follows that 
$\Idiamond(\mathcal{M}_\mathbf{p}) \leq \dfrac{1-\eta}{md} \sum_x \lVert d \mathds{1} - \sum_a \mathrm{Tr}[M_{a \vert x} ]M_{a \vert x} \rVert_{\infty}$,
where we used that $\mathbf{p}$ is uniformly distributed. Finally, the upper bound follows from \cite{Designolle2019}, where it was found that $\eta^{\text{low}} = \dfrac{1}{m}(1+\dfrac{m-1}{d+1})$ is a lower bound to the depolarizing robustness and therefore always leads to jointly measurable measurements for general measurement assemblages $\mathcal{M}$ with $m$ measurements of dimension $d$. \\
\indent To obtain the lower bound from the dual problem, we rewrite the objective function as \\
$\sum_{a,x} p(x) \mathrm{Tr}[M_{a \vert x} C_{a \vert x}] -\mathrm{Tr}[L]$, where $L$ is a matrix such that $L \geq \sum_{a,x} p(x)  v(a \vert x, \lambda) C_{a \vert x} \ \forall \ \lambda$. Note that such an $L$ always exists, which can be verified by multiplying both sides of the inequality with the \ac{POVM} effect $G_{\lambda}$ before summing over all $\lambda$ and taking the trace. We choose as feasible solution $C_{a \vert x} = \dfrac{M_{a \vert x}}{d}$, $\rho_x = \sum_a C_{a \vert x} = \dfrac{\mathds{1}}{d}$, and $L = l \mathds{1}$ with some free parameter $l$. Clearly, in this way all constraints are satisfied for some appropriately chosen parameter $l$ which still needs to be determined. We obtain that the incompatibility is lower bounded such that $\dfrac{1}{md} \sum_{a,x} \mathrm{Tr}[M_{a \vert x}^2]-dl \leq \Idiamond(\mathcal{M}_\mathbf{p})$. The constraint to find $l$ is now given by 
$l \mathds{1} \geq \sum_{a,x} \dfrac{1}{md}  v(a \vert x, \lambda) M_{a \vert x} \ \forall \ \lambda$, which means $l$ is the spectral norm  $ \lVert \dfrac{1}{md}  \sum_{a,x} v(a \vert x, \lambda) M_{a \vert x} \rVert_{\infty}$ maximized over the deterministic distributions $ \lbrace v(a \vert x,\lambda) \rbrace$. 
This means to find a valid $l$, we need to find an $l$ such that 
\begin{align}
l \geq \dfrac{T}{md}, \label{tighter_bounds_incomp}
\end{align}
where $T \coloneqq \lVert \sum_{a,x} v^*(a \vert x,\lambda) M_{a \vert x} \rVert_{\infty}$ and $ \lbrace v^*(a \vert x,\lambda) \rbrace_{a,x}$ are the deterministic probability distributions that maximize the right-hand side of Eq~\eqref{tighter_bounds_incomp}, respectively the spectral norm in the definition of $T$. \\
\indent Using the results in\,\cite{10.2307/24715730} it follows that $l = \dfrac{1}{md}(\max\limits_{a,x} \lVert M_{a \vert x} \rVert_{\infty}  + (m-1) \max\limits_{x,x' \neq x, a,a'} \lVert M_{a \vert x}^{1/2}  M_{a' \vert x'}^{1/2} \rVert_{\infty})$ is a valid choice, from which the lower bound on $\Idiamond(\mathcal{M}_\mathbf{p})$ follows. 
\end{proof}
While the bounds in Theorem~\ref{thrm_bounds_incomp} look complicated, we highlight that they become much simpler in the case of rank-$1$ projective measurements and especially for measurements based on \ac{MUB}.
Two orthonormal bases $\lbrace \vert v_a \rangle \rbrace_{0 \leq a \leq d-1}$ and $\lbrace \vert w_b \rangle \rbrace_{0 \leq b \leq d-1}$ are MUB if  
\begin{align}
\vert \langle v_a \vert w_b \rangle \vert = \dfrac{1}{\sqrt{d}} \ \forall \ a,b. \label{Overlap_MUB}
\end{align}
The set of projectors onto the vectors of a basis form a measurement $ \mathcal{M} = \lbrace M_a = \lvert v_a \rangle \langle v_a \rvert \rbrace$. An \ac{MUB} measurement assemblage is a set of measurements where the condition~\eqref{Overlap_MUB} holds for any two projections from different bases. \ac{MUB} measurement assemblages find many applications in quantum information \cite{DURT2010} and are natural candidates for highly incompatible measurements as studied in \cite{PhysRevLett.122.050402,Designolle2019,PhysRevA.96.022110}. It is known that in every dimension $d \geq 2$ there exist at least $m=p^r+1$ \ac{MUB}, where $p^r$ is the smallest prime power factor of $d$ \cite{10.1007/978-3-540-24633-6_10}. While it is in general an open problem how many \ac{MUB} really exist in a given dimension $d$, explicit constructions for $m=d+1$ \ac{MUB} are known when $d$ is a prime-power. \\
\indent Regarding the simplifications for rank-$1$ projective measurements we observe from the proof of Theorem~\ref{thrm_bounds_incomp} that it holds
\begin{align}
\dfrac{1}{md} \sum_{a,x} \mathrm{Tr}[M_{a \vert x}^2]-\dfrac{T}{m} \leq \Idiamond(\mathcal{M}_\mathbf{p}) \leq \dfrac{1-\eta}{md} \sum_x \lVert d \mathds{1} - \sum_a \mathrm{Tr}[M_{a \vert x} ]M_{a \vert x} \rVert_{\infty},    
\end{align}
where $T \coloneqq \lVert \sum_{a,x} v^*(a \vert x,\lambda) M_{a \vert x} \rVert_{\infty}$ and $\eta \in [0,1]$ is the largest number such that the assemblage $\mathcal{F}$ containing the measurements $F_{a \vert x} = \eta M_{a \vert x} + (1- \eta) \mathrm{Tr}[M_{a \vert x}] \dfrac{\mathds{1}}{d}$ is \ac{JM}. Due to the fact that $\mathrm{Tr}[M_{a \vert x}] = \mathrm{Tr}[M_{a \vert x}^{2}] = 1$ for rank-$1$ projections, we obtain the following corollary. 
 \begin{Corollary}
 \label{Corollary_projective_measurements}
 The incompatibility $\Idiamond(\mathcal{M}_\mathbf{p})$ of any measurement assemblage $\mathcal{M}$ consisting of $m$ rank-$1$ projective measurements, weighted with a uniformly distributed $p(x) = \dfrac{1}{m} \ \forall \ x$, is bounded as
 \begin{align}
\label{Bounds_incompatibility_projective}    
1 - \dfrac{T}{m} \leq \Idiamond(\mathcal{M}_\mathbf{p}) \leq (1-\eta) \dfrac{d-1}{d},
\end{align}
with $T \coloneqq \lVert \sum_{a,x} v^*(a \vert x,\lambda) M_{a \vert x} \rVert_{\infty}$ and the depolarizing robustness $\eta$ defined via Eq.~\eqref{depolarizing_robustness}.
\end{Corollary}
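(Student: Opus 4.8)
The plan is to specialize the sandwich bound displayed immediately before the Corollary (equivalently, the feasible primal and dual points constructed in the proof of Theorem~\ref{thrm_bounds_incomp}) to the case where every effect $M_{a\vert x}$ is a rank-one projector. The only inputs I need are the elementary identities $\mathrm{Tr}[M_{a\vert x}] = 1$ and $M_{a\vert x}^2 = M_{a\vert x}$, hence $\mathrm{Tr}[M_{a\vert x}^2] = 1$, valid for any rank-one projector, together with the POVM normalization $\sum_a M_{a\vert x} = \mathds{1}_d$. I also note that a rank-one projective measurement in dimension $d$ necessarily has exactly $o = d$ outcomes, so the double sum over $a,x$ ranges over $md$ terms.

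For the lower bound I substitute $\mathrm{Tr}[M_{a\vert x}^2] = 1$ into $\tfrac{1}{md}\sum_{a,x}\mathrm{Tr}[M_{a\vert x}^2]$; this sum equals $md$, so the leading term collapses to $1$ and the lower bound becomes $1 - T/m$, with $T = \lVert \sum_{a,x} v^{*}(a\vert x,\lambda) M_{a\vert x}\rVert_{\infty}$ inherited verbatim from the dual-feasible point used in Theorem~\ref{thrm_bounds_incomp}. For the upper bound I use $\mathrm{Tr}[M_{a\vert x}] = 1$ to rewrite $\sum_a \mathrm{Tr}[M_{a\vert x}]M_{a\vert x} = \sum_a M_{a\vert x} = \mathds{1}_d$; therefore $d\mathds{1} - \sum_a \mathrm{Tr}[M_{a\vert x}]M_{a\vert x} = (d-1)\mathds{1}_d$, whose spectral norm is exactly $d-1$. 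Summing the $m$ identical setting-contributions and multiplying by the prefactor $\tfrac{1-\eta}{md}$ yields $(1-\eta)\tfrac{d-1}{d}$, where $\eta$ is the depolarizing robustness of $\mathcal{M}$ defined through Eq.~\eqref{depolarizing_robustness}. Combining the two estimates gives the claimed interval for $\Idiamond(\mathcal{M}_\mathbf{p})$.

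I do not expect any genuine obstacle: once Theorem~\ref{thrm_bounds_incomp} is available, the Corollary is a one-line algebraic simplification. The only point deserving a word of care is making the outcome count explicit — a rank-one projective measurement has $d$ outcomes — since this is precisely what forces $\sum_{a,x} 1 = md$ and hence the clean value $1$ for the first term of the lower bound; without completeness of the projective measurements the simplification would not close so neatly.
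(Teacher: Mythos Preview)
Your proposal is correct and follows essentially the same approach as the paper: you specialize the sandwich bound from the proof of Theorem~\ref{thrm_bounds_incomp} using $\mathrm{Tr}[M_{a\vert x}] = \mathrm{Tr}[M_{a\vert x}^2] = 1$ for rank-one projectors, exactly as the paper does in the paragraph preceding the Corollary. Your explicit remark that a rank-one projective measurement in dimension $d$ has $o=d$ outcomes (so that $\sum_{a,x} 1 = md$) is a helpful clarification that the paper leaves implicit.
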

Using the overlap relation in Eq.\,\eqref{Overlap_MUB} and the same lower bound on $\eta$ as in Theorem~\ref{thrm_bounds_incomp}, it follows for a uniformly weighted \ac{MUB} measurement assemblage that 
\begin{align}
\label{Bounds_incompatibility_MUB}    
1 - \dfrac{1}{m}\big(1+\dfrac{(m-1)}{\sqrt{d}}\big) \leq \Idiamond(\mathcal{M}_\mathbf{p}) \leq \dfrac{(d-1)(m-1)}{(d+1)m}.
\end{align}
Some asymptotic behaviours for the incompatibility of \ac{MUB} measurement assemblages can be observed. In the case of large dimensions $d$ for a fixed number of measurements, the incompatibility approaches $\Idiamond(\mathcal{M}_\mathbf{p}) \approx 1 - \dfrac{1}{m}$ (as the upper and lower bound collapse onto each other) i.e. it asymptotically approaches the value $1$ for large $d$ and $m$. To get an impression of the quality of the bounds in Eq.~\eqref{Bounds_incompatibility_MUB} we investigate a specific construction of \ac{MUB} in prime dimensions $d$ based on the Heisenberg-Weyl operators
\begin{align}
\hat{X} = \sum_{k=0}^{d-1} \lvert k+1 \rangle \langle k \rvert, \ \hat{Z} = \sum_{k=0}^{d-1} \omega^k \lvert k \rangle \langle k \rvert,    
\end{align}
where $\lbrace \lvert k \rangle \rbrace_{0 \leq k \leq d-1}$ is the computational basis and $\omega = \exp{\big(\dfrac{2\pi i}{d}\big)}$ is a root of unity. In prime dimensions $d$, the eigenbases of the $d+1$ operators $\hat{X},\hat{Z},\hat{X} \hat{Z}, \hat{X} \hat{Z}^2, \cdots, \hat{X}\hat{Z}^{d-1}$ are mutually unbiased \cite{bandyopadhyay2002new}. We use these eigenbases to form sets of projective POVMs. Note that it matters which subset of eigenbases we choose. For example, the set of measurements associated with the eigenbases of $\mathcal{M}^{(1)} = \lbrace \hat{X}, \hat{Z}, \hat{X}\hat{Z} \rbrace$ can possibly have a different incompatibility than the measurements associated with the set $\mathcal{M}^{(2)} = \lbrace  \hat{X}\hat{Z}^{d-3},\hat{X}\hat{Z}^{d-2},\hat{X}\hat{Z}^{d-1} \rbrace.$ This is indeed the case for \ac{MUB} measurement assemblages in dimension $d=5$ and $m=3$ settings. We find that $\Idiamond(\mathcal{M}^{(1)}_\mathbf{p}) = 0.3750$, while $\Idiamond(\mathcal{M}^{(2)}_\mathbf{p}) = 0.3685$. This shows that different \ac{MUB} are operationally inequivalent, which has also been demonstrated for the depolarizing robustness \cite{PhysRevLett.122.050402}.
For the values in Table~\ref{Table1}, we used the assignment of \ac{MUB} according to the \ac{WMA} $\mathcal{M}^{(1)}_\mathbf{p}$, i.e., we take the first $m$ eigenbases. \\ 
\begin{table}
\centering
\scalebox{0.90}{
\begin{tabular}{| p{12mm} | p{12mm} | p{12mm} | p{12mm} | p{12mm} |} 
    $m$\textbackslash$d$ & $2$ & $3$& $5$ & $7$ \\ \hline
    $2$ & $0.1667$ \newline $\textcolor{blue}{0.1464}$ \newline $0.1464$ & $0.2500$ \newline $\textcolor{blue}{0.2113}$ \newline $0.2113$ & $0.3333$ \newline $\textcolor{blue}{0.2764}$ \newline $0.2764$ & $0.3750$ \newline $\textcolor{blue}{0.3110}$ \newline $0.3110$ \\ \hline 
    $3$ &$0.2222$ \newline $\textcolor{blue}{0.2113}$ \newline $0.1953$ & $0.3333$ \newline $\textcolor{blue}{0.2876}$  \newline $0.2818$ & $0.4444$ \newline $\textcolor{blue}{0.3750}$ \newline $0.3685$ & $0.5000$ \newline $\textcolor{blue}{0.4154}$ \newline $0.4147$ \\ \hline
    $4$ &  & $0.3750$ \newline $\textcolor{blue}{0.3455}$ \newline $0.3170$ & $0.5000$ \newline $\textcolor{blue}{0.4307}$  \newline $0.4146$ & $0.5625$ \newline $\textcolor{blue}{0.4724}$ \newline $0.4665$ \\ \hline
    $5$ &  &  & $0.5333$ \newline $\textcolor{blue}{0.4657}$ \newline $0.4422$ & $0.6000$ \newline $\textcolor{blue}{0.5040}$ \newline $0.4976$ \\ \hline
    $6$ &  &  & $0.5556$ \newline $\textcolor{blue}{0.4910}$ \newline $0.4607$ & $0.6250$ \newline $\textcolor{blue}{0.5257}$ \newline $0.5184$ \\ \hline
    $7$ &  &  &  & $0.6429$ \newline $\textcolor{blue}{0.5413}$ \newline $0.5332$ \\ \hline
    $8$ &  &  &  & $0.6563$ \newline $\textcolor{blue}{0.5728}$ \newline $0.5443$ \\ \hline
    \end{tabular}}
    \caption{\label{Table1} Incompatibility $\Idiamond(\mathcal{M}_\mathbf{p})$ of MUB in prime dimensions $d$ with $m$ settings. In each cell, the first number is the upper bound on the incompatibility, the second number is the actual incompatibility which can be computed via the \acp{SDP} \eqref{SDP_general_primal} and \eqref{SDP_general_dual}, marked in blue, and the third number is the lower bound on the incompatibility. The bounds are obtained from Eq. \eqref{Bounds_incompatibility_MUB}. Note that the lower bound is tight for $m =2$ measurements. Furthermore, it is shown in the text, that the incompatibilities for $m=2$, $m=d$, and $m=d+1$ measurements can be obtained analytically.}
\end{table} 
\indent As one can see in Table~\ref{Table1}, the upper and lower bounds combined give a good idea on how incompatible this implementation of MUB is in practical scenarios. The lower bound can be tightened significantly by using the bound from Corollary \ref{Corollary_projective_measurements} directly. Note that this requires an optimization over all $N_{\mathrm{det}} = o^m$ deterministic assignments $\lbrace v(a \vert x, \lambda) \rbrace$, where $o$ is the number of measurement outcomes for each of the $m$ settings. Surprisingly, the tightened lower bound coincides with the numerical values for the incompatibility $\Idiamond(\mathcal{M}_\mathbf{p})$ for all $m,d$ in Table~\ref{Table1} up to the fourth digit.
While we were not able to show that the lower bound from Corollary \ref{Corollary_projective_measurements} is tight for \ac{MUB} measurement assemblages in general, we are able to identify important cases where this is indeed the case. 

More specifically, it was shown by Designolle et al. \cite{PhysRevLett.122.050402} that $\eta = \dfrac{dT-m}{dm-m}$ is the depolarising robustness for the \textit{standard construction} of \ac{MUB} measurement assemblages in prime power dimensions given in \cite{Wootters1989} for $m=2$, $m=d$, and $m=d+1$ measurements. It is important to highlight that the construction used above, based on the Heisenberg-Weyl operators, is an equivalent reformulation of this construction for prime dimensions \cite{bandyopadhyay2002new}. From Eq.\,\eqref{Bounds_incompatibility_projective}, it follows directly that $\Idiamond(\mathcal{M}_\mathbf{p}) = 1-\dfrac{T}{m}$, since the upper and lower bound coincide for these cases. Note that while this result holds only for this special construction, it was conjectured \cite{PhysRevLett.122.050402} that $\eta = \dfrac{dT-m}{dm-m}$ holds for all constructions of \ac{MUB} and $m=2$, $m=d$, and $m=d+1$. 
Note further that the bounds in Eq.\,\eqref{Bounds_incompatibility_MUB} can also be used to study cases where it is not known whether \ac{MUB} exist. For instance, if there exists a set of $m=4$ \ac{MUB} in $d=6$, the \ac{WMA} needs to have an incompatibility in between $0.4438 \leq \Idiamond(\mathcal{M}_\mathbf{p}) \leq 0.5357$.  \\
\indent To conclude this section, we want to emphasize that analogous discussions to obtain bounds on the resource quantifier $\Rdiamond(\mathcal{M}_\mathbf{p})$ can be made for any \ac{QRT} with a free set $\mathcal{F}$ that can be described by \ac{SDP} constraints. For instance, we show in Appendix~\ref{Append_inform_and_coherence} that the informativeness $\IFdiamond(\mathcal{M}_\mathbf{p})$ of rank-$1$ projective measurements is given by $\IFdiamond(\mathcal{M}_\mathbf{p}) = 1 - \dfrac{1}{d}$ for any probability distribution $\mathbf{p}$. Note that since the set $\mathscr{F}_{\mathrm{UI}}$ of \ac{UI} assemblages (see Eq.\,\eqref{informativeness_free}) has a much simpler structure than the set of \ac{JM} measurements, it is also easier to obtain exact expressions. 

\section{Tightness of the Hierarchy}
\label{section_tightness}
It is particularly interesting to study the optimal conversion of one resource to another, i.e., to study for which measurements (and states) the bounds in Eq.\,\eqref{full_hierarchy} are tight. Obviously, for \ac{UI} measurements it holds $\IFdiamond(\mathcal{M}_\mathbf{p}) = 0$ and all bounds are trivially tight. We study nontrivial cases of resource equivalences where $\IFdiamond(\mathcal{M}_\mathbf{p}) = \Cdiamond(\mathcal{M}_\mathbf{p})$ and $\Idiamond(\mathcal{M}_\mathbf{p}) = \mathrm{S}(\Vec{\sigma}_{\mathbf{p}})$ holds. We start with the latter. \\
\indent Incompatibility and steerability are known to be deeply connected and equivalences have been reported for robustness and weight-based quantifiers \cite{PhysRevLett.115.230402,PhysRevA.93.052112}. We consider again the situation of uniformly distributed measurements, i.e., $p(x) = 1/m$. Let $\rho = \lvert \Phi^+ \rangle \langle \Phi^+ \rvert$ be the maximally entangled state, where $\lvert \Phi^+ \rangle = \dfrac{1}{\sqrt{d}}\sum_{i = 0}^{d-1} \vert ii \rangle$. It is readily verified that 
\begin{align}
\sigma_{a \vert x} = \mathrm{Tr}_1[(M_{a \vert x} \otimes \mathds{1}) \rho] = \dfrac{M_{a \vert x}^T}{d},   \label{transposed_assemlage} 
\end{align}
 where the transposition is with respect to the computational basis. Using the state assemblage $\Vec{\sigma} = \lbrace \sigma_{a \vert x} \rbrace$ obtained via Eq.\,\eqref{transposed_assemlage} is the standard approach to map incompatibility problems to steering problems and proves also to be useful here. In section~\ref{section_SDP}, we showed that for the construction of \ac{MUB} in \cite{Wootters1989} and $m=2$, $m=d$, $m=d+1$ measurements $\Idiamond(\mathcal{M}_\mathbf{p}) = 1-\dfrac{T}{m}$ holds,
 where $T = \lVert \sum_{a,x} v^*(a \vert x, \lambda) M_{a \vert x} \rVert_{\infty}$. It follows that $1-\dfrac{T}{m} \geq \mathrm{S}(\Vec{\sigma}_{\mathbf{p}})$. Using the state assemblage $\Vec{\sigma}$ obtained from Eq.\,\eqref{transposed_assemlage}, it is possible to show that this bound is indeed fulfilled. To show this, we employ the steering inequality formulation of  $\mathrm{S}(\Vec{\sigma}_{\mathbf{p}})$ as discussed in Eq.~\eqref{Steering_inequality}, which we repeat here for convenience:
\begin{align}
\mathrm{S}(\Vec{\sigma}_{\mathbf{p}}) = \underset{G_{a,x},\ell}{\max} \sum_{a,x} p(x) \mathrm{Tr}[\sigma_{a \vert x} G_{a \vert x}] - \ell, \end{align}
where $\ell = \underset{\Vec{\tau} \in \mathrm{\ac{LHS}}}{\max} \sum\limits_{a,x} p(x) \mathrm{Tr}[\tau_{a \vert x} G_{a \vert x}]$ is the classical bound obeyed by all unsteerable assemblages $\Vec{\tau} \in \mathrm{\ac{LHS}}$ and the $G_{a \vert x}$ are positive semidefinite matrices such that $\lVert  G_{a \vert x} \rVert_{\infty} \leq 1$. We want to emphasize that $\ell$ can equivalently be defined in a way that it has to satisfy $\ell \mathds{1} \geq  \sum_{a,x} p(x) v(a \vert x, \lambda) G_{a \vert x}$ for all $\lambda$. By multiplying both sides of the inequalities with the hidden states $\sigma_{\lambda}$ and taking the trace trace afterwards it follows $\ell \geq \underset{\Vec{\tau} \in \mathrm{LHS}}{\max} \sum\limits_{a,x} p(x) \mathrm{Tr}[\tau_{a \vert x} G_{a \vert x}]$ and the equality follows from the fact that we maximize over $\ell$. Now, by choosing $G_{a \vert x} = M_{a \vert x}^T$ and $\ell = \dfrac{T}{m}$, clearly all constraints are fulfilled and the steerability is lower bounded by $\mathrm{S}(\Vec{\sigma}_{\mathbf{p}}) \geq 1 -\dfrac{T}{m}$, which coincides with the upper bound. Therefore, it follows that $\Idiamond(\mathcal{M}_\mathbf{p})  = \mathrm{S}(\Vec{\sigma}_{\mathbf{p}})$. While this result is only valid for $m=2$, $m=d$, and $m=d+1$ and the special construction of \ac{MUB} in \cite{Wootters1989} we conjecture that the equivalence between incompatibility and steerability holds for general constructions of \ac{MUB} and $2 \leq m \leq d+1$. \\
\indent We searched numerically for other cases with an equality between incompatibility and steerability. However, apart from the case of generic qubit projective measurements we were not able to identify any other scenarios. Note that this finding deviates from the observations for consistent weight and robustness quantifiers studied by Cavalcanti et al.~\cite{PhysRevA.93.052112}, where an equivalence between incompatibility and steerability was found for all assemblages. This difference is not artificial, as it remains even if we include the consistency constraint for the steerability below Eq.~\eqref{Steering_quantifier_distance}. \\
 \indent The second equivalence of resources we want to discuss is that between the informativeness and the coherence of assemblages. More precisely, we discuss when $\IFdiamond(\mathcal{M}_\mathbf{p}) = \Cdiamond(\mathcal{M}_\mathbf{p})$ holds. Interestingly, this equivalence is achieved by \acp{WMA} $\mathcal{M}_\mathbf{p}$ that are mutually unbiased to the set of projective measurements onto the incoherent basis $\lbrace \vert i \rangle \langle i \rvert \rbrace$. To see this, we note first that $\IFdiamond(\mathcal{M}_\mathbf{p}) = 1-\dfrac{1}{d}$ holds for all rank-$1$ projective measurements as shown in section~\ref{section_SDP}. From there it follows that the coherence of \ac{MUB} is bounded by $1-\dfrac{1}{d} \geq \Cdiamond(\mathcal{M}_\mathbf{p})$. To show that this bound can be achieved, we use the dual formulation of $\Cdiamond(\mathcal{M}_\mathbf{p})$. More specifically, $\Cdiamond(\mathcal{M}_\mathbf{p})$ is given by
 \begin{align}
  \Cdiamond(\mathcal{M}_\mathbf{p}) = \underset{C_{a \vert x},\ell_{x,i}}{\mathrm{max}} \sum_{a,x} p(x) \mathrm{Tr}[M_{a \vert x} C_{a \vert x}] - \sum_{x,i} \ell_{x,i},  
 \end{align}
 where the $\ell_{x,i}$ are scalars such that $\ell_{x,i} \geq p(x) \mathrm{Tr}[C_{a \vert x} \lvert i \rangle \langle i \rvert] \ \forall \ a,x,i,$ and the $C_{a \vert x}$ are matrices such that  $0 \leq C_{a \vert x} \leq \rho_x \ \forall \ a,x$, where the $\rho_x$ are quantum states. The optimal solutions of the dual problem are $C_{a \vert x} = \dfrac{M_{a \vert x}}{d}$ and $\ell_{x,i} = \dfrac{p(x)}{d^2}$. Clearly, these choices are feasible, which can be verified by using that $\mathrm{Tr}[\vert i \rangle \langle i \vert M_{a \vert x}] = \dfrac{1}{d}$, due to the unbiasedness of $\mathcal{M}$ and the incoherent basis $\lbrace \vert i \rangle \langle i \rvert \rbrace$. Further, they are optimal since they lead to $\Cdiamond(\mathcal{M}_\mathbf{p}) \geq 1-\dfrac{1}{d}$, which coincides with the upper bound. \\
 \indent The fact that measurements that are mutually unbiased to the incoherent basis maximize the coherence is very similar to the situation for quantum states \cite{Purity_resource}. There, for a fixed spectrum, the coherence is maximized by states that have an eigendecomposition in a mutually unbiased basis with respect to the incoherent basis. Note that the measurements within $\mathcal{M}$ do not need to be \ac{MUB} measurement assemblages themselves, as long as they are mutually unbiased to the incoherent bases. Indeed, we show in Appendix~\ref{Append_inform_and_coherence} that the \ac{CGLMP} measurements defined via Eq.\,\eqref{CGLMP_Alice} and Eq.\,\eqref{CGLMP_Bob} also maximize the coherence in the sense that $\IFdiamond(\mathcal{M}_\mathbf{p}) = \Cdiamond(\mathcal{M}_\mathbf{p}) = 1-\dfrac{1}{d}$. Note that it is known that the maximal coherence of a single \ac{POVM} in terms of the generalized robustness can be achieved by measurements in the Fourier basis of the incoherent basis \cite{Oszmaniec2019} .  \\
 \indent Let us briefly comment on the other two inequalities of the hierarchy in Eq.~\eqref{full_hierarchy}. The remaining two inequalities are $\Cdiamond(\mathcal{M}_\mathbf{p}) \geq \Idiamond(\mathcal{M}_\mathbf{p})$ and $\mathrm{S}(\Vec{\sigma}_{\mathbf{p}_A}) \geq \mathrm{N}(\mathbf{q}_{\mathbf{p}})$.  While the relation between steering and nonlocality is notoriously hard to study, even for two-qubit states, the connection between coherence and incompatibility has only recently gained some attention \cite{PhysRevA.105.012205,PhysRevLett.126.220404}. Our numerical search suggests that both bounds are true inequalities in non-trivial scenarios. However, future research is needed to come to a conclusion. 

\section{Conclusion and Outlook}

Quantifying quantum advantages plays an important role in modern quantum information
theory, particularly in the framework of \acp{QRT}. The quantification of measurement resources has developed historically in a different direction than the quantification of state resources, as it was unclear how a distance-based approach can be applied to measurements and especially sets of measurements in a meaningful way. Instead, weight-based and especially robustness-based quantifiers were predominantly used for quantifying measurement resources so far. \\
\indent In the present work, we have solved this problem by introducing the general notion of distance-based resource quantification for sets of measurements. 
We have studied which prerequisites are necessary for a function to be a proper distance between measurement assemblages and showed that every such distance induces a resource monotone for any convex \ac{QRT}. We have proposed one particular quantifier, based on the diamond norm, with a clear operational meaning in terms of the optimal single-shot distinguishability of different measurement assemblages. \\
\indent On the basis of this particular quantifier, we have established a hierarchy of measurement resources in Theorem~\ref{thrm_hierarchy} and showed that recently introduced steering \cite{PhysRevA.97.022338} and nonlocality quantifiers \cite{PhysRevA.97.022111} fit naturally into this hierarchy. Furthermore, we have shown that our quantifier can be studied numerically and analytically in terms of \acp{SDP}. We have used this insight to establish analytical upper and lower bounds on the incompatibility of any measurement assemblage in Theorem~\ref{thrm_bounds_incomp}. 
Noteworthy, by focussing on rank-$1$ projective measurements, we have shown that the bounds on the incompatibility in Corollary~\ref{Corollary_projective_measurements} are tight for particular \ac{MUB} measurement assemblages, which play a special role in the established measurement hierarchy. More precisely, we showed in section~\ref{section_tightness} that the incompatibility of \ac{MUB} measurement assemblages attains the same value as the steerability of the state assemblages obtained from performing these measurements on one part of a maximally entangled state. Furthermore, we showed that measurements that are mutually unbiased to the incoherent basis maximize the coherence among all rank-$1$ projective measurements. \\
\indent
It would be interesting to see which insights can be obtained when distance-based quantifiers like the one presented here are studied for other resource theories like projective-simulability \cite{PhysRevLett.119.190501} or the \ac{QRT} of imaginarity \cite{PhysRevLett.126.090401} applied to measurements. 
Furthermore, distance-based quantifiers should also be compared to possible entropic resource quantifiers of measurement assemblages. 
So far, entropic quantifiers have only been considered very recently \cite{PhysRevA.106.022401} for a single \ac{POVM}. 
With the definition of a distance between measurement assemblages, it is also possible to study the continuity of functions of measurement resources, which could be of independent interest for robust self-testing \cite{Supic2020selftestingof} or measurement tomography \cite{PhysRevLett.83.3573}.

\begin{acknowledgments}
We thank Lennart Bittel, Federico Grasselli, and Gl\'aucia Murta for helpful discussions.
This research was partially supported
by the EU H2020 QuantERA ERA-NET Cofund in
Quantum Technologies project QuICHE, and by the Federal Ministry of Education and Research (BMBF). 
MK's work is supported by the Deutsche Forschungsgemeinschaft (DFG, German Research Foundation) -- project number 441423094. 
\\[1em]
\end{acknowledgments}

\appendix

\section{Proof of Theorem \ref{thrm1}} 
\label{Append_thrm1}
Here, we show that the function $\Ddiamond(\mathcal{M}_\mathbf{p}, \mathcal{N}_\mathbf{p})$ defined in Eq.~\eqref{assemblage_distance} is a jointly-convex distance function between the two \acp{WMA} $\mathcal{M}_\mathbf{p}$ and $\mathcal{N}_\mathbf{p}$. In the following, we use that a measure-and-prepare channel (see Eq.~\eqref{measure_prepare}) corresponding to setting $x$ of the assemblage $\mathcal{M}$ applied to the first subsystem of a bipartite state is given by 
\begin{align}
 (\Lambda_{\mathcal{M}_x} \otimes \mathds{1})(\rho) = \sum_a (\lvert a \rangle \langle a \rvert \otimes \mathrm{Tr}_1[(M_{a \vert x} \otimes \mathds{1}) \rho]).  
\end{align}
Furthermore, we introduce the quantities $\sigma_{a \vert x}(\rho) = \mathrm{Tr}_1[(M_{a \vert x} \otimes \mathds{1}) \rho]$ and $\tau_{a \vert x}(\rho) = \mathrm{Tr}_1[(N_{a \vert x} \otimes \mathds{1}) \rho]$.

\begin{theorem1}
The function $\Ddiamond(\mathcal{M}_\mathbf{p}, \mathcal{N}_\mathbf{p})$ is a distance function between the \acp{WMA} $\mathcal{M}_\mathbf{p}$ and $\mathcal{N}_\mathbf{p}$, i.e., it fulfils all the conditions stated in Definition~\ref{Def_Distance}. Moreover, $\Ddiamond(\mathcal{M}_\mathbf{p}, \mathcal{N}_\mathbf{p})$ is jointly-convex.
\end{theorem1}
\begin{proof}
We start by writing $\Ddiamond(\mathcal{M}_\mathbf{p}, \mathcal{N}_\mathbf{p})$ in a more convenient form.
More precisely, we use in the following that the triangle inequality for the trace norm $\lVert \cdot \rVert_1$ results in an equality due to the support on different subspaces of the terms with different $a$ within the sum over the outcomes $a$. Furthermore, we used the multiplicity of the trace norm under tensor products and the fact that $\lVert \lvert a \rangle \langle a \rvert \rVert_1 = 1 \ \forall \ a$. It follows that
\begin{widetext}
\begin{align}
\Ddiamond(\mathcal{M}_\mathbf{p},\mathcal{N}_\mathbf{p}) \label{Append_steering_interpretation} &=\dfrac{1}{2} \sum_{x} p(x) \max\limits_{\rho} \lVert \sum_a \lvert a \rangle \langle a \rvert \otimes [\sigma_{a \vert x}(\rho) - \tau_{a \vert x}(\rho)] \rVert_1  \\ 
&=\dfrac{1}{2} \sum_{x} p(x) \max\limits_{\rho} \sum_a \lVert  \lvert a \rangle \langle a \rvert \otimes [\sigma_{a \vert x}(\rho) - \tau_{a \vert x}(\rho)] \rVert_1 \nonumber \\
&= \dfrac{1}{2} \sum_{x} p(x) \max\limits_{\rho} \sum_a \lVert  \sigma_{a \vert x}(\rho) - \tau_{a \vert x}(\rho) \rVert_1. \nonumber
\end{align} 
We use the form of $\Ddiamond(\mathcal{M}_\mathbf{p},\mathcal{N}_\mathbf{p})$ according to Eq.~\eqref{Append_steering_interpretation} in the following. \\
\indent Clearly $\Ddiamond(\mathcal{M}_\mathbf{p}, \mathcal{N}_\mathbf{p})$ is a non-negative function with $\Ddiamond(\mathcal{M}_\mathbf{p}, \mathcal{N}_\mathbf{p}) = 0$ if and only if $\mathcal{M} = \mathcal{N}$. The triangle inequality is obeyed due to the linearity of $\Ddiamond(\mathcal{M}_\mathbf{p}, \mathcal{N}_\mathbf{p})$ in the trace norm. The monotonicity under quantum channel $\Lambda^{\dagger}$ follows from direct calculation,
\begin{align}
\Ddiamond(\mathcal{M}_\mathbf{p}, \mathcal{N}_\mathbf{p}) 
&= \dfrac{1}{2} \sum_x p(x) \max\limits_{\rho} \sum_a \lVert  \mathrm{Tr}_1[((M_{a \vert x}-N_{a \vert x}) \otimes \mathds{1}) \rho]  \rVert_1  \\ 
&\geq \dfrac{1}{2} \sum_x p(x) \max\limits_{\rho'} \sum_a \lVert  \mathrm{Tr}_1[((M_{a \vert x}-N_{a \vert x}) \otimes \mathds{1}) (\Lambda \otimes \mathds{1})(\rho')] \rVert_1 \nonumber \\
&= \dfrac{1}{2} \sum_x p(x) \max\limits_{\rho'} \sum_a \lVert  \mathrm{Tr}_1[(\Lambda^{\dagger}(M_{a \vert x}-N_{a \vert x}) \otimes \mathds{1}) \rho'] \rVert_1 \nonumber \\
&= \Ddiamond(\Lambda^{\dagger}(\mathcal{M})_\mathbf{p}, \Lambda^{\dagger}(\mathcal{N})_\mathbf{p}), \nonumber
\end{align}
where we introduced in the second line a new quantum state $\rho'$ (acting on a possibly different Hilbert space) and a \ac{CPT} map $\Lambda$ acting on the first subsystem of $\rho'$. The resulting state $(\Lambda \otimes \mathds{1})(\rho')$ is clearly already included in the optimization of the first line, hence the inequality. In the third line, we used the fact that we can swap the evolution under the \ac{CPT} map $\Lambda$ onto the POVMs by introducing the adjoint channel $\Lambda^{\dagger}$. This results exactly in the definition of $\Ddiamond(\Lambda^{\dagger}(\mathcal{M})_\mathbf{p}, \Lambda^{\dagger}(\mathcal{N})_\mathbf{p})$, from which the monotonicity under quantum channel $\Lambda^{\dagger}$ follows. \\
\indent The monotonicity under classical simulations $\xi$ can be shown in a similar fashion,
\begin{align}
\mathrm{D}(\xi(\mathcal{M}_\mathbf{p})_\mathbf{q}, \xi(\mathcal{N}_\mathbf{p})_\mathbf{q}) &= \dfrac{1}{2} \sum_y q(y) \max\limits_{\rho} \sum_b \lVert  \mathrm{Tr}_1[((M'_{b \vert y}-N'_{b \vert y}) \otimes \mathds{1}) \rho] \rVert_1  \\
&= \dfrac{1}{2} \sum_y q(y) \max\limits_{\rho} \sum_b \lVert \sum_x p(x \vert y) \sum_a q(b \vert y,x,a) \nonumber \mathrm{Tr}_1[((M_{a \vert x} - N_{a \vert x}) \otimes \mathds{1}) \rho]  \rVert_1 \nonumber \\
&\leq  \dfrac{1}{2} \sum_y q(y) \max\limits_{\rho} \sum_b  \sum_x p(x \vert y) \sum_a q(b \vert y,x,a) \lVert \mathrm{Tr}_1[((M_{a \vert x} - N_{a \vert x}) \otimes \mathds{1}) \rho]  \rVert_1 \nonumber \\
&= \dfrac{1}{2} \sum_y q(y) \max\limits_{\rho}   \sum_x p(x \vert y) \sum_a \lVert \mathrm{Tr}_1[((M_{a \vert x} - N_{a \vert x}) \otimes \mathds{1}) \rho]  \rVert_1 \nonumber \\
&\leq \dfrac{1}{2} \sum_{x,y} q(y) p(x \vert y) \max\limits_{\rho}  \sum_a \lVert \mathrm{Tr}_1[((M_{a \vert x} - N_{a \vert x}) \otimes \mathds{1}) \rho]  \rVert_1 \nonumber \\
&= \dfrac{1}{2} \sum_{x} p(x) \max\limits_{\rho}  \sum_a \lVert \mathrm{Tr}_1[((M_{a \vert x} - N_{a \vert x}) \otimes \mathds{1}) \rho]  \rVert_1 \nonumber \\
&= \Ddiamond(\mathcal{M}_\mathbf{p}, \mathcal{N}_\mathbf{p}), \nonumber
\end{align}
\end{widetext}
where we used the following properties. In the first line, we used the definition of $\mathrm{D}(\xi(\mathcal{M}_\mathbf{p})_\mathbf{q}, \xi(\mathcal{N}_\mathbf{p})_\mathbf{q})$ by introducing the assemblages $\mathcal{M}'_\mathbf{q}$ and $\mathcal{N}'_\mathbf{q}$ with measurement outcomes $b$ for the settings $y$, associated with the probability distribution $\mathbf{q}$. \\
\indent In the second line, we use that $ M'_{b \vert y} = \sum_x p(x \vert y) \sum_a q(b \vert y,x,a) M_{a \vert x}$ and the analogous expression for $N'_{b \vert y}$. In the third line, we used the triangle inequality. In the fourth line, we performed the sum over $b$. In the fifth line, we interchanged the maximization with the sum over $x$, which leads to more degrees of freedom since we can now choose a different $\rho$ for each $x$. Finally, in the sixth line, we used that $\sum_y q(y) p(x \vert y) = p(x)$, which leads exactly to the definition of $\Ddiamond(\mathcal{M}_\mathbf{p}, \mathcal{N}_\mathbf{p})$ from which the monotonicity under classical simulations $\xi(\mathcal{M}_\mathbf{p})_\mathbf{q}$ follows. \\
\indent The joint-convexity of $\Ddiamond(\mathcal{M}_\mathbf{p}, \mathcal{N}_\mathbf{p})$ can be seen by first considering the joint-convexity of the diamond distance $\Ddiamond(\Lambda_{\mathcal{M}_x}, \Lambda_{\mathcal{N}_x})$ between measure and prepare channels $\Lambda_{\mathcal{M}_x}$ and $\Lambda_{\mathcal{N}_x}$. The joint-convexity of a norm induced distance follows from the triangle inequality and the absolute homogeneity of the norm, which implies its convexity. It now follows that
\begin{align}
&\Ddiamond(\eta \mathcal{M}^{(1)}_\mathbf{p} + (1-\eta) \mathcal{M}^{(2)}_\mathbf{p}, \eta \mathcal{F}^{(1)}_\mathbf{p} + (1-\eta) \mathcal{F}^{(2)}_\mathbf{p}) \\
&= \sum_x p(x) \Ddiamond(\eta \Lambda_{\mathcal{M}^{(1)}_x} + (1-\eta) \Lambda_{\mathcal{M}^{(2)}_x},\eta \Lambda_{\mathcal{F}^{(1)}_x} + (1-\eta) \Lambda_{\mathcal{F}^{(2)}_x}) \nonumber \\
&\leq  \sum_x p(x) [\eta \Ddiamond(\Lambda_{\mathcal{M}^{(1)}_x},\Lambda_{\mathcal{F}^{(1)}_x}) + (1-\eta) \Ddiamond(\Lambda_{\mathcal{M}^{(2)}_x},\Lambda_{\mathcal{F}^{(2)}_x})] \nonumber \\
&=\eta \mathrm{D}(\mathcal{M}^{(1)}_\mathbf{p}, \mathcal{F}^{(1)}_\mathbf{p}) + (1-\eta) \mathrm{D}(\mathcal{M}^{(2)}_\mathbf{p}, \mathcal{F}^{(2)}_\mathbf{p}), \nonumber   
\end{align}
which concludes the proof.
\end{proof}

\section{Steerability as upper bound to nonlocality} 
\label{Append_steering_bound_nonlocality}

Here, we show that the steerability of a state assemblage $\Vec{\sigma}_{\mathbf{p}_A} $ upper bounds the nonlocality of any probability distribution $\mathbf{q}_{\mathbf{p}}$ obtained from it. This completes the proof of Theorem~\ref{thrm_hierarchy}.
\begin{Lemma}
\label{Lem_steering_vs_nonlocality}
 Let $\Vec{\sigma}_{\mathbf{p}_A} = (\Vec{\sigma},{\mathbf{p}_A})$ be any state assemblage weighted with the probability distribution $\mathbf{p}_A$, $\mathcal{N}_{\mathbf{p}_B}$ any \ac{WMA} of appropriate dimension and $\mathbf{q}_\mathbf{p} = (\mathbf{q},\mathbf{p})$ a probability distribution obtained via $q(a,b\vert x,y) = \mathrm{Tr}[N_{b \vert y} \sigma_{a \vert x} ]$ and $p(x,y) = p_A(x)p_B(y)$. Then, it holds that
 \begin{align}
  \mathrm{S}(\Vec{\sigma}_{\mathbf{p}_A} ) \geq \mathrm{N}(\mathbf{q}_\mathbf{p}).   
 \end{align}
\end{Lemma}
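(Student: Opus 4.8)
The plan is to mimic the strategy used in the proof of Theorem~\ref{thrm_hierarchy} for the bound $\Idiamond(\mathcal{M}_{\mathbf{p}_A}) \geq \mathrm{S}(\Vec{\sigma}_{\mathbf{p}_A})$: since $\mathrm{N}(\mathbf{q}_\mathbf{p})$ in Eq.~\eqref{nonlocality_distance} is an infimum over local distributions $\mathbf{t} \in \mathrm{LHV}$, it suffices to exhibit one good feasible point. Let $\Vec{\tau}^{*} = \lbrace \tau^{*}_{a \vert x} \rbrace \in \mathrm{LHS}$ be an optimal (closest) unsteerable assemblage for $\Vec{\sigma}$ in Eq.~\eqref{Steering_quantifier_distance}, so that $\tau^{*}_{a \vert x} = \sum_\lambda v_A(a \vert x, \lambda) \sigma_\lambda$ with $\sigma_\lambda \geq 0$ and $\mathrm{Tr}[\sum_\lambda \sigma_\lambda] = 1$. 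Define the candidate distribution $t(a,b \vert x,y) \coloneqq \mathrm{Tr}[N_{b \vert y} \tau^{*}_{a \vert x}]$, i.e., the statistics Bob would observe if Alice's conditional states actually came from this local hidden state model.

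First I would verify that $\mathbf{t} \in \mathrm{LHV}$. Writing $\pi(\lambda) = \mathrm{Tr}[\sigma_\lambda]$ and, for $\pi(\lambda) > 0$, $\rho_\lambda = \sigma_\lambda / \pi(\lambda)$, one obtains $t(a,b \vert x,y) = \sum_\lambda \pi(\lambda)\, v_A(a \vert x, \lambda)\, \mathrm{Tr}[N_{b \vert y} \rho_\lambda]$. The factor $\mathrm{Tr}[N_{b \vert y} \rho_\lambda]$ is a (generally non-deterministic) probability distribution in $b$ for each $y, \lambda$; since any local hidden variable model with stochastic local responses can be rewritten with deterministic responses by enlarging the hidden variable (splitting $\lambda$ and absorbing the local randomness), $\mathbf{t}$ admits a decomposition of the form Eq.~\eqref{LHV_model}, hence $\mathbf{t} \in \mathrm{LHV}$.

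Next I would bound the $\ell_1$ distance. Using $\mathbf{t}$ as a feasible point and $q(a,b \vert x,y) - t(a,b \vert x,y) = \mathrm{Tr}[N_{b \vert y}(\sigma_{a \vert x} - \tau^{*}_{a \vert x})]$, the key ingredient is the elementary inequality $\sum_b \lvert \mathrm{Tr}[N_{b \vert y} X] \rvert \leq \lVert X \rVert_1$, valid for any Hermitian $X$ and any POVM $\lbrace N_{b \vert y} \rbrace_b$ (decompose $X = X_{+} - X_{-}$ into positive and negative parts and use $\mathrm{Tr}[N_{b \vert y} X_{\pm}] \geq 0$ together with $\sum_b N_{b \vert y} = \mathds{1}$). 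Applying this with $X = \sigma_{a \vert x} - \tau^{*}_{a \vert x}$ and then carrying out the sum over $y$ via $\sum_y p_B(y) = 1$ yields
\begin{align}
\mathrm{N}(\mathbf{q}_\mathbf{p}) &\leq \dfrac{1}{2} \sum_{a,b,x,y} p(x,y)\, \lvert \mathrm{Tr}[N_{b \vert y}(\sigma_{a \vert x} - \tau^{*}_{a \vert x})] \rvert \nonumber \\
&\leq \dfrac{1}{2} \sum_{a,x,y} p_A(x) p_B(y)\, \lVert \sigma_{a \vert x} - \tau^{*}_{a \vert x} \rVert_1 = \dfrac{1}{2} \sum_{a,x} p_A(x)\, \lVert \sigma_{a \vert x} - \tau^{*}_{a \vert x} \rVert_1 = \mathrm{S}(\Vec{\sigma}_{\mathbf{p}_A}),
\end{align}
the last equality because $\Vec{\tau}^{*}$ is optimal in Eq.~\eqref{Steering_quantifier_distance}.

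The main obstacle I expect is the bookkeeping in the second step, namely making precise that pushing an LHS assemblage through Bob's POVMs produces a genuine LHV distribution in the deterministic-response form Eq.~\eqref{LHV_model}; this is conceptually standard but requires care with the normalization convention ($\mathrm{Tr}[\sum_\lambda \sigma_\lambda] = 1$ rather than $\sum_\lambda \sigma_\lambda = \mathds{1}$), the vanishing-weight terms $\pi(\lambda) = 0$, and the reduction from stochastic to deterministic local responses. The remaining steps are a single application of the triangle inequality together with the POVM contraction bound.
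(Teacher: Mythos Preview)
Your proposal is correct and follows essentially the same approach as the paper: take the optimal $\mathrm{LHS}$ assemblage $\Vec{\tau}^{*}$, push it through Bob's measurements to obtain a feasible local distribution for $\mathrm{N}(\mathbf{q}_\mathbf{p})$, and then use the POVM contraction bound $\sum_b \lvert \mathrm{Tr}[N_{b\vert y} X]\rvert \leq \mathrm{Tr}[N_{b\vert y}\,\lvert X\rvert]$ summed to $\lVert X\rVert_1$ together with $\sum_y p_B(y)=1$. The paper simply asserts that unsteerable assemblages yield local distributions, whereas you spell out the stochastic-to-deterministic reduction explicitly, but this is the only (minor) difference.
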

\begin{proof}
Let $\Vec{\tau}^{*}$ be the closest \ac{LHS} assemblage to $\Vec{\sigma}$ with respect to the quantifier $\mathrm{S}(\Vec{\sigma}_{\mathbf{p}_A} )$. We use the fact that unsteerable assemblages always lead to local probability distributions. It follows that
\begin{align}
\mathrm{N}(\mathbf{q}_\mathbf{p}) &\leq \dfrac{1}{2} \sum\limits_{a,b,x,y} p(x,y) \lvert \mathrm{Tr}[N_{b \vert y}(\sigma_{a \vert x}-\tau^*_{a \vert x})] \rvert \\
&\leq \dfrac{1}{2} \sum_{a,b,x,y} p(x,y) \mathrm{Tr}[N_{b \vert y} \lvert \sigma_{a \vert x} - \tau^*_{a \vert x} \rvert] \nonumber \\
&= \dfrac{1}{2} \sum_{a,x} p_A(x) \mathrm{Tr}[ \lvert \sigma_{a \vert x} - \tau^*_{a \vert x} \rvert] \nonumber \\
&= \dfrac{1}{2} \sum_{a,x} p_A(x) \lVert \sigma_{a \vert x} - \tau^*_{a \vert x} \rVert_1 \nonumber = \mathrm{S}(\Vec{\sigma}_{\mathbf{p}_A} ) \nonumber,
\end{align}
where we used in the first line the definition in Eq.~\eqref{nonlocality_distance} of the nonlocality $\mathrm{N}(\mathbf{q}_\mathbf{p})$ and the fact that any measurement on the closest LHS assemblage $\Vec{\tau}^{*}$ (with respect to $\Vec{\sigma}$) results in a local probability distribution. In the second line, we used some basic property of the absolute value and the fact that we can always decompose the difference of two Hermitian matrices like $ \sigma_{a \vert x} - \tau^*_{a \vert x} = T_{a \vert x}- S_{a \vert x}$, where $T_{a \vert x}$ and $S_{a \vert x}$ are positive operators with orthogonal support. It follows that $\vert  \sigma_{a \vert x} - \tau^*_{a \vert x} \vert = T_{a \vert x}+S_{a \vert x}$, where $\vert X \vert = \sqrt{X^{\dagger} X}$. Finally, we used in the third line that $\sum_b N_{b \vert y} = \mathds{1}_d \ \forall \ y$, $\sum_y p(x,y) = p_A(x)$, and the definition of $\mathrm{S}(\Vec{\sigma}_{\mathbf{p}_A})$. Therefore, it follows that the steerability is an upper bound to the nonlocality. 
\end{proof}

\section{Dual formulation of steerability and nonlocality} 
\label{Append_steering_nonlocality_dual}

Here, we show that the steerability $\mathrm{S}(\Vec{\sigma}_{\mathbf{p}_A} )$ and the nonlocality $\mathrm{N}(\mathbf{q}_\mathbf{p})$ can be understood as optimal steering, respectively Bell inequality violation.
\begin{theorem}
Let $\mathrm{S}(\Vec{\sigma}_{\mathbf{p}_A} )$ be the steerability of the state assemblage $\Vec{\sigma}_{\mathbf{p}_A} $. The steerability $\mathrm{S}(\Vec{\sigma}_{\mathbf{p}_A} )$ can be reformulated as the violation of an optimized steering inequality given by
\begin{align}
\mathrm{S}(\Vec{\sigma}_{\mathbf{p}_A} ) = \underset{G_{a,x}, \ell}{\max} \sum_{a,x} p_A(x) \mathrm{Tr}[\sigma_{a \vert x} G_{a \vert x}] - \ell,
\label{append_steering_ineq}
\end{align}
where $\ell=\underset{\Vec{\tau} \in \mathrm{LHS}}{\max} \sum\limits_{a,x} p_A(x) \mathrm{Tr}[\tau_{a \vert x} G_{a \vert x}]$ is the classical bound obeyed by all unsteerable assemblages $\Vec{\tau} \in \mathrm{LHS}$ and the $G_{a \vert x}$ are positive semidefinite matrices s.t. $\lVert G_{a \vert x} \rVert_{\infty} \leq 1$.\\ 
\indent Moreover, the nonlocality $\mathrm{N}(\mathbf{q}_\mathbf{p})$ of any probability distribution $\mathbf{q}_\mathbf{p}$ can be reformulated as the violation of an optimized Bell inequality 
\begin{align}
\mathrm{N}(\mathbf{q}_\mathbf{p}) = \underset{C_{ab \vert xy},\ell'}{\max}\sum_{a,b,x,y} p(x,y) C_{ab\vert xy} q(a,b \vert x,y) - \ell' , 
\end{align}
where $\ell' = \underset{\mathbf{t} \in \mathrm{LHV}}{\max} \sum\limits_{a,b,x,y} p(x,y) C_{ab\vert xy} t(a,b \vert x,y)$ is the local bound obeyed by all local correlations $\mathbf{t} \in \mathrm{LHV}$ and $C_{ab\vert xy}$ are Bell coefficients s.t. $0 \leq C_{ab\vert xy} \leq 1 $.
\end{theorem}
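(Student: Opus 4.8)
Both identities are instances of strong duality, and the plan is to prove them by one common three–step recipe. First I would replace the $\ell_1$–distance (resp.\ the total–variation distance) appearing in the definition~\eqref{Steering_quantifier_distance} of $\mathrm{S}(\Vec{\sigma}_{\mathbf{p}_A})$ (resp.\ \eqref{nonlocality_distance} of $\mathrm{N}(\mathbf{q}_\mathbf{p})$) by its variational characterisation, which turns the defining minimisation over $\mathrm{LHS}$ (resp.\ $\mathrm{LHV}$) into a min--max. Then I would interchange the two optimisations by a minimax theorem. Finally I would reparametrise the Hermitian (resp.\ real) dual variables to bring them into the claimed positive semidefinite, spectrally bounded (resp.\ $[0,1]$–valued) form, using the normalisation of the assemblages to absorb the constant that this produces.

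Concretely, for steering I would use the dual–norm identity $\tfrac12\lVert X\rVert_1=\max_{-\mathds{1}\le W\le\mathds{1},\,W=W^\dagger}\tfrac12\,\mathrm{Tr}[WX]$, valid for Hermitian $X$ (the optimiser is the signed spectral projector of $X$). Inserting this into Eq.~\eqref{Steering_quantifier_distance} gives
\begin{align}
\mathrm{S}(\Vec{\sigma}_{\mathbf{p}_A})=\min_{\Vec{\tau}\in\mathrm{LHS}}\ \max_{\{W_{a\vert x}\}}\ \tfrac12\sum_{a,x}p_A(x)\,\mathrm{Tr}\big[W_{a\vert x}(\sigma_{a\vert x}-\tau_{a\vert x})\big],
\end{align}
the $W_{a\vert x}$ ranging over Hermitian operators with $-\mathds{1}\le W_{a\vert x}\le\mathds{1}$. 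The set of $\mathrm{LHS}$ assemblages is convex and compact (it is the image under a fixed linear map of the compact convex set of hidden-state tuples $\{\sigma_\lambda\ge0,\ \mathrm{Tr}\sum_\lambda\sigma_\lambda=1\}$), the box of dual variables is convex and compact, and the objective is bilinear, hence jointly continuous; therefore a minimax theorem (von Neumann's, or Sion's) lets me swap $\min$ and $\max$. Writing $G_{a\vert x}=\tfrac12(W_{a\vert x}+\mathds{1})$, a bijection onto $\{0\le G_{a\vert x}\le\mathds{1}\}$, and using $\sum_a\mathrm{Tr}[\sigma_{a\vert x}]=\sum_a\mathrm{Tr}[\tau_{a\vert x}]=1$ together with $\sum_x p_A(x)=1$, the $\mathds{1}$–terms contribute $-\tfrac12$ both to the $\sigma$–part and, inside the inner maximum, to the $\tau$–part, and cancel; what remains is exactly Eq.~\eqref{append_steering_ineq} with $\ell=\max_{\Vec{\tau}\in\mathrm{LHS}}\sum_{a,x}p_A(x)\mathrm{Tr}[\tau_{a\vert x}G_{a\vert x}]$. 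The alternative description of $\ell$ used in Section~\ref{section_tightness} (namely $\ell\mathds{1}\ge\sum_{a,x}p_A(x)v(a\vert x,\lambda)G_{a\vert x}$ for all $\lambda$) then follows by noting that the inner maximum is attained at a vertex of the $\mathrm{LHS}$ polytope and multiplying that operator inequality by the hidden states $\sigma_\lambda$ before tracing.

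For nonlocality the argument is identical with scalars replacing operators: I would use $\tfrac12\lvert r\rvert=\max_{c\in[-1,1]}\tfrac12\,cr$, insert it into Eq.~\eqref{nonlocality_distance} to obtain a minimisation over the compact convex local polytope $\mathrm{LHV}$ and a maximisation over a box $[-1,1]$ (one coordinate per quadruple $a,b,x,y$), swap them by the same minimax theorem, and reparametrise $C_{ab\vert xy}=\tfrac12(c_{ab\vert xy}+1)\in[0,1]$; the normalisations $\sum_{a,b}q(a,b\vert x,y)=\sum_{a,b}t(a,b\vert x,y)=1$ and $\sum_{x,y}p(x,y)=1$ make the constant cancel, yielding the stated Bell-inequality form with local bound $\ell'$.

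The only genuinely non-routine step is justifying the min--max interchange: one must verify the compactness and convexity of the $\mathrm{LHS}$ set (and of the $\mathrm{LHV}$ polytope) and that the bilinear objective is jointly continuous, after which von Neumann/Sion applies verbatim; everything else is bookkeeping. A minor point to keep track of is that we do \emph{not} impose the consistency constraint below Eq.~\eqref{Steering_quantifier_distance}, but this is harmless for the constant-cancellation step since $\mathrm{LHS}$ assemblages are still trace-normalised, $\sum_a\mathrm{Tr}[\tau_{a\vert x}]=\mathrm{Tr}\sum_\lambda\sigma_\lambda=1$. Alternatively, one could bypass the minimax theorem altogether by writing the primal SDP (resp.\ LP) for $\mathrm{S}$ (resp.\ $\mathrm{N}$) explicitly and forming its Lagrange dual, with Slater's condition holding trivially.
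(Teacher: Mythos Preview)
Your argument is correct, and it takes a genuinely different route from the paper. The paper proceeds by explicit Lagrangian duality: for $\mathrm{N}(\mathbf{q}_\mathbf{p})$ it writes the Bell-inequality LP as the \emph{dual}, forms the Lagrangian, and recovers the total-variation primal; for $\mathrm{S}(\Vec{\sigma}_{\mathbf{p}_A})$ it starts from the primal, replaces each trace norm by its block-matrix SDP representation $\lVert Z\rVert_1=\min\{\tfrac12\mathrm{Tr}[Y_1]+\tfrac12\mathrm{Tr}[Y_2]:\begin{smallmatrix}Y_1&Z\\Z^\dagger&Y_2\end{smallmatrix}\ge0\}$, computes the Lagrange dual, and then massages the resulting $2\times2$ block constraints back into the spectral-norm bound $0\le G_{a\vert x}\le\mathds{1}$; strong duality is certified by Slater's condition. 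You instead insert the simpler dual-norm identity $\tfrac12\lVert X\rVert_1=\max_{-\mathds{1}\le W\le\mathds{1}}\tfrac12\mathrm{Tr}[WX]$ directly, obtain a bilinear min--max over compact convex sets, and invoke Sion/von~Neumann to swap; the affine shift $G=\tfrac12(W+\mathds{1})$ then delivers the stated form in one line. Your approach is shorter and conceptually cleaner (no auxiliary block matrices, no Slater check), while the paper's approach has the advantage of producing the explicit SDP/LP primal--dual pair in standard form, which is what one actually implements numerically and what is reused elsewhere (e.g.\ in Appendix~\ref{Append_explicit_SDP}). You correctly anticipate the paper's method in your final sentence.
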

\begin{proof}
The proof relies on the dual formulations of $\mathrm{S}(\Vec{\sigma}_{\mathbf{p}_A} ),$ which can be written in terms of an \ac{SDP}, and $\mathrm{N}(\mathbf{q}_\mathbf{p})$ which can be written as a linear program. We start with the nonlocality $\mathrm{N}(\mathbf{q}_\mathbf{p})$ by stating the optimization for an optimal Bell inequality violation given the distribution $\mathbf{q}$ and by showing that it is dual to $\mathrm{N}(\mathbf{q}_\mathbf{p})$. Note that all of the following optimization problems require the knowledge of the deterministic probability distributions of the corresponding problem, which for the nonlocality $\mathrm{N}(\mathbf{q}_\mathbf{p})$ are denoted by $\lbrace v_A(a \vert x, \lambda) v_B(b \vert y, \lambda) \rbrace$. Since these are fixed for a given problem and are trivially accessible, we will not treat them as input variables.  
\begin{align}
&\underline{\text{Dual problem (nonlocality):}} \label{optimal_Bell_dual} \\
&\mathrm{given}: \ \mathbf{q}_\mathbf{p}  \nonumber \\
&\underset{C_{ab \vert xy}, \ell'}{\mathrm{maximize}} \sum_{a,b,x,y} p(x,y) C_{ab \vert xy} q(a,b \vert x,y) -\ell' \nonumber \\
&\text{subject to:} \nonumber \\
&\ell' \geq \sum_{a,b,x,y} p(x,y) C_{ab \vert xy} v_A(a \vert x, \lambda) v_B(b \vert y, \lambda) \ \forall \ \lambda, \nonumber \\
&0 \leq C_{ab \vert xy} \leq 1 \ \forall \ a,b,x,y, \nonumber 
\end{align}
where $C_{ab \vert xy}$ are the Bell coefficients of the Bell inequality and $\ell'$ is the local bound. Note that $\ell' = \underset{\mathbf{t} \in \mathrm{LHV}}{\max} \sum\limits_{a,b,x,y} p(x,y) C_{ab\vert xy} t(a,b \vert x,y)$ follows directly from the first constraint. Remember that $\mathbf{t}$ admits an \ac{LHV} decomposition according to Eq.\,\eqref{LHV_model}. The equality can be seen by multiplying the constraints $\ell' \geq \sum_{a,b,x,y} p(x,y) C_{ab \vert xy} v_A(a \vert x, \lambda) v_B(b \vert y, \lambda) \ \forall \ \lambda$ with the probabilities $\pi(\lambda)$ before summing all the constraints together. This leads to the bound $\ell' \geq \underset{\mathbf{t} \in \mathrm{LHV}}{\max} \sum\limits_{a,b,x,y} p(x,y) C_{ab\vert xy} t(a,b \vert x,y)$, with $t(a,b \vert x,y)  = \sum\limits_{\lambda} \pi(\lambda) v_A(a \vert x, \lambda) v_B(b \vert y, \lambda)$. The equality follows from the fact that we maximize the objective function. \\
\indent Now, we show that the optimal value of the optimization in Eq.\,\eqref{optimal_Bell_dual} is equal to $\mathrm{N}(\mathbf{q}_\mathbf{p})$ by deriving the primal program. Note that this generally requires dealing with inequality constraints, which can be done by generalizing the method of Lagrange multipliers to using the Karush–Kuhn–Tucker conditions (see e.g.\,\cite{boyd_vandenberghe_2004}). However, since we are interested in formulating dual formulations of convex optimization problems, we can rely on simpler but less general conditions for the equivalence of the primal and the dual problem, which we come back to down below. 
We start by stating the Lagrangian of the problem:
\begin{align}
\mathcal{L} &= \sum_{a,b,x,y} p(x,y) C_{ab \vert xy} q(a,b \vert x,y) -\ell'  \\
&+\sum_{\lambda} \pi(\lambda) \big(\ell'-\sum_{a,b,x,y} p(x,y) C_{ab \vert xy} v_A(a \vert x, \lambda) v_B(b \vert y, \lambda) \big) \nonumber \\
&+\sum_{a,b,x,y} A_{ab \vert xy} (1-C_{ab \vert xy}) + \sum_{a,b,x,y} B_{ab \vert xy}C_{ab \vert xy}, \nonumber 
\end{align}
where we introduced the Lagrange parameters $\pi(\lambda)$, $A_{ab \vert xy}$, and $B_{ab \vert xy}$ to make the constraints explicit. Note that $\sum_{a,b,x,y} p(x,y) C_{ab \vert xy} q(a,b \vert x,y) -\ell' \leq \mathcal{L}$ for any feasible $\lbrace C_{ab \vert xy} \rbrace, \ell'$ in Eq.~\eqref{optimal_Bell_dual}, as long as all the $\pi(\lambda),A_{ab \vert xy},$ and $B_{ab \vert xy}$ are non-negative coefficients.
We obtain the dual function (which is here actually the dual function of the dual problem) by taking the supremum of the Lagrangian over the given (here the dual) variables. More precisely, the dual function is given by 
\begin{align}
&G(\lbrace \pi(\lambda) \rbrace, \lbrace A_{ab \vert xy} \rbrace, \lbrace B_{ab \vert xy} \rbrace) =\underset{C_{ab \vert xy}, \ell'}{\sup} \Big \lbrace \ell'(-1+\sum_{\lambda} \pi(\lambda)) + \sum_{a,b,x,y} A_{ab \vert xy}  \\
&+\sum_{a,b,x,y} C_{ab \vert xy}\Big(p(x,y)q(a,b \vert x,y) \nonumber -p(x,y) \sum_{\lambda} \pi(\lambda) v_A(a \vert x, \lambda) v_B(b \vert y, \lambda) - A_{ab \vert xy} + B_{ab \vert xy}\Big) 
\Big \rbrace. \nonumber  
\end{align}
The dual function is unbounded from above, unless certain constraints (the primal constraints) are met. This can for instance be seen by realizing that unless $(-1+\sum_{\lambda} \pi(\lambda)) = 0$, it is always possible to make the term $\ell'(-1+\sum_{\lambda} \pi(\lambda))$ arbitrarily large, since $\ell'$ is now treated as an unconstrained variable. We obtain the primal program by minimizing the dual function under these constraints. The primal program is given by
\begin{align}
&\underline{\text{Primal problem (nonlocality):}} \label{optimal_Bell_primal} \\
&\mathrm{given}:  \ \mathbf{q}_\mathbf{p} \nonumber \\
&\underset{A_{ab \vert xy}, B_{ab \vert xy}, \pi(\lambda)} {\mathrm{minimize}} \sum_{a,b,x,y} A_{ab \vert xy} \nonumber \\
&\text{subject to:} \nonumber \\
&A_{ab \vert xy} - B_{ab \vert xy} = p(x,y) \big(q(a,b \vert x,y) - \sum_{\lambda} \pi(\lambda) v_A(a \vert x, \lambda) v_B(b \vert y, \lambda)\big), \nonumber \\
&\sum_{\lambda} \pi(\lambda) = 1, \ \pi(\lambda) \geq 0 \ \forall \ \lambda, \ A_{ab \vert xy}, B_{ab \vert xy} \geq 0 \ \forall \ a,b,x,y. \nonumber 
\end{align}
Now, by definition of the $\ell_1$-distance between two normalized probability distributions, the optimal value of Eq.~(\ref{optimal_Bell_primal}) is exactly $\mathrm{N}(\mathbf{q}_\mathbf{p})$ (see, for instance, Remark $4.3$ in \cite{Levin2009-gp}). Since we are dealing with a linear program, there is no duality gap between the primal and the dual formulation. Hence, $\mathrm{N}(\mathbf{q}_\mathbf{p})$ describes the maximal Bell violation possible with the probability distribution $\mathbf{q}_\mathbf{p}$. 

Next, we need to show that $\mathrm{S}(\Vec{\sigma}_{\mathbf{p}_A} )$ corresponds to the optimal steering inequality violation. The procedure is the same as before for the nonlocality. However, this time we start from the primal problem.
\begin{align}
&\underline{\text{Primal problem (steerability):}}  \label{Steering_primal} \\
&\mathrm{given}: \ \Vec{\sigma}_{\mathbf{p}_A}  \nonumber \\
&\underset{\sigma_{\lambda}}{\mathrm{minimize}} \ \dfrac{1}{2}\sum_{a,x} p_A(x) \lVert \sigma_{a \vert x} - \sum_{\lambda} v(a \vert x, \lambda) \sigma_{\lambda} \rVert_1 \nonumber \\
&\text{subject to:} \nonumber \\
&\mathrm{Tr}[\sum_{\lambda} \sigma_{\lambda}] = 1, \sigma_{\lambda} \geq 0 \ \forall \ \lambda. \nonumber 
\end{align}
First, we need to rewrite the trace norm explicitly in \ac{SDP} form. We use the following formulation of the trace norm (see, e.g. \cite{Nemirovski}): $\lVert Z \rVert_1 = \min\limits_Y \Big \lbrace \dfrac{\mathrm{Tr}[Y_1]}{2}+\dfrac{\mathrm{Tr}[Y_2]}{2} : \begin{bmatrix} Y_1 & Z \\ Z^{\dagger} & Y_2 \end{bmatrix} \geq 0 \Big \rbrace$. This leads to the primal problem in explicit \ac{SDP} form
\begin{align}
&\underline{\text{Primal problem (steerability):}}  \\
&\mathrm{given}: \ \Vec{\sigma}_{\mathbf{p}_A}  \nonumber \\
&\underset{\sigma_{\lambda}, U_{a \vert x}, W_{a \vert x}}{\mathrm{minimize}} \ \dfrac{1}{4}\sum_{a,x} p_A(x) \mathrm{Tr}[U_{a \vert x} + W_{a \vert x}] \nonumber \\
&\text{subject to:} \nonumber \\
&\begin{bmatrix}
U_{a \vert x} & \sigma_{a \vert x} - \sum_{\lambda} v(a \vert x, \lambda) \sigma_{\lambda}\\
\sigma_{a \vert x} - \sum_{\lambda} v(a \vert x, \lambda) \sigma_{\lambda} & W_{a \vert x}
\end{bmatrix} \geq 0  \ \forall \ a,x, \nonumber \\
&\mathrm{Tr}[\sum_{\lambda} \sigma_{\lambda}] = 1, \sigma_{\lambda} \geq 0 \ \forall \ \lambda, \nonumber 
\end{align}
where we introduced the Hermitian matrices $U_{a \vert x}, W_{a \vert x}$. This allows us to state the Lagrangian 
\begin{align}
\mathcal{L} &= \dfrac{1}{4}\sum_{a,x} p_A(x) \mathrm{Tr}[U_{a \vert x} + W_{a \vert x}] + \ell'(1-\mathrm{Tr}[\sum_{\lambda} \sigma_{\lambda}]) \\
&-\sum_{a,x} \Big (\mathrm{Tr}[H_{a \vert x}^{11}U_{a \vert x}] + \mathrm{Tr}[H_{a \vert x}^{22}W_{a \vert x}] + \mathrm{Tr}[2H_{a \vert x}^{12}(\sigma_{a \vert x} - \sum_{\lambda} v(a \vert x, \lambda) \sigma_{\lambda})] \Big), \nonumber
\end{align}
where $\ell'$ is a scalar and $H_{a \vert x}^{11},H_{a \vert x}^{12}$, and $H_{a \vert x}^{22}$ are block-matrices such that 
\begin{align}
H_{a \vert x} = \begin{bmatrix}
 H_{a \vert x}^{11} & H_{a \vert x}^{12} \\
 H_{a \vert x}^{12} & H_{a \vert x}^{22}
\end{bmatrix} \geq 0 \ \forall \ a,x.   
\end{align}
Note that $\mathrm{S}(\Vec{\sigma}_{\mathbf{p}_A} ) \geq \mathcal{L}$ for any feasible point of the dual problem in Eq.~\eqref{Steering_primal}. Analogous to the optimization for the nonlocality before, we can now formulate the dual function $G(\lbrace H_{a \vert x} \rbrace, \ell') = \underset{\sigma_{\lambda} \geq 0, U_{a \vert x}, W_{a \vert x}}{\inf} \mathcal{L}$. We obtain the following:
\begin{align}
G(\lbrace H_{a \vert x} \rbrace, \ell') &= \underset{\sigma_{\lambda} \geq 0, U_{a \vert x}, W_{a \vert x}}{\inf} \Big\lbrace \sum_{a,x} \mathrm{Tr}[U_{a \vert x} (\dfrac{1}{4}p_A(x)\mathds{1} - H_{a \vert x}^{11})] + \mathrm{Tr}[W_{a \vert x} (\dfrac{1}{4}p_A(x)\mathds{1} - H_{a \vert x}^{22})] \\
&+ \sum_{\lambda} \mathrm{Tr} [\sigma_{\lambda} (-\ell' \mathds{1}+\sum_{a,x} v(a \vert x, \lambda) 2 H_{a \vert x}^{12})] - \sum_{a,x} 2 \mathrm{Tr} [H^{12}_{a \vert x} \sigma_{a \vert x}] + \ell' \nonumber
\Big\rbrace
\end{align}

By identifying the conditions (the dual constraints) that make the dual function bounded we obtain the following dual program
\begin{align}
&\underline{\text{Dual problem (steerability):}}  \\
&\mathrm{given}: \ \Vec{\sigma}_{\mathbf{p}_A} \nonumber \\
&\underset{H_{a \vert x}, \ell'}{\mathrm{maximize}} \ - \sum_{a,x} \mathrm{Tr}[\sigma_{a \vert x} 2 H_{a \vert x}^{12}] + \ell' \nonumber \\
&\text{subject to:} \nonumber \\
&-\ell' \mathds{1} \geq - \sum_{a,x} v(a \vert x, \lambda) 2 H_{a \vert x}^{12} \ \forall \ \lambda, \nonumber \\
&\begin{bmatrix}
\dfrac{1}{4} p_A(x) \mathds{1} & H_{a \vert x}^{12} \\
 H_{a \vert x}^{12} & \dfrac{1}{4} p_A(x) \mathds{1}
\end{bmatrix} \geq 0 \ \forall \ a,x. \nonumber 
\end{align}
We can rewrite the dual problem in a more convenient form. By identifying the \ac{SDP} formulation of the spectral norm (see, e.g. \cite{Nemirovski}) $\lVert Z \rVert_{\infty} = \min\limits_t \Big \lbrace t : \begin{bmatrix} t \mathds{1} & Z \\ Z^{\dagger} & t \mathds{1} \end{bmatrix} \geq 0 \Big \rbrace $ and substituting $\ell' =-\tilde{\ell}$ and $H_{a \vert x}^{12} = - p_A(x)\dfrac{G'_{a \vert x}}{2}$, we arrive at
\begin{align}
&\underline{\text{Dual problem (steerability):}}  \\
&\mathrm{given}: \ \Vec{\sigma}_{\mathbf{p}_A}  \nonumber \\
&\underset{G'_{a \vert x}, \tilde{\ell}}{\mathrm{maximize}} \ \sum_{a,x} p_A(x) \mathrm{Tr}[G'_{a \vert x} \sigma_{a \vert x} ] - \tilde{\ell} \nonumber \\
&\text{subject to:} \nonumber \\
&\tilde{\ell} \mathds{1} \geq  \sum_{a,x} p_A(x) v(a \vert x, \lambda) G'_{a \vert x} \ \forall \ \lambda, \nonumber \\
& -\dfrac{1}{2}\mathds{1} \leq G'_{a \vert x} \leq \dfrac{1}{2} \mathds{1}. \nonumber
\end{align}
To finally arrive at the dual formulation equivalent to the statement in Eq.~\eqref{append_steering_ineq} we shift the variables such that $G_{a \vert x} = G'_{a \vert x}+\dfrac{1}{2}\mathds{1}$ and $\tilde{\ell} = \ell-\dfrac{1}{2}$. This leads to
\begin{align}
&\underline{\text{Dual problem (steerability):}} \label{steering_dual_sdp}  \\
&\mathrm{given}:  \ \Vec{\sigma}_{\mathbf{p}_A}  \nonumber \\
&\underset{G_{a \vert x}, \ell}{\mathrm{maximize}} \ \sum_{a,x} p_A(x) \mathrm{Tr}[G_{a \vert x} \sigma_{a \vert x} ] - \ell \nonumber \\
&\text{subject to:} \nonumber \\
&\ell \mathds{1} \geq  \sum_{a,x} p_A(x) v(a \vert x, \lambda) G_{a \vert x} \ \forall \ \lambda, \nonumber \\
& 0 \leq G_{a \vert x} \leq \mathds{1}. \nonumber
\end{align}
Note that it follows again directly that the classical bound $\ell$ fulfills $\ell = \underset{\Vec{\tau} \in \mathrm{LHS}}{\max} \sum\limits_{a,x} p_A(x) \mathrm{Tr}[\tau_{a \vert x} G_{a \vert x}]$, where $\Vec{\tau}$ admits an \ac{LHS} as defined in Eq.~\eqref{LHS_def}. 

As last step of the proof, we note that we can always find a strictly feasible point in the SDP corresponding to Eq.~\eqref{steering_dual_sdp} by choosing the $G_{a \vert x}$ proportional to the identity and $\ell$ sufficiently large. Hence there is no duality gap due to Slater's theorem (see e.g.\,\cite{boyd_vandenberghe_2004}). Therefore, $\mathrm{S}(\Vec{\sigma}_{\mathbf{p}_A} )$ can be written as optimized steering inequality, which concludes the proof.
\end{proof}

\section{Entanglement as upper bound for the steerability}
\label{Append_entanglement_bound_steering}

Here, we show that the geometric entanglement $\ETD(\rho)$ defined in Eq.~\eqref{entanglement} as 
\begin{align}
\ETD(\rho) = \min\limits_{\rho_{S} \in\mathrm{Sep}(\mathcal{H} \otimes \mathcal{H})} \TD(\rho,\rho_S), 
\end{align}
where $\mathrm{Sep}(\mathcal{H} \otimes \mathcal{H})$ with $\mathcal{H} \cong \mathds{C}^d$ is the set of separable states, upper bounds the steerability $\mathrm{S}(\Vec{\sigma}_\mathbf{p})$ of the assemblage $\Vec{\sigma}_\mathbf{p} $. More precisely, when $\Vec{\sigma}_\mathbf{p} $ is obtained by performing $d$-dimensional measurements form any \ac{WMA} $\mathcal{M}_\mathbf{p}$ onto any state $\rho \in \mathcal{S}(\mathcal{H} \otimes \mathcal{H})$ via $\sigma_{a \vert x} = \mathrm{Tr}_1[(M_{a \vert x} \otimes \mathds{1}) \rho]$ we show that it follows $\mathrm{S}(\Vec{\sigma}_\mathbf{p} ) \leq \ETD(\rho)$.
Let $\rho_S^*$ be the closest separable state with respect to the given state $\rho$. It holds,
\begin{align}
\mathrm{S}(\Vec{\sigma}_\mathbf{p} ) &\leq \dfrac{1}{2} \sum_{a,x} p(x) \Vert \mathrm{Tr}_1[(M_{a \vert x} \otimes \mathds{1}) (\rho-\rho^*_S)] \Vert_1 \\
&= \dfrac{1}{2} \sum_{a,x} p(x) \max\limits_{\lbrace -\mathds{1} \leq O_{a \vert x} \leq \mathds{1} \rbrace_{a,x}} \mathrm{Tr}[O_{a \vert x} \mathrm{Tr}_1[(M_{a \vert x} \otimes \mathds{1}) (\rho-\rho^*_S)]]  \nonumber \\
&= \dfrac{1}{2} \sum_{a,x} p(x) \max\limits_{\lbrace -\mathds{1} \leq O_{a \vert x} \leq \mathds{1} \rbrace_{a,x}} \mathrm{Tr}[(M_{a \vert x} \otimes O_{a \vert x}) (\rho-\rho^*_S)] \nonumber \\
&= \dfrac{1}{2} \sum_{x} p(x) \max\limits_{\lbrace -\mathds{1} \leq O_{a \vert x} \leq \mathds{1} \rbrace_{a,x}} \sum_a  \mathrm{Tr}[(M_{a \vert x} \otimes O_{a \vert x}) (\rho-\rho^*_S)]   \nonumber \\
&= \dfrac{1}{2} \sum_{x} p(x) \max\limits_{\lbrace -\mathds{1} \leq O_{a \vert x} \leq \mathds{1} \rbrace_{a,x}}   \mathrm{Tr}[\sum_a(M_{a \vert x} \otimes O_{a \vert x}) (\rho-\rho^*_S)]   \nonumber \\
&\leq \dfrac{1}{2} \sum_{x} p(x) \max\limits_{\lbrace -\mathds{1} \leq O_{a \vert x} \leq \mathds{1} \rbrace_{a,x}}   \lvert \mathrm{Tr}[\sum_a(M_{a \vert x} \otimes O_{a \vert x}) (\rho-\rho^*_S)] \rvert   \nonumber \\
&\leq \dfrac{1}{2} \sum_x p(x) \max\limits_{\lbrace -\mathds{1} \leq O_{a \vert x} \leq \mathds{1} \rbrace_{a,x}} \Vert \sum_a(M_{a \vert x} \otimes O_{a \vert x}) \Vert_{\infty} \Vert \rho-\rho^*_S \Vert_1 \nonumber \\
&\leq \dfrac{1}{2} \sum_x p(x) \Vert \sum_a(M_{a \vert x} \otimes \mathds{1}) \Vert_{\infty} \Vert \rho-\rho^*_S \Vert_1 \nonumber \\
&= \dfrac{1}{2} \sum_x p(x) \Vert \rho - \rho^*_S \Vert_1 = \dfrac{1}{2} \Vert \rho - \rho^*_S \Vert_1 = \ETD(\rho), \nonumber
\end{align}
where we used in the first line the definition of the steerability $\mathrm{S}(\Vec{\sigma}_\mathbf{p})$ and the fact that separable states $\rho_S$ cannot lead to steering. In the second line, we used the variational characterization of the trace norm by introducing the optimization variables $\lbrace O_{a \vert x} \rbrace_{a,x}$. In the third line, we used some basic property of the trace and the partial trace. Next, we used in the fourth line, that we can interchange the sum over $a$ and the maximization. In the sixth line, we upper bounded the trace by its absolute value. In the seventh line, we used the Hölder inequality. Finally, in the last line we used the fact that $ -\sum_a(M_{a \vert x} \otimes \mathds{1}) \leq \sum_a(M_{a \vert x} \otimes O_{a \vert x}) \leq \sum_a(M_{a \vert x} \otimes \mathds{1}) $ in the positive semidefinite sense. This lets us find as an upper bound $\Vert \sum_a M_{a \vert x} \otimes \mathds{1} \Vert_{\infty} = 1$, due to the completeness relation of the \acp{POVM}. Therefore, the entanglement $\ETD(\rho)$ limits the steerability $\mathrm{S}(\Vec{\sigma}_\mathbf{p} ) \leq \ETD(\rho)$.
\section{\texorpdfstring{\ac{SDP}}{SDP} formulations of incompatibility}
\label{Append_explicit_SDP}
Here, we give detailed information about the \ac{SDP} formulations in Eq.~\eqref{SDP_general_primal} and Eq.~\eqref{SDP_general_dual}. As an example, we explicitly derive the primal and the dual formulation of the incompatibility quantifier $\Idiamond(\mathcal{M}_\mathbf{p})$. More specifically, we show that the incompatibility $\Idiamond(\mathcal{M}_\mathbf{p})$ is the optimal value of the following two \acp{SDP}.
\begin{align}
&\underline{\text{Primal problem (incompatibility):}} \label{Incomp_SDP_primal} \\
&\mathrm{given:} \ \mathcal{M}_\mathbf{p} \nonumber \\
&\underset{a_x, Z_x, G_{\lambda}}{\mathrm{minimize}} \sum_x p(x) a_x \nonumber \\
&\text{subject to:} \nonumber \\
 &a_x \mathds{1} - \mathrm{Tr}_1[Z_x] \geq 0 \ \forall \ a,x, \nonumber \\
 &Z_x \geq  \sum_a \vert a \rangle \langle a \vert \otimes (M_{a \vert x} - F_{a \vert x})^T \ \forall \ x, \nonumber \\
 &F_{a \vert x} = \sum_{\lambda} v(a \vert x, \lambda) G_{\lambda} \ \forall \ x,a, \ G_{\lambda} \geq 0 \ \forall \ \lambda, \sum_{\lambda} G_{\lambda} = \mathds{1}, \nonumber \\
&Z_x \geq 0, \ a_x \geq 0 \ \forall \ x, \nonumber 
\end{align}
\begin{align}
&\underline{\text{Dual problem (incompatibility):}} \label{Incomp_SDP_dual} \\
&\mathrm{given}: \ \mathcal{M}_\mathbf{p} \nonumber  \\
&\underset{C_{a \vert x}, \rho_x,L}{\mathrm{maximize}}  \ \ \ \sum_{a,x} p(x) \mathrm{Tr}[M_{a \vert x} C_{a \vert x}] - \mathrm{Tr}[L] \nonumber \\
&\text{subject to:} \nonumber \\
& L \geq \sum_{a,x} p(x) v(a \vert x, \lambda) C_{a \vert x} \ \forall \ \lambda, \nonumber \\
&0 \leq C_{a \vert x} \leq \rho_x \ \forall \ a,x, \ \rho_x \geq 0, \mathrm{Tr}[\rho_x] = 1 \ \forall \ x, \nonumber 
\end{align}
where $\lbrace v(a  \vert x,\lambda) \rbrace$ are the deterministic probability distributions. The optimization variables of the primal problem are the positive coefficients $a_x$, and the positive semidefinite matrices $Z_x$ and $G_{\lambda}$. The optimization variables of the dual problem are the positive semidefinite matrices $C_{a \vert x}$, $\rho_x$, and $L$. \\
\indent The formulation of the primal problem heavily relies on the \ac{SDP} formulation of the diamond norm due to Watrous \cite{0901.4709}, see also \cite{PhysRevLett.122.190405}. Let us recall that the Choi–Jamio\l kowski-matrix of a measure-and-prepare channel (see Eq.~\eqref{Choi–Jamiołkowski_state}) corresponding to one \ac{POVM} $\mathcal{M}_x = \lbrace M_{a \vert x} \rbrace_a$ is given by
\begin{align}
J(\mathcal{M}_x) = \sum_a \vert a \rangle \langle a \vert \otimes M_{a \vert x}^T,   
\end{align}
where the transpose is with respect to the computational basis. The diamond distance between the quantum channels $\Lambda_{\mathcal{M}_x}$ and $\Lambda_{\mathcal{F}_x}$ can now be computed as
\begin{align}
&\mathrm{given}: \ \ \ J(\mathcal{M}_x), J(\mathcal{F}_x) \\
&\underset{Z_x}{\mathrm{minimize}}  \ \ \ \lVert \mathrm{Tr}_1[Z_x] \rVert_{\infty} \nonumber \\
& \text{subject to:} \ Z_x \geq J(\mathcal{M}_x)-J(\mathcal{F}_x), \ Z_x \geq 0.  \nonumber 
\end{align}
Using this form of the diamond norm, the primal problem in Eq.~\eqref{SDP_general_primal} follows by summing over the settings $x$ weighted with probabilities $p(x)$ and by explicitly minimizing over the Choi–Jamio\l kowski-matrices $J(\mathcal{F}_x)$, where $\mathcal{F}_x$ is the \ac{POVM} corresponding to setting $x$ of the free assemblages $\mathcal{F}$. \\
\indent To arrive at the specific \ac{SDP} for the incompatibility in Eq.~\eqref{Incomp_SDP_primal} from the general formulation in Eq.\,\eqref{SDP_general_primal} we first note that the spectral norm $ \lVert \mathrm{Tr}_1[Z_x] \rVert_{\infty}$ of a positive semidefinite matrix $\mathrm{Tr}_1[Z_x]$ can be written as the minimal value $a_x$ such that $a_x \mathds{1} \geq \mathrm{Tr}_1[Z_x]$ holds. Next, we write out the Choi–Jamio\l kowski-matrices corresponding to the channels $\Lambda_{\mathcal{M}_x}$ and $\Lambda_{\mathcal{F}_x}$ in terms of the \ac{POVM} elements $M_{a \vert x}$ and $F_{a \vert x}$. Finally, we constraint the $F_{a \vert x}$ explicitly to be \ac{JM} i.e., that it holds $F_{a \vert x} = \sum_{\lambda} v(a \vert x, \lambda) G_{\lambda} \ \forall \ x,a$, where $\lbrace G_{\lambda} \rbrace$ is the \ac{POVM} simulating $\mathcal{F}$. \\
\indent To derive the dual formulation in Eq.\,\eqref{Incomp_SDP_dual}, we formulate the Lagrangian of the primal problem by incorporating the constraints explicitly. The Lagrangian is given by
\begin{align}
&\mathcal{L}  = \sum_x p(x) a_x + \sum_x \mathrm{Tr}[H_x (\mathrm{Tr}_1[Z_x]-a_x\mathds{1})] \\
&+ \sum_x \mathrm{Tr}[C_x \big (\sum_a \lvert a \rangle \langle a \rvert \otimes [M_{a \vert x} -\sum_{\lambda} v(a \vert x, \lambda)  G_{\lambda}]^T-Z_x \big )]
+ \mathrm{Tr}[L(\sum_{\lambda} G_{\lambda} - \mathds{1})], \nonumber   
\end{align}
where the $H_x$ are $d$-dimensional positive semidefinite matrices, the $C_x$ are $(o \times d)$-dimensional positive semidefinite matrices, where $o$ is the number of measurement outcomes, and $L$ is a $d$-dimensional Hermitian matrix. Note that for every feasible point in Eq.\,\eqref{Incomp_SDP_primal}, it holds  $\Idiamond(\mathcal{M}_\mathbf{p}) \geq \mathcal{L}$. \\
\indent By using the property $\mathrm{Tr}[\mathrm{Tr}_1(Z_x) H_x] = \mathrm{Tr}[Z_x(\mathds{1} \otimes H_x)]$, we can formally state the dual function $G(\lbrace H_x \rbrace,\lbrace C_x \rbrace,L)$, which is obtained by taking the infimum of the Lagrangian $\mathcal{L}$ over the variables of the primal problem. More precisely, 
\begin{align}
&G(\lbrace H_x \rbrace,\lbrace C_x \rbrace,L) =\\  
&\underset{a_x,Z_x,G_{\lambda} \geq 0}{\inf} \Big\lbrace \sum_x a_x(p(x)-\mathrm{Tr}[H_x]) -\mathrm{Tr}[L]  + \sum_x \mathrm{Tr}[Z_x(\mathds{1} \otimes H_x-C_x)] \nonumber \\
&+ \sum_{\lambda} \mathrm{Tr}[G_{\lambda}(L-\sum_{a,x} v(a \vert x, \lambda) \mathrm{Tr}_1[(\vert a \rangle \langle a \vert \otimes \mathds{1}) C_x]^T)] \nonumber + \sum_x \mathrm{Tr}[C_x(\sum_a \vert a \rangle \langle a \vert \otimes M_{a \vert x}^T)] \nonumber \Big\rbrace.  \nonumber  
\end{align}
It is clear that $G(\lbrace H_x \rbrace,\lbrace C_x \rbrace,L)$ is unbounded from below, unless certain conditions (the dual constraints) are met. For instance, if $p(x)-\mathrm{Tr}[H_x] < 0$ holds for some $x$, the corresponding term $a_x(p(x)-\mathrm{Tr}[H_x])$ can be made arbitrarily small by increasing $a_x$, which is only constrained to be non-negative. The corresponding dual program is obtained by maximizing the dual function $G$ over the dual variables $\lbrace C_x \rbrace,\lbrace D_x \rbrace,L$ under the dual constraints. This leads to the optimal lower bound to the primal problem. We obtain
\begin{align}
&\underline{\text{Dual problem (incompatibility):}} \label{SDP_dual_diamond_program} \\
&\mathrm{given}: \ \mathcal{M}_\mathbf{p}  \nonumber \\
&\underset{C_{x}, H_x,L}{\mathrm{maximize}}  \ \ \ \sum_{x}  \mathrm{Tr}[C_x ( \sum_a \lvert a \rangle \langle a \rvert \otimes M_{a \vert x}^T)] - \mathrm{Tr}[L] \nonumber \\
&\text{subject to:} \nonumber \\
&L \geq \sum_{a,x}  v(a \vert x, \lambda) \mathrm{Tr}_1[(\vert a \rangle \langle a \vert \otimes \mathds{1}) C_x]^T \ \forall \ \lambda, \nonumber \\
&H_x \geq 0, \ C_x \geq 0, \ p(x) \geq \mathrm{Tr}[H_x] \ \forall \ x, \nonumber \\
&\mathds{1} \otimes H_x - C_x \geq 0, \ \forall \ x, \ L = L^{\dagger}, \nonumber 
\end{align}
which is formally the dual program to the primal formulation of Eq.\,\eqref{Incomp_SDP_primal}. However, we can rewrite the program in Eq.\,\eqref{SDP_dual_diamond_program} in a more useful form.

First, we can get rid of all transposes by using $\mathrm{Tr}[A B^T] = \mathrm{Tr}[A^T B]$, i.e., by swapping the transposition on the optimization variables $C_x$. Since the transpose $C_x^T$ is already included in the optimization, we can simply ignore it. Second, we rewrite the first term of the objective function as $\mathrm{Tr}[C_x ( \sum_a \lvert a \rangle \langle a \rvert \otimes M_{a \vert x})] = \sum_a \mathrm{Tr}[M_{a \vert x} (\mathrm{Tr}_1[(\lvert a \rangle \langle a \rvert \otimes \mathds{1}) C_x])]$. This shows that only the block-diagonal entries of $C_x$ are important. Note that the same observation holds for the constraints $C_x$ is involved in. It is therefore no loss of generalization to assume $C_x$ as block diagonal. We denote $ \mathrm{Tr}_1[(\lvert a \rangle \langle a \rvert \otimes \mathds{1}) C_x] = C_{a \vert x}'$. \\
\indent With this, we arrive at 
\begin{align}
&\underline{\text{Dual problem (incompatibility):}} \\
&\mathrm{given}: \ \mathcal{M}_\mathbf{p} \nonumber \\
&\underset{C_{a \vert x}', H_x, L}{\mathrm{maximize}}  \ \ \ \sum_{a,x}  \mathrm{Tr}[M_{a \vert x} C_{a \vert x}'] - \mathrm{Tr}[L] \nonumber \\
&\text{subject to:} \nonumber \\
& L \geq \sum_{a,x}  v(a \vert x, \lambda) C_{a \vert x}' \ \forall \ \lambda, \nonumber \\
&H_x \geq 0, p(x) \geq \mathrm{Tr}[H_x] \ \forall \ x, H_x \geq C_{a \vert x}' \geq 0 \ \forall \ a,x, L = L^{\dagger}. \nonumber 
\end{align}
Next, we note that it is always possible (without loss of optimality) to chose $ p(x) = \mathrm{Tr}[H_x] $, since $H_x$ constraints other variables only from above. We rewrite $H_x = p(x) \rho_x$, introducing the variables $\rho_x \geq 0$ with $\mathrm{Tr}[\rho_x] = 1$. Finally, we rewrite $C_{a \vert x}'$ such that $C_{a \vert x}' = p(x) C_{a \vert x}$ which leads to
\begin{align}
&\underline{\text{Dual problem (incompatibility):}} \label{Incomp_SDP_dual_final} \\ 
&\mathrm{given}: \ \mathcal{M}_\mathbf{p} \nonumber  \\
&\underset{C_{a \vert x}, \rho_x, L}{\mathrm{maximize}}  \ \ \ \sum_{a,x} p(x) \mathrm{Tr}[M_{a \vert x} C_{a \vert x}] - \mathrm{Tr}[L] \nonumber \\
&\text{subject to:} \nonumber \\
& L \geq \sum_{a,x} p(x) v(a \vert x, \lambda) C_{a \vert x} \ \forall \ \lambda, \nonumber \\
&0 \leq p(x)C_{a \vert x} \leq p(x)\rho_x \ \forall \ a,x, \ \rho_x \geq 0, \mathrm{Tr}[\rho_x] = 1 \ \forall \ x, \nonumber     
\end{align}
which is equivalent to the \ac{SDP}~\eqref{Incomp_SDP_dual}. Note that by virtue of the first constraint in Eq.~\eqref{Incomp_SDP_dual_final}, it follows that 
\begin{align}
\mathrm{Tr}[L] = \underset{\mathcal{F} \in \mathscr{F}_{\mathrm{JM}}}{\max} \sum_{a,x} p(x) \mathrm{Tr}[F_{a \vert x} C_{a \vert x}].    
\end{align}
 The bound $\mathrm{Tr}[L] \geq \underset{\mathcal{F} \in \mathscr{F}_{\mathrm{JM}}}{\max} \sum_{a,x} p(x) \mathrm{Tr}[F_{a \vert x} C_{a \vert x}]$ can be seen by multiplying all the inequalities $L \geq \sum_{a,x} p(x) v(a \vert x, \lambda) C_{a \vert x} \ \forall \ \lambda$ with $G_{\lambda}$, then summing over all $\lambda $ and taking the trace. The equality follows from the fact, that a strict inequality would contradict with the maximization of the objective function. \\
 \indent As a final step in the proof, we need to show that there is no duality gap between the primal and the dual program. This follows from Slater's theorem (see e.g. \cite{boyd_vandenberghe_2004}) since it always possible to find a strictly feasible point in either the primal or the dual problem. This can be seen directly for the dual program in Eq.~\eqref{Incomp_SDP_dual}, as we can chose all $C_{a \vert x}$ to be proportional to the identity and adjust the $\rho_x$ and $L$ accordingly.

 \section{Optimal input distribution}
\label{section_optimal_input_distribution}

Here, we give an example where an optimization over the input distribution $\mathbf{p} = \lbrace p(x) \rbrace$ for the settings $x$ is relevant to optimize the available resources. In particular, we show that for the incompatibility $\Idiamond(\mathcal{M}_\mathbf{p})$ (see Eq.~\eqref{Incompatibility}) of a measurement assemblage with only two settings, the optimal incompatibility is not always achieved for a uniform distribution $p(1) = p(2) = \dfrac{1}{2}$. The idea is to introduce noise in only one of the measurement settings, here for $x=2$. Let us consider an \ac{MUB} measurement assemblage $\mathcal{N}$ containing $m=2$ \acp{POVM}, constructed in the same way as the assemblages considered in Table \ref{Table1}. From the \ac{MUB} measurement assemblage $\mathcal{N}$ we obtain the measurement assemblage $\mathcal{M}$ via
\begin{align}
M_{a \vert 1} = N_{a \vert 1} \ \forall \ a, \label{noisy_assemblage} \\
M_{a \vert 2} = \mu N_{a \vert 2} +(1-\mu) \mathrm{Tr}[N_{a \vert 2}]\dfrac{\mathds{1}}{d} \ \forall \ a, \nonumber
\end{align}
where $\mu \in [0,1]$ is a depolarizing noise parameter for the second measurement. In the following, we analyze how to choose the probability distribution $\mathbf{p} = \lbrace p(x) \rbrace$ such that the incompatibility $\Idiamond(\mathcal{M}_\mathbf{p})$ is maximized for the given assemblage $\mathcal{M}$. As mentioned in section \ref{section_SDP}, the \ac{SDP} \eqref{SDP_general_dual} can be rewritten such that it includes a maximization over the input distribution $\mathbf{p} = \lbrace p(x) \rbrace$. We illustrate our results in Figure \ref{Optimal_distribution_plot} for the optimal setting probabilities $p(x)$ of the assemblage $\mathcal{M}$ in dimension $d$ with noise parameter $\mu$.

\begin{figure}
\centering
\includegraphics[scale = 0.38]{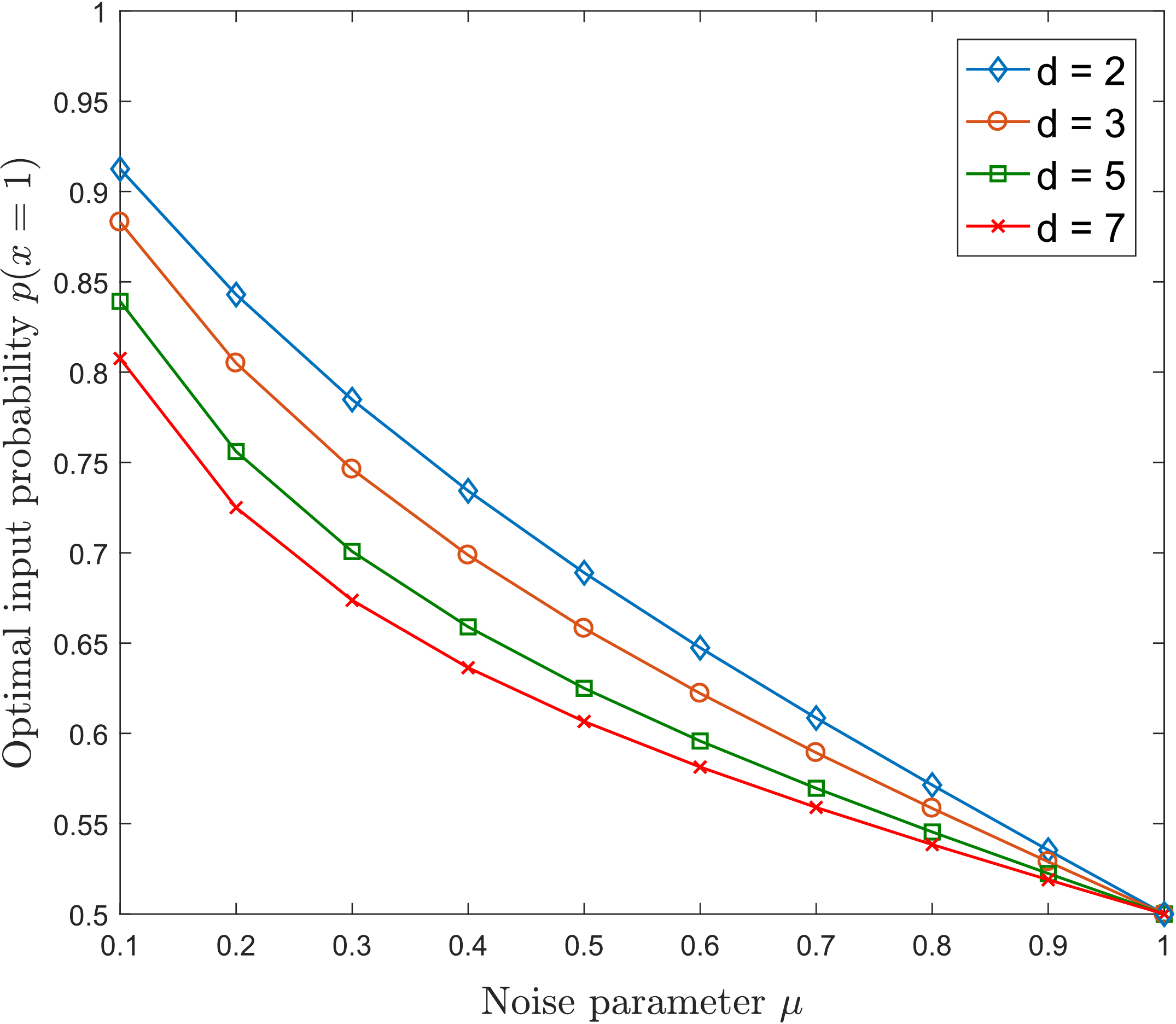}
 \caption{The optimal input probability $p(x = 1)$ for the first (noise free) measurement setting of the assemblage $\mathcal{M}$ in Eq.\,\eqref{noisy_assemblage} depending on the dimension $d$ and the noise parameter $\mu$. The plot shows the optimal probability $p(x=1)$ which maximizes the incompatibility $\Idiamond(\mathcal{M}_\mathbf{p})$. It can be seen that a uniform distribution is only optimal in the absence of noise (i.e. $\mu = 1$). Especially for high noise regimes (e.g. $\mu = 0.1$) a strong bias towards the noise free measurement can be seen. However, this strong bias decreases with increasing dimension~$d$.} 
 \label{Optimal_distribution_plot}
\end{figure}
As one can see, even for only two measurements, strong biases towards one setting can be necessary in order to maximize the incompatibility $\Idiamond(\mathcal{M}_\mathbf{p})$. We want to remark that except for the noise free case, i.e. $\mu = 1$, the optimized input distribution leads to a strictly larger incompatibility than with a uniform distribution. Note that in this particular example, the advantage is weak as can be seen in Figure \ref{Optimal_distribution_comparison_plot}.
\begin{figure}[ht]
\centering
\includegraphics[scale = 0.42]{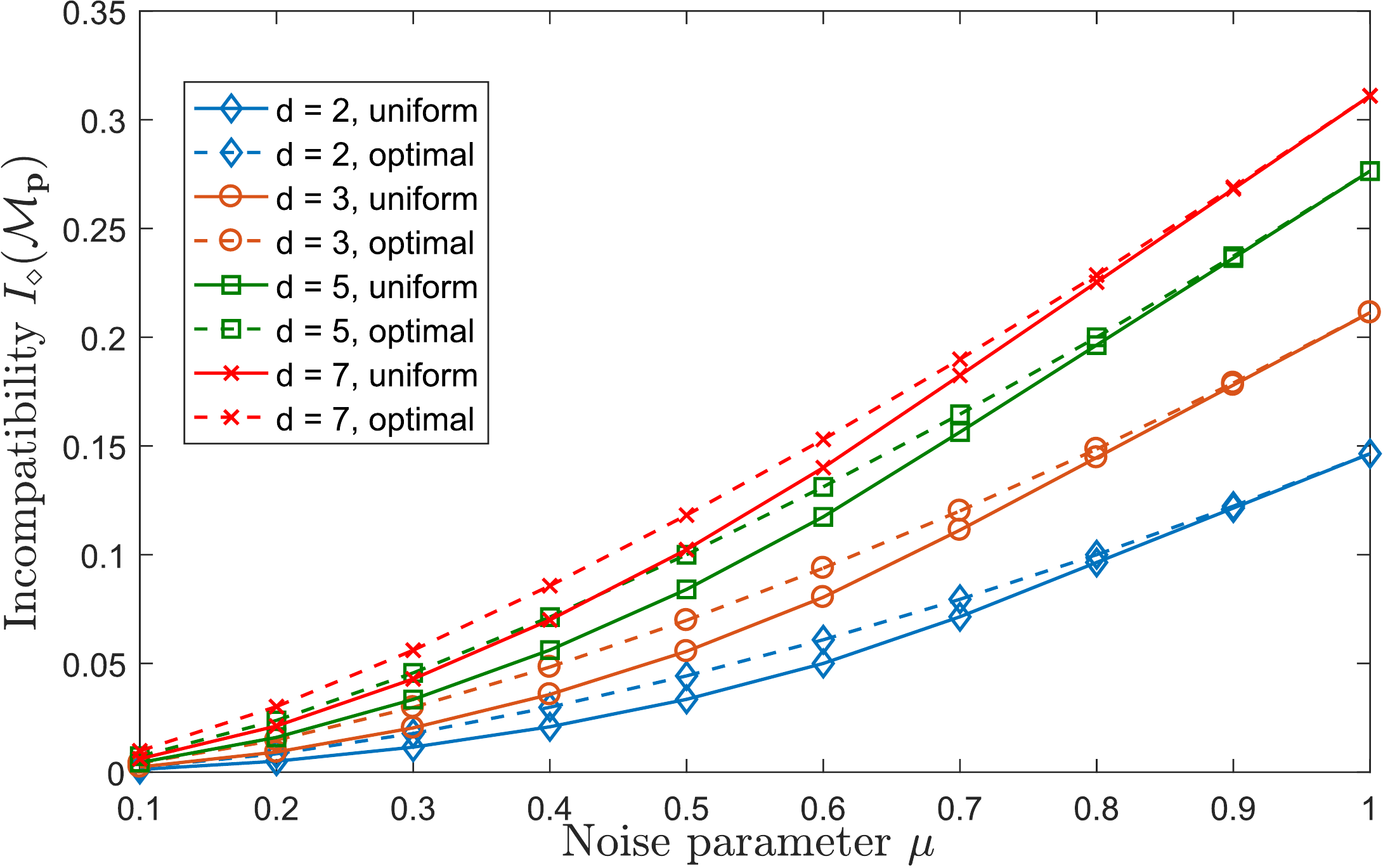}
 \caption{Comparison of the incompatibility $\Idiamond(\mathcal{M}_{\mathbf{p}})$ between the optimal input distribution (dashed lines) and the uniform distribution (solid lines) depending on the noise parameter $\mu$ for a given dimension $d$. It can be seen that the optimized input distribution outperforms the uniform distribution for the measurement assemblage described in Eq.~\eqref{noisy_assemblage}. For low noise regime ($\mu$ close to $1$) the solid and the dashed lines approach each other, as the uniform distribution is optimal for $\mu = 1$.} 
 \label{Optimal_distribution_comparison_plot}
\end{figure}
However, an optimization over the distribution $\mathbf{p}$ can lead to a strong increase in incompatibility for $m \geq 3$, by essentially neglecting weakly incompatibly subsets of measurements. On a qualitative basis, this effect can be explained in terms of the informativeness $\IFdiamond(\mathcal{M}_\mathbf{p})$ (see Eq. \eqref{Informativeness}). For large noise (e.g. $\mu = 0.1$) the distribution is strongly biased towards the noise-free, hence more informative measurement.
 
\section{\texorpdfstring{\ac{SDP}}{SDP} formulation of informativeness and coherence}
\label{Append_inform_and_coherence}
 
 \indent While the calculations in Appendix~\ref{Append_explicit_SDP} are specific to the quantifier $\Idiamond(\mathcal{M}_\mathbf{p})$ and the \ac{QRT} of incompatibility, analogous considerations can be made for any resource that has a free set $\mathscr{F}$ that admits a formulation as an \ac{SDP}. In order to not repeat almost the same calculation as above, we simply state the corresponding \ac{SDP} formulations for the coherence and the informativeness in the following. We start with the latter. The informativeness $\IFdiamond(\mathcal{M}_\mathbf{p})$ is given as the optimal value of the following \acp{SDP}. 
\begin{align}
&\underline{\text{Primal problem (informativeness):}} \label{Info_SDP_primal} \\
&\mathrm{given:} \ \ \ \mathcal{M}_\mathbf{p} \nonumber \\
&\underset{a_x, Z_x, q(a \vert x)}{\mathrm{minimize}} \sum_x p(x) a_x \nonumber \\
&\text{subject to:} \nonumber \\
 &a_x \mathds{1} - \mathrm{Tr}_1[Z_x] \geq 0 \ \forall \ a,x, \nonumber \\
 &Z_x \geq  \sum_a \vert a \rangle \langle a \vert \otimes (M_{a \vert x} - F_{a \vert x})^T \ \forall \ x, \nonumber \\
 &F_{a \vert x} = q(a \vert x) \mathds{1}, \ \ q(a \vert x) \geq 0 \ \forall \ a,x, \sum_{a} q(a \vert x) = 1, \forall \ a, \nonumber \\
&Z_x \geq 0, a_x \geq 0 \ \forall \ x, \nonumber \\
\mbox{} \nonumber 
\end{align}
\begin{align}
&\underline{\text{Dual problem (informativeness):}} \label{Info_SDP_dual} \\
&\mathrm{given}: \ \ \ \mathcal{M}_\mathbf{p} \nonumber  \\
&\underset{C_{a \vert x}, \rho_x, \ell_x}{\mathrm{maximize}}  \ \ \ \sum_{a,x} p(x) \mathrm{Tr}[M_{a \vert x} C_{a \vert x}] - \sum_x \ell_x \nonumber \\
&\text{subject to:} \nonumber \\
& \ell_x \geq p(x) \mathrm{Tr}[C_{a \vert x}] \ \forall \ a,x, \nonumber \\
&0 \leq C_{a \vert x} \leq \rho_x \ \forall \ a,x, \ \ \rho_x \geq 0, \mathrm{Tr}[\rho_x] = 1 \ \forall \ x. \nonumber 
\end{align}
The optimization variables of the primal problem are the positive coefficients $a_x$, the positive semidefinite matrices $Z_x$ and the probabilities $q(a \vert x)$. The optimization variables of the dual problem are the positive semidefinite matrices $C_{a \vert x}$, $\rho_x$, and the scalars $\ell_x$. Note that it follows directly from the first constraint of the dual that 
\begin{align}
\sum_x \ell_x = \underset{\mathcal{F} \in \mathscr{F}_{\mathrm{UI}}}{\max} \sum_{a,x} p(x) \mathrm{Tr}[F_{a \vert x} C_{a \vert x}].    
\end{align}
This can be seen by realizing that $\ell_x \geq p(x) \mathrm{Tr}[C_{a \vert x}] \ \forall a,x$ implies $\ell_x q(a \vert x) \geq p(x) \mathrm{Tr}[C_{a \vert x} q(a \vert x) \mathds{1}] \\ \forall \ a,x$, where we multiplied both sides with the conditional probabilities $q(a \vert x)$. We identify $q(a \vert x) \mathds{1} = F_{a \vert x}$ due to the definition of the \ac{UI} measurements in Eq.\,\eqref{informativeness_free}. Finally, we sum both sides over $a$ and $x$. The equality follows from the fact that we maximize the objective function. 

Like for the incompatibility, the \ac{SDP} formulations of the informativeness $\IFdiamond(\mathcal{M}_\mathbf{p})$ allow us to gain additional insight on the informativeness of \ac{WMA} $\mathcal{M}_\mathbf{p}$. In particular, we show in the following that the informativeness $\IFdiamond(\mathcal{M}_\mathbf{p})$ of any set of rank$-1$ projective measurements is given by $\IFdiamond(\mathcal{M}_\mathbf{p}) = 1 - \dfrac{1}{d}$ for any probability distribution $p(x)$.

This follows by choosing $C_{a \vert x} = \dfrac{M_{a \vert x}}{d}$, $\rho_x = \sum_a \dfrac{M_{a \vert x}}{d} = \dfrac{\mathds{1}}{d}$, and $\ell_x = \dfrac{p(x)}{d}$ as feasible solution for the dual problem in Eq.\,\eqref{Info_SDP_dual}. It follows by direct calculation that $\IFdiamond(\mathcal{M}_\mathbf{p}) \geq 1 -\dfrac{1}{d}$. As feasible solution for the primal problem in Eq.\,\eqref{Info_SDP_primal} we chose $q(a \vert x) = \dfrac{1}{d},$ $a_x = \dfrac{1}{d},$ and $Z_x = (1-\dfrac{1}{d}) \sum_a \vert a \rangle \langle a \vert \otimes M_{a \vert x}$. This leads by direct computation to the upper bound $\IFdiamond(\mathcal{M}_\mathbf{p}) \leq 1 -\dfrac{1}{d}$, which equals the lower bound.

\indent \textit{State discrimination.}\textemdash 
Note that in the particular case of rank-$1$ projective measurements, the informativeness $\IFdiamond(\mathcal{M}_\mathbf{p})$ can be understood as state discrimination game. More precisely, we consider the following task. Given the (known) states $\lbrace \rho_{a \vert x} \rbrace$ that are distributed with probability $p(a \vert x) = \dfrac{1}{d}$ given that we chose the setting $x$ with probability $p(x)$. That is, with probability $p(a,x) = p(a \vert x) p(x)$ we receive the state $\rho_{a \vert x}$. The goal of the game is now to identify the label $a$ correctly with as high probability as possible, given that we have access to the measurement assemblage $\mathcal{M}$. We compare this to the situation where we do not perform a measurement and simply guess the label $a$. 
The term $\sum_{a,x} p(x) \mathrm{Tr}[M_{a \vert x} C_{a \vert x}] = \sum_{a,x} \dfrac{1}{d} p(x) \mathrm{Tr}[M_{a \vert x} \rho_{a \vert x}]$ can be seen as average probability to correctly guess the label $a$ of the states $\rho_{a \vert x}$ that are send out with probability $\dfrac{1}{d}$ given that the setting $x$ has been chosen with probability $p(x)$. The value $\IFdiamond(\mathcal{M}_\mathbf{p}) = 1 - \dfrac{1}{d}$ describes now the difference of the optimal average probability, achieved with the assemblage $\mathcal{M}_\mathbf{p}$ compared to randomly guessing the label $a$ for each setting $x$.  

\indent \textit{Coherence.}\textemdash Finally, the coherence $\Cdiamond(\mathcal{M}_\mathbf{p})$ can also be computed by \acp{SDP}.
In particular, $\Cdiamond(\mathcal{M}_\mathbf{p})$ is the optimal value of the following two \acp{SDP}.
\\
\begin{align}
&\underline{\text{Primal problem (coherence):}} \label{Coherence_SDP_primal} \\
&\mathrm{given:} \ \ \ \mathcal{M}_\mathbf{p} \nonumber \\
&\underset{a_x, Z_x,  \alpha_{i \vert (a,x)}}{\mathrm{minimize}} \sum_x p(x) a_x \nonumber \\
&\text{subject to:} \nonumber \\
 &a_x \mathds{1} - \mathrm{Tr}_1[Z_x] \geq 0 \ \forall \ a,x, \nonumber \\
 &Z_x \geq  \sum_a \vert a \rangle \langle a \vert \otimes (M_{a \vert x} - \sum_{i} \alpha_{i \vert (a,x)} \vert i \rangle \langle i \vert)^T \ \forall \ x, \nonumber \\
 &\alpha_{i \vert (a,x)} \geq 0 \ \forall \ i,a,x, \sum_{a} \alpha_{i \vert (a,x)} = 1, \forall \ i,x, \nonumber \\
&Z_x \geq 0, a_x \geq 0 \ \forall \ x, \nonumber 
\end{align}
\begin{align}
&\underline{\text{Dual problem (coherence):}} \label{Coherence_SDP_dual} \\
&\mathrm{given}: \ \ \ \mathcal{M}_\mathbf{p} \nonumber  \\
&\underset{C_{a \vert x}, \rho_x, \ell_{x,i}}{\mathrm{maximize}}  \ \ \ \sum_{a,x} p(x) \mathrm{Tr}[M_{a \vert x} C_{a \vert x}] - \sum_{x,i} \ell_{x,i} \nonumber \\
&\text{subject to:} \nonumber \\
& \ell_{x,i} \geq p(x) \mathrm{Tr}[C_{a \vert x} \lvert i \rangle \langle i \rvert] \ \forall \ a,x,i, \nonumber \\
&0 \leq C_{a \vert x} \leq \rho_x \ \forall \ a,x, \ \ \rho_x \geq 0, \mathrm{Tr}[\rho_x] = 1 \ \forall \ x. \nonumber 
\end{align}
The optimization variables of the primal problem are the positive coefficients $a_x$, the positive semidefinite matrices $Z_x$ and the coefficients $\alpha_{i \vert (a,x)}$. The optimization variables of the dual problem are the positive semidefinite matrices $C_{a \vert x}$, $\rho_x$, and the scalars $\ell_{x,i}$. With the same reasoning as with the previous resources, it can directly be seen that 
\begin{align}
\sum_{x,i} \ell_{x,i} =  \underset{\mathcal{F} \in \mathscr{F}_{\mathrm{IC}}}{\max} \sum_{a,x} p(x) \mathrm{Tr}[F_{a \vert x} C_{a \vert x}].
\end{align}
\indent We use these insights about the informativeness $\IFdiamond(\mathcal{M}_\mathbf{p})$ and the coherence $\Cdiamond(\mathcal{M}_\mathbf{p})$ to identify non-trivial cases for which it holds that  $\IFdiamond(\mathcal{M}_\mathbf{p}) = \Cdiamond(\mathcal{M}_\mathbf{p})$ in the following. We start by considering assemblages $\mathcal{M}_\mathbf{p}$ where every \ac{POVM} is a rank-$1$ projective measurement that is mutually unbiased to the incoherent basis. More formally, it has to hold 
\begin{align}
\mathrm{Tr}[\vert i \rangle \langle i \vert M_{a \vert x}] = \dfrac{1}{d} \ \forall \ i,a,x.    
\label{coherence_condition}
\end{align}
Note that this holds true for appropriately chosen assemblages $\mathcal{M}$ of \ac{MUB} measurement assemblages that are also mutually unbiased to the incoherent basis. However, it is actually not necessary that the measurements within the assemblage are \ac{MUB} themselves. All what is needed is that Eq.~\eqref{coherence_condition} holds true for an assemblage $\mathcal{M}$ of rank-$1$ projections. For instance, the \ac{CGLMP} (see Eq.~\eqref{CGLMP_Alice} and Eq.~\eqref{CGLMP_Bob}) measurements are also a valid choices. Under the condition in Eq.~\eqref{coherence_condition} it is easy to see that the choices $C_{a \vert x} = \dfrac{M_{a \vert x}}{d}, \rho_x = \sum_a \dfrac{M_{a \vert x}}{d} = \dfrac{\mathds{1}}{d}$, and $\ell_{x,i} = \dfrac{p(x)}{d^2}$ for the dual problem in Eq.~\eqref{Coherence_SDP_dual} lead to $\Cdiamond(\mathcal{M}_\mathbf{p}) \geq 1-\dfrac{1}{d}$ for any probability distribution $p(x)$. Since $\IFdiamond(\mathcal{M}_\mathbf{p}) = 1-\dfrac{1}{d}$ and $\IFdiamond(\mathcal{M}_\mathbf{p}) \geq \Cdiamond(\mathcal{M}_\mathbf{p})$ for any assemblage of rank-$1$ projections it has to hold $\IFdiamond(\mathcal{M}_\mathbf{p}) = \Cdiamond(\mathcal{M}_\mathbf{p})$ whenever the condition in Eq.\,\eqref{coherence_condition} is fulfilled. 

\section{More distances}

In the main text, we defined general distances between measurement assemblages in Definition~\ref{Def_Distance}. However, so far we only focused on one particular distance. Here, we introduce more examples of distances for measurements and discuss their basic properties. \\
\indent We start by introducing the Schatten $\mathrm{p}-$norm functions $\DP(\mathcal{M}_\mathbf{p},\mathcal{N}_\mathbf{p})$ for $\mathrm{p} \in [1, \infty)$, defined as
\begin{align}
\DP(\mathcal{M}_\mathbf{p},\mathcal{N}_\mathbf{p}) =  \sum_{a,x} p(x)   \dfrac{1}{2} \lVert M_{a \vert x} - N_{a\vert x} \rVert_{\mathrm{p}},
\end{align}
where $\lVert X \rVert_{\mathrm{p}} = (\mathrm{Tr}[\lvert X \rvert^{\mathrm{p}}])^{1/\mathrm{p}}$ is the Schatten $\mathrm{p}-$norm of $X$. Note that the cases $\mathrm{p}=1$ and $\mathrm{p}=\infty$ correspond to the trace norm, respectively the spectral norm. While the functions $\DP(\mathcal{M}_\mathbf{p},\mathcal{N}_\mathbf{p})$ will generally not fulfil the monotonicity under Hilbert-Schmidt adjoint channels $\Lambda^{\dagger}$ according to Definition~\ref{Def_Distance}, we will show in the following that the $\mathrm{p}=\infty$ case corresponds to a proper distance. Note that for $\mathrm{p}=1$, the monotonicity under quantum channel $\Lambda^{\dagger}$ is not fulfilled, which can be seen by considering trivial extensions of the form $\Lambda^{\dagger}(M_{a \vert x}) = \mathds{1} \otimes M_{a \vert x}$. Nevertheless, we also define the induced functions 
\begin{align}
\RP(\mathcal{M}_\mathbf{p}) = \min\limits_{\mathcal{F} \in \mathscr{F}}  \DP(\mathcal{M}_\mathbf{p},\mathcal{F}_\mathbf{p})\label{Schatten_functions}
\end{align}
We formulate the following theorem to show that $\Dinfty(\mathcal{M}_\mathbf{p},\mathcal{N}_\mathbf{p})$ is a distance between measurement assemblages. 
\begin{theorem}
The function $\Dinfty(\mathcal{M}_\mathbf{p}, \mathcal{N}_\mathbf{p})$ is a distance function between the~\acp{WMA} $\mathcal{M}_\mathbf{p}$ and $\mathcal{N}_\mathbf{p}$, i.e., it fulfils all the conditions stated in Definition~\ref{Def_Distance}. Moreover, $\Dinfty(\mathcal{M}_\mathbf{p}, \mathcal{N}_\mathbf{p})$ is jointly-convex.
\end{theorem}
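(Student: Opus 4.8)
The plan is to reduce every requirement to the corresponding property of the spectral norm, mirroring the argument for $\Ddiamond$ in Appendix~\ref{Append_thrm1}, with one genuinely new ingredient needed for channel-monotonicity. First I would record that $\Dinfty(\mathcal{M}_\mathbf{p},\mathcal{N}_\mathbf{p}) = \tfrac12\sum_{a,x} p(x)\,\lVert M_{a\vert x}-N_{a\vert x}\rVert_\infty$ is manifestly non-negative; since $p(x)>0$ for all $x$, it vanishes exactly when $\lVert M_{a\vert x}-N_{a\vert x}\rVert_\infty = 0$ for every $a,x$, i.e.\ when $\mathcal{M}=\mathcal{N}$, which gives faithfulness. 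Symmetry is immediate from $\lVert X\rVert_\infty = \lVert -X\rVert_\infty$, and the triangle inequality follows termwise from the triangle inequality of the spectral norm after inserting $K_{a\vert x}$ and summing against the weights $p(x)$.

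The monotonicity under classical simulations $\xi$ is handled exactly as in Appendix~\ref{Append_thrm1}: writing $M'_{b\vert y}-N'_{b\vert y} = \sum_x p(x\vert y)\sum_a q(b\vert y,x,a)(M_{a\vert x}-N_{a\vert x})$, the triangle inequality and absolute homogeneity of $\lVert\cdot\rVert_\infty$ give $\lVert M'_{b\vert y}-N'_{b\vert y}\rVert_\infty \le \sum_x p(x\vert y)\sum_a q(b\vert y,x,a)\,\lVert M_{a\vert x}-N_{a\vert x}\rVert_\infty$; then summing over $b$ (using $\sum_b q(b\vert y,x,a)=1$) and over $y$ with weight $q(y)$ (using $\sum_y q(y)p(x\vert y)=p(x)$) collapses the right-hand side to $\Dinfty(\mathcal{M}_\mathbf{p},\mathcal{N}_\mathbf{p})$. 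Joint-convexity is likewise termwise: at each $(a,x)$ the difference of the two mixed effects is $\eta(M^{(1)}_{a\vert x}-F^{(1)}_{a\vert x})+(1-\eta)(M^{(2)}_{a\vert x}-F^{(2)}_{a\vert x})$, and convexity of the norm bounds its spectral norm by the matching convex combination; summing against $\tfrac12 p(x)$ finishes this part.

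The one step that does not come for free — and the reason the claim holds for $\mathrm{p}=\infty$ while it fails for $\mathrm{p}=1$ — is monotonicity under a \ac{CP} unital map $\Lambda^\dagger$. The key fact I would invoke is that a positive unital map is a contraction in the operator norm on Hermitian operators: if $X=X^\dagger$ then $-\lVert X\rVert_\infty\mathds{1}\le X\le\lVert X\rVert_\infty\mathds{1}$, and applying $\Lambda^\dagger$ (which is positive and unital) preserves these operator inequalities, so $-\lVert X\rVert_\infty\mathds{1}\le\Lambda^\dagger(X)\le\lVert X\rVert_\infty\mathds{1}$, i.e.\ $\lVert\Lambda^\dagger(X)\rVert_\infty\le\lVert X\rVert_\infty$. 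Taking $X = M_{a\vert x}-N_{a\vert x}$ and using linearity of $\Lambda^\dagger$ yields $\lVert\Lambda^\dagger(M_{a\vert x})-\Lambda^\dagger(N_{a\vert x})\rVert_\infty\le\lVert M_{a\vert x}-N_{a\vert x}\rVert_\infty$ for each $a,x$; since $\Lambda^\dagger$ leaves the outcome/setting labelling of the assemblage intact, summing against $\tfrac12 p(x)$ gives $\Dinfty(\Lambda^\dagger(\mathcal{M})_\mathbf{p},\Lambda^\dagger(\mathcal{N})_\mathbf{p})\le\Dinfty(\mathcal{M}_\mathbf{p},\mathcal{N}_\mathbf{p})$. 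I expect this contractivity argument to be the only real content of the proof; the remaining conditions are bookkeeping with the spectral norm, essentially identical to the $\Ddiamond$ case.
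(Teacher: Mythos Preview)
Your proof is correct and follows essentially the same structure as the paper's: faithfulness, symmetry, and the triangle inequality are handled termwise from the spectral norm, monotonicity under simulations and joint-convexity are the same triangle-inequality bookkeeping as for $\Ddiamond$, and the only substantive step is operator-norm contractivity under unital \ac{CP} maps. The sole difference is in how that contractivity is justified: you use the operator-inequality sandwich $-\lVert X\rVert_\infty\mathds{1}\le X\le\lVert X\rVert_\infty\mathds{1}$ together with positivity and unitality of $\Lambda^\dagger$, whereas the paper uses the dual characterization $\lVert X\rVert_\infty=\max_\rho\lvert\mathrm{Tr}[X\rho]\rvert$ and passes to the adjoint \ac{CPT} map $\Lambda$ acting on the optimizing state---two standard and equivalent arguments for the same fact.
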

\begin{proof}
The proof can be split into several parts. Note first, that since $\Dinfty(\mathcal{M}_\mathbf{p}, \mathcal{N}_\mathbf{p})$ is a weighted sum of spectral norms, it follows that $\Dinfty(\mathcal{M}_\mathbf{p}, \mathcal{N}_\mathbf{p}) \geq 0$ with equality holding if and only if $\mathcal{M} = \mathcal{N}$. Note further that the symmetry and triangle inequality condition in Definition~\ref{Def_Distance} are fulfilled trivially. \\ 
\begin{widetext}
\indent For the monotonicity under quantum channel, we consider a more general data-processing type inequality for the $\infty-$ distance between two POVM elements. Namely, for $\Lambda^{\dagger}(M_{a\vert x})$ and $\Lambda^{\dagger}(N_{a \vert x})$, where $\Lambda^{\dagger}$ is a unital completely positive map, it follows
\begin{align}
\lVert \Lambda^{\dagger}(M_{a \vert x}) - \Lambda^{\dagger}(N_{a \vert x}) \rVert_{\infty} 
&= \max\limits_{\rho} \lvert \mathrm{Tr}[(\Lambda^{\dagger}(M_{a \vert x}) - \Lambda^{\dagger}(N_{a \vert x})) \rho] \rvert \label{Data-processing_operator_norm}  \\
&= \max\limits_{\rho} \lvert \mathrm{Tr}[(M_{a \vert x} - N_{a \vert x}) \Lambda(\rho)] \rvert \nonumber \\
&\leq \max\limits_{\rho'} \lvert \mathrm{Tr}[(M_{a \vert x} - N_{a \vert x}) \rho'] \rvert \nonumber \\
&= \Vert M_{a \vert x} - N_{a \vert x} \Vert_{\infty} , \nonumber
\end{align}
where we used the dual representation of the Schatten-$\infty$ norm and the fact that maximum is always achieved for a density matrix $\rho$ (more specifically the projector onto the eigenvalue of largest absolute value of $\Lambda^{\dagger}(M_{a \vert x}) - \Lambda^{\dagger}(N_{a \vert x})$). Furthermore, we used that the adjoint of the unital completely positive map $\Lambda^{\dagger}$ is a \ac{CPT} map $\Lambda$ which maps density matrices onto density matrices and therefore shrinks the state-space one optimizes over. Since $\Dinfty(\mathcal{M}_\mathbf{p}, \mathcal{N}_\mathbf{p})$ is a sum of norms $ \Vert M_{a \vert x} - N_{a \vert x} \Vert_{\infty}$, it follows that $\Dinfty(\mathcal{M}_\mathbf{p}, \mathcal{N}_\mathbf{p}) \geq \Dinfty(\Lambda^{\dagger}(\mathcal{M})_\mathbf{p}, \Lambda^{\dagger}(\mathcal{N})_\mathbf{p})$. \\
\indent The monotonicity under classical simulations $\xi(\mathcal{M}_\mathbf{p})_\mathbf{q}$ follows by direct computation,
\begin{align}
\Dinfty(\xi(\mathcal{M}_\mathbf{p})_\mathbf{q}, \xi(\mathcal{N}_\mathbf{p})_\mathbf{q})
&= \dfrac{1}{2} \sum_{b,y} q(y) \lVert \sum_{x,a} p(x \vert y) q(b \vert y,x,a)[M_{a \vert x} - N_{a \vert x}] \rVert_{\infty}  \\
&\leq \dfrac{1}{2} \sum_{b,y} q(y) \sum_{x,a} p(x \vert y) q(b \vert y,x,a) \lVert M_{a \vert x} - N_{a \vert x} \rVert_{\infty} \nonumber \\
&= \dfrac{1}{2} \sum_{y,x,a} q(y) p(x \vert y)  \lVert M_{a \vert x} - N_{a \vert x} \rVert_{\infty} \nonumber \\
&= \dfrac{1}{2} \sum_{x,a} p(x) \lVert M_{a \vert x} - N_{a \vert x} \rVert_{\infty} = \Dinfty(\mathcal{M}_\mathbf{p}, \mathcal{N}_\mathbf{p}), \nonumber 
\end{align}
\end{widetext}
where we used the following properties. \\
\indent In the first line, we used the definition of $\Dinfty(\xi(\mathcal{M}_\mathbf{p})_\mathbf{q}, \xi(\mathcal{N}_\mathbf{p})_\mathbf{q})$ by introducing the assemblages $\mathcal{M}'_\mathbf{q}= \xi(\mathcal{M}_\mathbf{p})_\mathbf{q}$ and $\mathcal{N}'_\mathbf{q}=\xi(\mathcal{N}_\mathbf{p})_\mathbf{q}$ where we inserted $ M'_{b \vert y} = \sum_x p(x \vert y) \sum_a q(b \vert y,x,a) M_{a \vert x}$ and $N'_{b \vert y} = \sum_x p(x \vert y) \sum_a q(b \vert y,x,a) N_{a \vert x}$ directly. In the second line, we used the triangle inequality. In the third line, we performed the sum over $b$. Finally, in the fourth line, we used that $\sum_y q(y) p(x \vert y) = p(x)$, which leads exactly to the definition of $\Dinfty(\mathcal{M}_\mathbf{p}, \mathcal{N}_\mathbf{p})$ from which the monotonicity under classical simulations $\xi$ follows. Therefore, $\Dinfty(\mathcal{M}_\mathbf{p}, \mathcal{N}_\mathbf{p})$ is a distance between measurement assemblages according to Definition~\ref{Def_Distance}. \\
\indent The proof of the joint-convexity of $\Dinfty(\mathcal{M}_\mathbf{p}, \mathcal{N}_\mathbf{p})$ follows exactly the same lines as the proof of the joint-convexity of $\Ddiamond(\mathcal{M}_\mathbf{p}, \mathcal{N}_\mathbf{p})$  in Theorem \ref{thrm1} and can be adapted from there.
\end{proof}
Even though they are not resource monotones generally, the functions $\mathrm{R}_{\mathrm{p}}(\mathcal{M}_\mathbf{p})$ in Eq.\,\eqref{Schatten_functions} can be used to bound the resource quantifier $\Rdiamond(\mathcal{M}_\mathbf{p})$ defined in Eq.\,\eqref{diamond_monotone}. More specifically,
we derive in the following the bounds on the diamond distance based quantifier $\Rdiamond(\mathcal{M}_\mathbf{p})$ given by
\begin{align}
\dfrac{1}{d}\mathrm{R}_{\infty}(\mathcal{M}_\mathbf{p}) \leq \dfrac{1}{d}\mathrm{R}_{1}(\mathcal{M}_\mathbf{p}) &\leq \Rdiamond(\mathcal{M}_\mathbf{p}) \leq \mathrm{R}_{\infty}(\mathcal{M}_\mathbf{p}) \leq \mathrm{R}_{1}(\mathcal{M}_\mathbf{p}),
\end{align}
where $d$ is the dimension of the Hilbert space $\mathcal{H}$ the POVMs from $\mathcal{M}$ act on. Note that due to the monotonicity of Schatten norms, it holds $\lVert X \rVert_{\mathrm{p}} \leq \lVert X \rVert_{\mathrm{p}'}$ for $\mathrm{p} \geq \mathrm{p}'$ from which the bounds $\dfrac{1}{d}\mathrm{R}_{\infty}(\mathcal{M}_\mathbf{p}) \leq \dfrac{1}{d}\mathrm{R}_{1}(\mathcal{M}_\mathbf{p})$ and $\mathrm{R}_{\infty}(\mathcal{M}_\mathbf{p}) \leq \mathrm{R}_{1}(\mathcal{M}_\mathbf{p})$ follow directly. \\
\begin{widetext}
\indent The bound $\Rdiamond(\mathcal{M}_\mathbf{p}) \leq \mathrm{R}_{\infty}(\mathcal{M}_\mathbf{p})$ follows from
\begin{align}
\Rdiamond(\mathcal{M}_\mathbf{p})
&= \min\limits_{\mathcal{F} \in \mathscr{F}} \dfrac{1}{2} \sum_x p(x) \max\limits_{\rho} \sum_a \lVert  \mathrm{Tr}_1[(M_{a \vert x} \otimes \mathds{1}) \rho] - \mathrm{Tr}_1[(F_{a \vert x} \otimes \mathds{1}) \rho] \rVert_1  \\
&\leq \min\limits_{\mathcal{F} \in \mathscr{F}} \dfrac{1}{2} \sum_x p(x) \max\limits_{\rho} \sum_a \lVert  (M_{a \vert x} \otimes \mathds{1}) \rho - (F_{a \vert x} \otimes \mathds{1}) \rho \rVert_1 \nonumber \\
&\leq \min\limits_{\mathcal{F} \in \mathscr{F}} \dfrac{1}{2} \sum_{a,x} p(x) \max\limits_{\rho}  \lVert  (M_{a \vert x} \otimes \mathds{1})  - (F_{a \vert x} \otimes \mathds{1})  \rVert_{\infty} \lVert \rho \rVert_1 \nonumber \\
&=  \min\limits_{\mathcal{F} \in \mathscr{F}} \dfrac{1}{2} \sum_{a,x} p(x)   \Vert  (M_{a \vert x} \otimes \mathds{1})  - (F_{a \vert x} \otimes \mathds{1})  \Vert_{\infty} = \mathrm{R}_{\infty}(\mathcal{M}_\mathbf{p}), \nonumber
\end{align} 
where we used the definition of $\Rdiamond(\mathcal{M}_\mathbf{p})$ in the first line and the monotonicity of the trace norm under partial trace in the second line. In the third line, we used the Hölder inequality and in the last line identified the definition of $\mathrm{R}_{\infty}(\mathcal{M}_\mathbf{p})$. 

The last remaining bound $\dfrac{1}{d}\mathrm{R}_{1}(\mathcal{M}_\mathbf{p}) \leq \Rdiamond(\mathcal{M}_\mathbf{p})$ can directly be obtained by using $\rho = \vert \Phi^+ \rangle \langle \Phi^+ \vert $ within the optimization of the diamond norm. Here, $\vert \Phi^+ \rangle = \dfrac{1}{\sqrt{d}}\sum_{i = 0}^{d-1} \vert ii \rangle $ is the maximally entangled state, where (as before) $d$ is the dimension the POVMs of the measurement assemblage $\mathcal{M}$ act on. It follows
\begin{align}
\Rdiamond(\mathcal{M}_\mathbf{p})
&= \min\limits_{\mathcal{F} \in \mathscr{F}} \dfrac{1}{2} \sum_x p(x) \max\limits_{\rho} \sum_a \lVert  \mathrm{Tr}_1[(M_{a \vert x} \otimes \mathds{1}) \rho] - \mathrm{Tr}_1[(F_{a \vert x} \otimes \mathds{1}) \rho] \rVert_1  \\
&\geq \min\limits_{\mathcal{F} \in \mathscr{F}} \dfrac{1}{2} \sum_x p(x) \sum_a \lVert  \mathrm{Tr}_1[(M_{a \vert x} \otimes \mathds{1}) \vert \Phi^+ \rangle \langle \Phi^+ \vert] - \mathrm{Tr}_1[(F_{a \vert x} \otimes \mathds{1}) \vert \Phi^+ \rangle \langle \Phi^+ \vert] \rVert_1 \nonumber \\
&= \min\limits_{\mathcal{F} \in \mathscr{F}} \dfrac{1}{2} \sum_x p(x) \sum_a \dfrac{1}{d} \lVert (M_{a \vert x}-F_{a \vert x})^T \rVert_1 \nonumber \\
&= \min\limits_{\mathcal{F} \in \mathscr{F}} \dfrac{1}{2} \sum_x p(x) \sum_a \dfrac{1}{d} \lVert M_{a \vert x}-F_{a \vert x} \rVert_1 = \dfrac{1}{d} \mathrm{R}_1(\mathcal{M}_\mathbf{p}), \nonumber
\end{align}
where we used in the first line the definition of $\Rdiamond(\mathcal{M}_\mathbf{p})$ and in the second line that $\rho =  \vert \Phi^+ \rangle \langle \Phi^+ \vert$ is a feasible point within the maximization over the quantum states within the diamond norm. In the third line, we used that $\mathrm{Tr}_1[(M_{a \vert x} \otimes \mathds{1}) \vert \Phi^+ \rangle \langle \Phi^+ \vert] = \dfrac{1}{d}M_{a \vert x}^T$, where the transposition is with respect to the computational basis. Finally, we can use that a transposition does not change the singular values. 
\end{widetext} \\
\indent The monotone $\mathrm{R}_{\infty}(\mathcal{M}_\mathbf{p})$ in particular is not only a valuable tool to bound the diamond distance $\Rdiamond(\mathcal{M}_\mathbf{p})$ but is also interesting in itself. More specifically, we show in the following that $\mathrm{R}_{\infty}(\mathcal{M}_\mathbf{p})$ obeys also a measurement hierarchy similar to that in Eq.\,\eqref{full_hierarchy}. Let $\IFinfty(\mathcal{M}_\mathbf{p}),\Cinfty(\mathcal{M}_\mathbf{p}),$ and $\Iinfty(\mathcal{M}_\mathbf{p})$ be the informativeness, coherence, and incompatibility of ${M}_\mathbf{p}$ as measured by the distance $\mathrm{R}_{\infty}(\mathcal{M}_\mathbf{p})$ in Eq.\,\eqref{Schatten_functions} with respect to the free sets $\mathscr{F}_{\mathrm{UI}}$, $\mathscr{F}_{\mathrm{IC}}$, and $\mathscr{F}_{\mathrm{JM}}$. It follows directly from $\mathscr{F}_{\mathrm{UI}} \subset \mathscr{F}_{\mathrm{IC}} \subset \mathscr{F}_{\mathrm{JM}}$ that the hierarchy
\begin{align}
\IFinfty(\mathcal{M}_{\mathbf{p}_A}) \geq \Cinfty(\mathcal{M}_{\mathbf{p}_A}) \geq \Iinfty(\mathcal{M}_{\mathbf{p}_A}),
\end{align}
holds. Moreover, it can be shown that $\Iinfty(\mathcal{M}_{\mathbf{p}_A}) \geq \mathrm{S}(\Vec{\sigma}_{\mathbf{p}_A} )$ (remember that we already showed that $\mathrm{S}(\Vec{\sigma}_{\mathbf{p}_A} ) \geq \mathrm{N}(\mathbf{q}_\mathbf{p})$). This follows from the direct computation for any quantum state $\rho$ of appropriate dimension and the closest \ac{JM} assemblage $\mathcal{F}^*$ to $\mathcal{M}$ (with respect to the monotone $\Iinfty(\mathcal{M}_\mathbf{p})$):
\begin{align}
\mathrm{S}(\Vec{\sigma}_{\mathbf{p}_A} ) &\leq \dfrac{1}{2} \sum_{a,x} p_A(x)  \lVert \mathrm{Tr}_1[((M_{a \vert x} - F^*_{a\vert x}) \otimes \mathds{1} )\rho] \rVert_{1} \\
&\leq \dfrac{1}{2} \sum_{a,x} p_A(x)  \lVert ((M_{a \vert x} - F^*_{a\vert x}) \otimes \mathds{1} )\rho \rVert_{1} \nonumber \\
&\leq \dfrac{1}{2} \sum_{a,x} p_A(x)  \lVert (M_{a \vert x} - F^*_{a\vert x}) \otimes \mathds{1} \rVert_{\infty} \lVert \rho \rVert_{1} \nonumber \\ &= \dfrac{1}{2} \sum_{a,x} p_A(x)  \Vert M_{a \vert x} - F^*_{a\vert x} \Vert_{\infty} = \Iinfty(\mathcal{M}_{\mathbf{p}_A}), \nonumber
\end{align}
where we first used that \ac{JM} measurements always lead to unsteerable assemblages. Second, we used that the trace norm is non-increasing under partial traces. Third, we used the Hölder inequality. It therefore follows, that the hierarchy 
\begin{align}
\IFinfty(\mathcal{M}_{\mathbf{p}_A}) \geq \Cinfty(\mathcal{M}_{\mathbf{p}_A}) \geq \Iinfty(\mathcal{M}_{\mathbf{p}_A}) \geq \mathrm{S}(\Vec{\sigma}_{\mathbf{p}_A} ) \geq \mathrm{N}(\mathbf{q}_\mathbf{p}),
\end{align}
holds. \\
\indent Another distance for measurement assemblages that can be considered is based on the $\ell_1$-distance between probability distributions. More specifically, the induced $\ell_1$-distance between two \acp{WMA} is given by
\begin{align}
\Dell(\mathcal{M}_\mathbf{p},\mathcal{N}_\mathbf{p}) =  \dfrac{1}{2} \sum_{x} p(x) \max_{\rho_A } \sum_a  \vert \mathrm{Tr}[(M_{a \vert x} - N_{a\vert x}) \rho_A] \vert, 
\end{align}
which is the $\ell_1$-distance of the probability distributions $\lbrace \mathrm{Tr}[M_{a \vert x} \rho_A] \rbrace$ and $\lbrace \mathrm{Tr}[N_{a \vert x} \rho_A] \rbrace$ maximized over all quantum states $\rho_A$. With the same methods as for the distances $\Dinfty(\mathcal{M}_\mathbf{p},\mathcal{N}_\mathbf{p})$ and $\Ddiamond(\mathcal{M}_\mathbf{p},\mathcal{N}_\mathbf{p})$ it can be shown that $\Dell(\mathcal{M}_\mathbf{p},\mathcal{N}_\mathbf{p})$ fulfills all the conditions in Definition\,\ref{Def_Distance}. Moreover, the joint-convexity of $\Dell(\mathcal{M}_\mathbf{p},\mathcal{N}_\mathbf{p})$ follows from
\begin{align}
&\Dell(\eta \mathcal{M}^{(1)}_\mathbf{p} + (1-\eta) \mathcal{M}^{(2)}_\mathbf{p},\eta \mathcal{N}^{(1)}_\mathbf{p} + (1-\eta) \mathcal{N}^{(2)}_\mathbf{p}) \\
&= \dfrac{1}{2} \sum_x p(x) \max\limits_{\rho_A} \sum_a \lvert \mathrm{Tr}[\eta (M_{a \vert x}^{(1)}-N_{a \vert x}^{(1)})+ (1-\eta)(M_{a \vert x}^{(2)}-N_{a \vert x}^{(2)})] \rvert \nonumber \\
&\leq \dfrac{1}{2} \sum_x p(x) \max\limits_{\rho_A} \sum_a \eta \lvert \mathrm{Tr}[(M_{a \vert x}^{(1)}-N_{a \vert x}^{(1)}) \rho_A] \rvert + (1-\eta) \lvert \mathrm{Tr}[(M_{a \vert x}^{(2)}-N_{a \vert x}^{(2)}) \rho_A] \rvert \nonumber \\
&\leq \dfrac{1}{2} \sum_x p(x) \max\limits_{\rho_A^{(1)},\rho_A^{(2)}} \sum_a \eta \lvert \mathrm{Tr}[(M_{a \vert x}^{(1)}-N_{a \vert x}^{(1)}) \rho_A^{(1)}] \rvert + (1-\eta) \lvert \mathrm{Tr}[(M_{a \vert x}^{(2)}-N_{a \vert x}^{(2)}) \rho_A^{(2)}] \rvert \nonumber \\
&= \eta \Dell(\mathcal{M}^{(1)}_\mathbf{p},\mathcal{N}^{(1)}_\mathbf{p}) + (1-\eta) \Dell(\mathcal{M}^{(2)}_\mathbf{p},\mathcal{N}^{(2)}_\mathbf{p}), \nonumber
\end{align}
where we used the homogeneity and the triangle inequality of the absolute value and the fact that the individual maximization over both terms within the sum over $a$ can only increase the value. 
Hence, $\Dell(\mathcal{M}_\mathbf{p},\mathcal{N}_\mathbf{p})$ is a jointly-convex distance function which induces the distance-based convex monotone
\begin{align}
\Rell(\mathcal{M}_\mathbf{p}) = \min\limits_{\mathcal{F} \in \mathscr{F}}  \Dell(\mathcal{M}_\mathbf{p},\mathcal{F}_\mathbf{p}).    
\end{align}
\indent Note that while it follows directly that $\Rell(\mathcal{M}_\mathbf{p})$ will naturally induce a hierarchy between the informativeness, coherence, and the incompatibility of a \ac{WMA} $\mathcal{M}_\mathbf{p}$, it is not clear whether there exist steering or nonlocality monotones that are in natural correspondence to it.
Note further that in the context of coherence of single \acp{POVM}, this kind of statistical measure has also been defined by Baek et al.~\cite{Baek2020}.

Even though it is not clear whether a complete hierarchy of measurement resources holds, the quantifier $\Rell(\mathcal{M}_\mathbf{p})$ is important, as it can be seen as limiting case of the quantifier $\Rdiamond(\mathcal{M}_\mathbf{p})$ when the maximization is performed only over product states. More formally, it holds
\begin{align}
\Rdiamond(\mathcal{M}_\mathbf{p}) &\geq  \min\limits_{\mathcal{F} \in F} \dfrac{1}{2} \sum_x p(x) \max\limits_{\rho = \rho_A \otimes \rho_B} \sum_a \lVert \mathrm{Tr}_1[((M_{a \vert x}-F_{a \vert x}) \otimes \mathds{1}) \rho] \rVert_1  \\
&= \min\limits_{\mathcal{F} \in F} \dfrac{1}{2} \sum_x p(x) \max\limits_{\rho = \rho_A \otimes \rho_B} \sum_a \lVert \mathrm{Tr}[(M_{a \vert x}-F_{a \vert x}) \rho_A] \rho_B \rVert_1 \nonumber \\
&= \min\limits_{\mathcal{F} \in F} \dfrac{1}{2} \sum_x p(x) \max\limits_{\rho = \rho_A \otimes \rho_B} \sum_a \lvert \mathrm{Tr}[(M_{a \vert x}-F_{a \vert x}) \rho_A] \rvert \lVert \rho_B \rVert_1 \nonumber \\
&= \min\limits_{\mathcal{F} \in F} \dfrac{1}{2} \sum_x p(x) \max\limits_{\rho_A} \sum_a \lvert \mathrm{Tr}[(M_{a \vert x}-F_{a \vert x}) \rho_A] \rvert \nonumber \\
&= \Rell(\mathcal{M}_\mathbf{p}), \nonumber
\end{align}
where we used in the first line that we maximize only over the set of product states. In the second line we used the definition of the partial trace and finally, we used that states $\rho_B$ are normalized in the $1-$norm and identified the last line with the definition of the induced $\ell_1$-distance quantifier $\Rell(\mathcal{M}_\mathbf{p})$. In Appendix \ref{DichotomicMeasurements}, we show that the quantifiers $\Rell(\mathcal{M}_\mathbf{p})$ and  $\mathrm{R}_{\infty}(\mathcal{M}_\mathbf{p})$ coincide with $\Rdiamond(\mathcal{M}_\mathbf{p})$ in the case of dichotomic measurement assemblages.

\section{Dichotomic measurements}
\label{DichotomicMeasurements}

Here, we show an additional property of $\Rdiamond(\mathcal{M}_\mathbf{p})$ that can be useful for the case where we consider measurement assemblages $\mathcal{M}$ with only two outcomes for each setting~$x$. We show that in this special case, the diamond distance quantifier $\Rdiamond(\mathcal{M}_\mathbf{p})$ is equivalent to $\mathrm{R}_{\infty}(\mathcal{M}_\mathbf{p})$ and $\Rell(\mathcal{M}_\mathbf{p})$. Consider the \acp{WMA} $\lbrace M_{1 \vert x}, \mathds{1}-M_{1 \vert x} \rbrace_x$ and $\lbrace F_{1 \vert x}, \mathds{1}-F_{1 \vert x} \rbrace_x$. Remember that we already showed previously that $\Rdiamond(\mathcal{M}_\mathbf{p}) \leq \mathrm{R}_{\infty}(\mathcal{M}_\mathbf{p})$, so we only need to show that in the case of dichotomic measurements it also holds $\Rdiamond(\mathcal{M}_\mathbf{p}) \geq \mathrm{R}_{\infty}(\mathcal{M}_\mathbf{p}) = \Rell(\mathcal{M}_\mathbf{p})$. This follows directly via
\begin{align}
\Rdiamond(\mathcal{M}_\mathbf{p}) \geq \Rell(\mathcal{M}_\mathbf{p}) &= 2 \dfrac{1}{2}\min\limits_{\mathcal{F} \in \mathscr{F}} \sum_{x} p(x) \max_{ \rho_A }  \lvert \mathrm{Tr}[(M_{a \vert x} - F_{a\vert x}) \rho_A] \rvert = \mathrm{R}_{\infty}(\mathcal{M}_\mathbf{p}),
\end{align}
where we used the bound $\Rdiamond(\mathcal{M}_\mathbf{p}) \geq \Rell(\mathcal{M}_\mathbf{p})$ and the fact that for dichotomic measurements both outcomes contribute equally towards $\Rell(\mathcal{M}_\mathbf{q})$. Finally, we used that this holds also true for $\mathrm{R}_{\infty}(\mathcal{M}_\mathbf{p})$. Alternatively, it is also enough to see that the same $\rho_A$ is optimal for both outcomes, which leads to the conclusion that $\mathrm{R}_{\infty}(\mathcal{M}_\mathbf{p}) = \Rell(\mathcal{M}_\mathbf{p})$. Note that the above result shows that entanglement does not offer an advantage in distinguishing two measure-and-prepare channels for dichotomic measurements by means of the diamond norm.

\newpage

\section*{Acronyms}
\begin{acronym}[CGLMP]\itemsep 1\baselineskip
\acro{AGF}{average gate fidelity}
\acro{AMA}{associated measurement assemblage}

\acro{BOG}{binned outcome generation}

\acro{CGLMP}{Collins-Gisin-Linden-Massar-Popescu}
\acro{CHSH}{Clauser-Horne-Shimony-Holt}
\acro{CP}{completely positive}
\acro{CPT}{completely positive and trace preserving}
\acro{CS}{compressed sensing} 

\acro{DFE}{direct fidelity estimation} 
\acro{DM}{dark matter}

\acro{GST}{gate set tomography}
\acro{GUE}{Gaussian unitary ensemble}

\acro{HOG}{heavy outcome generation}

\acro{JM}{jointly measurable}

\acro{LHS}{local hidden-state model}
\acro{LHV}{local hidden-variable model}
\acro{LOCC}{local operations and classical communication}

\acro{MBL}{many-body localization}
\acro{ML}{machine learning}
\acro{MLE}{maximum likelihood estimation}
\acro{MPO}{matrix product operator}
\acro{MPS}{matrix product state}
\acro{MUB}{mutually unbiased bases} 
\acro{MW}{micro wave}

\acro{NISQ}{noisy and intermediate scale quantum}

\acro{POVM}{positive operator valued measure}
\acro{PVM}{projector-valued measure}

\acro{QAOA}{quantum approximate optimization algorithm}
\acro{QML}{quantum machine learning}
\acro{QMT}{measurement tomography}
\acro{QPT}{quantum process tomography}
\acro{QRT}{quantum resource theory}
\acroplural{QRT}[QRTs]{Quantum resource theories}

\acro{RDM}{reduced density matrix}

\acro{SDP}{semidefinite program}
\acro{SFE}{shadow fidelity estimation}
\acro{SIC}{symmetric, informationally complete}
\acro{SPAM}{state preparation and measurement}

\acro{RB}{randomized benchmarking}
\acro{rf}{radio frequency}

\acro{TT}{tensor train}
\acro{TV}{total variation}

\acro{UI}{uninformative}

\acro{VQA}{variational quantum algorithm}

\acro{VQE}{variational quantum eigensolver}

\acro{WMA}{weighted measurement assemblage}

\acro{XEB}{cross-entropy benchmarking}

\end{acronym}

\section*{Code availability}

Matlab codes that accompany this work are available on \href{https://github.com/LucasTendick/DistanceBasedResourceQuantificationSetsMeasurements}{GitHub}. These contain in particular implementations of the \acp{SDP} presented here. 

\bibliographystyle{./myapsrev4-2}
\bibliography{bibliography_2}

\end{document}